\def\llncs{0}
\def\fullpage{1}
\def\anonymous{0}
\def\authnote{1}
\def\notxfont{0}
\def\submission{0}

\ifnum\submission=1
\def\llncs{1}
\fi

\ifnum\llncs=1
\documentclass[envcountsect,a4paper,runningheads,10pt]{llncs}
\else
	\documentclass[letterpaper,hmargin=1.05in,vmargin=1.05in,11pt]{article}
			\ifnum\fullpage=1
		\usepackage{fullpage}
		\fi
\fi

\usepackage[%
  colorlinks=true,
  citecolor=blue,
  pagebackref=true
]{hyperref}

\usepackage{amsmath, amsfonts, amssymb, mathtools,amscd}

\usepackage{amsthm}

\usepackage{lmodern}
\usepackage[T1]{fontenc}
\usepackage[utf8]{inputenc}

\usepackage{etex}

\usepackage{arydshln} 
\usepackage{url}
\usepackage{ifthen}
\usepackage{bm}
\usepackage{multirow}
\usepackage[dvips]{graphicx}
\usepackage[usenames]{color}
\usepackage{xcolor,colortbl} 
\usepackage{threeparttable}
\usepackage{comment}
\usepackage{paralist,verbatim}
\usepackage{cases}
\usepackage{booktabs}
\usepackage{braket}
\usepackage{cancel} 
\usepackage{ascmac} 
\usepackage{framed}
\usepackage{authblk}
\usepackage{pifont}
\usepackage{qcircuit}
\usepackage{tikz}
\usetikzlibrary{cd}
\definecolor{darkblue}{rgb}{0,0,0.6}
\definecolor{darkgreen}{rgb}{0,0.5,0}
\definecolor{maroon}{rgb}{0.5,0.1,0.1}
\definecolor{dpurple}{rgb}{0.2,0,0.65}

\usepackage[capitalise,noabbrev]{cleveref}
\usepackage[absolute]{textpos}
\usepackage[final]{microtype}
\usepackage[absolute]{textpos}
\usepackage{everypage}
\DeclareMathAlphabet{\mathpzc}{OT1}{pzc}{m}{it}

\usepackage{algorithmic}
\usepackage{algorithm}
\usepackage{here}

\usepackage[normalem]{ulem}

\newtheoremstyle{thicktheorem}%
{\topsep}
{\topsep}
{\itshape}{}%
{\bfseries}%
{.}
{ }%
{\thmname{#1}\thmnumber{ #2}%
		\thmnote{ (#3)}%
}

\newtheoremstyle{remark}
{\topsep}
{\topsep}
	{}
	{}
	{}
	{.}
	{ }
	{\textit{\thmname{#1}}\thmnumber{ #2}
			\thmnote{ (#3)}%
	}

\ifnum\llncs=0
	\theoremstyle{thicktheorem}
	\newtheorem{theorem}{Theorem}[section]
	\newtheorem{lemma}[theorem]{Lemma}
	\newtheorem{corollary}[theorem]{Corollary}
	
	\newtheorem{definition}[theorem]{Definition}

	\theoremstyle{remark}
	
	\newtheorem{remark}[theorem]{Remark}

\else
\fi

\Crefname{MyClaim}{Claim}{Claims}

	\crefname{theorem}{Theorem}{Theorems}
	\crefname{assumption}{Assumption}{Assumptions}
	\crefname{construction}{Construction}{Constructions}
	\crefname{corollary}{Corollary}{Corollaries}
	\crefname{conjecture}{Conjecture}{Conjectures}
	\crefname{definition}{Definition}{Definitions}
	\crefname{exmaple}{Example}{Examples}
	\crefname{experiment}{Experiment}{Experiments}
	\crefname{counterexample}{Counterexample}{Counterexamples}
	\crefname{lemma}{Lemma}{Lemmata}
	\crefname{observation}{Observation}{Observations}
	\crefname{proposition}{Proposition}{Propositions}
	\crefname{remark}{Remark}{Remarks}
	\crefname{claim}{Claim}{Claims}
	\crefname{fact}{Fact}{Facts}
	\crefname{note}{Note}{Notes}

\ifnum\llncs=1
 \crefname{appendix}{App.}{Appendices}
 \crefname{section}{Sec.}{Sections}
\else
\fi

\ifnum\llncs=1
\pagestyle{plain}
\renewcommand*{\backref}[1]{}
\else
	\renewcommand*{\backref}[1]{(Cited on page~#1.)}
	\ifnum\notxfont=1
	\else
		\usepackage{newtxtext}
	\fi
\fi
\ifnum\authnote=0  
\newcommand{\mor}[1]{}
\newcommand{\shogo}[1]{}
\newcommand{\takashi}[1]{}
\newcommand{\fuyuki}[1]{}

\else
\newcommand{\mor}[1]{$\ll$\textsf{\color{red} Tomoyuki: { #1}}$\gg$}
\newcommand{\takashi}[1]{$\ll$\textsf{\color{orange} Takashi: { #1}}$\gg$}
\newcommand{\shogo}[1]{$\ll$\textsf{\color{darkgreen} Shogo: { #1}}$\gg$}
\newcommand{\fuyuki}[1]{$\ll$\textsf{\color{darkblue} Fuyuki: { #1}}$\gg$}

\DeclareRobustCommand{\Erase}{\bgroup\markoverwith{\textcolor{red}{\rule[.5ex]{2pt}{0.4pt}}}\ULon}

\fi

\newcommand{\Tr}{\mathrm{Tr}}



\newcommand{\StateGen}{\mathsf{StateGen}}
\newcommand{\Tag}{\mathsf{Tag}}
\newcommand{\Mint}{\mathsf{Mint}}
















\newcommand{\cA}{\mathcal{A}}
\newcommand{\cB}{\mathcal{B}}
\newcommand{\cC}{\mathcal{C}}

\newcommand{\cE}{\mathcal{E}}
\newcommand{\cF}{\mathcal{F}}

\newcommand{\cX}{\mathcal{X}}
\newcommand{\cY}{\mathcal{Y}}



\def\makeuppercase#1{
\expandafter\newcommand\csname tl#1\endcsname{\widetilde{#1}}
}

\def\makelowercase#1{
\expandafter\newcommand\csname tl#1\endcsname{\widetilde{#1}}
}




\newcommand{\regC}{\mathbf{C}}

\newcommand{\regR}{\mathbf{R}}
\newcommand{\regZ}{\mathbf{Z}}

\newcommand{\regM}{\mathbf{M}}

\newcommand{\regB}{\mathbf{B}}
\newcommand{\regA}{\mathbf{A}}
\newcommand{\regX}{\mathbf{X}}
\newcommand{\regY}{\mathbf{Y}}


\newcommand{\secp}{\lambda}


\newcommand{\A}{\entity{A}}




\newcommand*{\sk}{\keys{sk}}
\newcommand*{\pk}{\keys{pk}}

\newcommand*{\sigk}{\keys{sigk}}

\newcommand{\ct}{\keys{ct}}

\newcommand*{\vk}{\keys{vk}}

\newcommand*{\msg}{\keys{msg}}

\newcommand*{\keys}[1]{\mathsf{#1}}

\newcommand*{\algo}[1]{\ensuremath{\mathsf{#1}}}

\newcommand*{\entity}[1]{\mathcal{#1}}



\newenvironment{boxfig}[2]{\begin{figure}[#1]\fbox{\begin{minipage}{0.97\linewidth}
                        \vspace{0.2em}
                        \makebox[0.025\linewidth]{}
                        \begin{minipage}{0.95\linewidth}
            {{
                        #2 }}
                        \end{minipage}
                        \vspace{0.2em}
                        \end{minipage}}}{\end{figure}}



\newcommand{\bit}{\{0,1\}}






\newcommand{\Gen}{\algo{Gen}}
\newcommand{\KeyGen}{\algo{KeyGen}}
\newcommand{\SKGen}{\algo{SKGen}}
\newcommand{\PKGen}{\algo{PKGen}}

\newcommand{\Enc}{\algo{Enc}}

\newcommand{\Dec}{\algo{Dec}}

\newcommand{\Sign}{\algo{Sign}}

\newcommand{\Ver}{\algo{Ver}}

\newcommand{\st}{\algo{st}}

\newcommand{\PRF}{\algo{PRF}}

\newcommand{\Eval}{\algo{Eval}}




\newcommand{\negl}{{\mathsf{negl}}}






\newcommand{\poly}{{\mathrm{poly}}}

\DeclareMathOperator*{\Exp}{\mathbb{E}}




\usetikzlibrary{decorations.markings}
\tikzset{
  cross/.style={
    postaction={decorate,decoration={markings,
    mark=at position 0.45 with {\draw[-,line width=1pt] (-10pt,-10pt) -- (10pt,10pt);\draw[-,line width=1pt] (-10pt,10pt) -- (10pt,-10pt);}}}
  }
}

\makeatletter
\DeclareRobustCommand
  \myvdots{\vbox{\baselineskip4\p@ \lineskiplimit\z@
    \hbox{.}\hbox{.}\hbox{.}}}
\makeatother

\title{
Quantum Unpredictability
}

\ifnum\anonymous=1
\ifnum\llncs=1
\author{\empty}\institute{\empty}
\else
\author{}
\fi
\else
%
%
\ifnum\llncs=1
\author{
Tomoyuki Morimae\inst{1} \and Takashi Yamakawa\inst{1,2} 
}
\institute{
	Yukawa Institute for Theoretical Physics, Kyoto University, Kyoto, Japan \and NTT Social Informatics Laboratories, Tokyo, Japan 
}
\else
%
%
\author[1]{Tomoyuki Morimae}
\author[1]{Shogo Yamada}
\author[2,3,1]{Takashi Yamakawa}
\affil[1]{{\small Yukawa Institute for Theoretical Physics, Kyoto University, Kyoto, Japan}\authorcr{\small tomoyuki.morimae@yukawa.kyoto-u.ac.jp}\authorcr{\small shogo.yamada@yukawa.kyoto-u.ac.jp}}
\affil[2]{{\small NTT Social Informatics Laboratories, Tokyo, Japan}\authorcr{\small 
takashi.yamakawa@ntt.com}}
\affil[3]{\small NTT Research Center for Theoretical Quantum Information, Atsugi, Japan}

\fi 
\fi

\date{}

\begin{document}

\maketitle

\begin{abstract}
Unpredictable functions (UPFs) play essential roles in classical cryptography, including message authentication codes (MACs) and digital signatures. 
In this paper, we introduce a quantum analog of
UPFs, which we call unpredictable state generators (UPSGs).
UPSGs are implied by 
pseudorandom function-like states generators (PRFSs), which are a quantum analog of pseudorandom functions (PRFs),
and therefore UPSGs could exist even if one-way functions do not exist, similar to other recently introduced primitives like pseudorandom state generators (PRSGs),
one-way state generators (OWSGs), and EFIs.
In classical cryptography, UPFs are equivalent to PRFs, but 
in the quantum case, the equivalence is not clear, and UPSGs could be weaker than PRFSs.
Despite this, we demonstrate that all known applications of PRFSs are also achievable with UPSGs. 
They include
IND-CPA-secure secret-key encryption and EUF-CMA-secure MACs with unclonable tags. 
Our findings suggest that, for many applications, 
quantum unpredictability,
rather than quantum pseudorandomness,
is sufficient.
\end{abstract}

\ifnum\submission=1
\else
\clearpage
\newpage
\setcounter{tocdepth}{2}
\tableofcontents
\fi
\newpage

\section{Introduction}
\label{sec:introduction}

\subsection{Background}
Pseudorandom functions (PRFs), first formalized
by Goldreich, Goldwasser and Micali in 1984~\cite{JACM:GolGolMic86}, are one of the most fundamental primitives in classical cryptography. 
A PRF is an efficiently-computable keyed function that is computationally indistinguishable from a random function 
for any polynomial-time adversary that can query the function.
PRFs have many important applications in cryptography, and in particular, they are essential building blocks of
EUF-CMA-secure message authentication codes (MACs)
and IND-CPA-secure secret-key encryption (SKE).

Naor and Reingold~\cite{C:NaoRei98} introduced a related primitive so-called unpredictable functions (UPFs).
Like PRFs, a UPF is an efficiently-computable keyed function, but the crucial difference is that the goal of the adversary is not to distinguish it
from the random function, but
to predict the output corresponding to an input that was not queried before.
More precisely, let $f\coloneqq\{f_k\}_k$ be an efficiently-computable keyed function. 
Then $f$ is a UPF if it satisfies the following property, which is called unpredictability:
\begin{align}
\Pr[y=f_k(x):k\gets\bit^\secp,(x,y)\gets\cA^{f_k(\cdot)}]\le\negl(\secp)    
\end{align}
for any polynomial-time adversary $\cA$,
where $x$ was not queried by $\cA$.
It is easy to see that PRFs imply UPFs.
The other direction is not straightforward, but
Naor and Reingold showed that UPFs imply PRFs~\cite{C:NaoRei98},
and therefore PRFs and UPFs are actually equivalent.

What happens if we consider quantum versions of PRFs and UPFs?
Recently, quantum analogs of elementary primitives, including one-way functions (OWFs), pseudorandom generators (PRGs), and 
PRFs, 
have been extensively studied~\cite{C:JiLiuSon18,C:MorYam22,C:AnaQiaYue22,ITCS:BraCanQia23,TCC:AGQY22,AC:Yan22,cryptoeprint:2022/1336,cryptoeprint:2023/543,cryptoeprint:2023/904,nonadaptivePRU}.
For example, pseudorandom states generators (PRSGs) introduced by Ji, Liu, and Song~\cite{C:JiLiuSon18} are a quantum analog of PRGs.
One-way states generators (OWSGs) introduced by Morimae and Yamakawa~\cite{C:MorYam22} are a quantum analog of OWFs.
EFIs introduced by Brakerski, Canetti, and Qian~\cite{ITCS:BraCanQia23} are a quantum analog of EFID~\cite{Gol90}.\footnote{An EFID is a pair of two efficiently samplable classical distributions that are
statistically far but computationally indistinguishable. An EFI is its quantum analog: a pair of two efficiently generatable quantum states that are statistically far but computationally indistinguishable.}
There are mainly two reasons why studying such new quantum elementary primitives
are important.
First, they could be weaker than (quantumly-secure) OWFs~\cite{Kre21,kretschmer2023quantum}, which are the most fundamental assumption in classical cryptography.
More precisely, even if $\mathbf{BQP}=\mathbf{QMA}$ or $\mathbf{P}=\mathbf{NP}$ and therefore OWFs do not exist,
these new primitives could exist (relative to oracles).
Second, despite that, they have many useful applications, such as private-key quantum money, SKE, non-interactive commitments, 
digital signatures, and multiparty computations, etc.
These facts suggest that these primitives will play the role of the most fundamental assumptions in quantum cryptography,
similar to OWFs in classical cryptography.

Quantum versions of PRFs were already studied. There are two quantum analogs of PRFs.
One is pseudorandom unitary operators (PRUs) that were introduced by Ji, Liu, and Song~\cite{C:JiLiuSon18}.\footnote{Weaker variants, so-called pseudorandom states scramblers~\cite{PRSSs} and pseudorandom isometries~\cite{pseudorandom_isometries} were recently introduced. They are shown to be constructed from
OWFs. }
It is a set $\{U_k\}_k$ of efficiently implementable unitary operators that are computationally indistinguishable from Haar random unitary operators.
The other quantum analog of PRFs is
pseudorandom function-like states (generators) (PRFSs) that were introduced by Ananth, Qian and Yuen~\cite{C:AnaQiaYue22}.
A PRFS is a QPT algorithm that, on input a secret key $k$ and a classical bit string $x$, outputs a
quantum state $\phi_k(x)$.
The security roughly means that no QPT adversary can tell whether it is querying to the PRFS oracle or to the oracle that returns Haar random states.\footnote{If the query $x$ was not queried before, the 
oracle samples a new Haar random state $\psi_x$ and outputs it. If the query $x$ was done before, the oracle outputs the same $\psi_x$ that was sampled before.}
EUF-CMA-secure MACs (with quantum tags) and IND-CPA-secure SKE (with quantum ciphertexts) can be constructed from PRFSs~\cite{C:AnaQiaYue22}.

On the other hand, no quantum analog of UPFs was explored before.
Is it equivalent to a quantum analog of PRFs, such as PRUs or PRFSs?
Does it imply 
EUF-CMA-secure MACs and IND-CPA-secure SKE like
PRFSs and PRUs?
Can we gain any meaningful insight for quantum cryptography by studying it?

\subsection{Our Results}
The goal of the present paper is to initiate the study of a quantum version of UPFs which we call unpredictable state generators (UPSGs).
We define UPSGs and construct several cryptographic applications from UPSGs.
UPSGs are implied by PRFSs, and therefore UPSGs could exist even if OWFs do not exist, similar to PRSGs, OWSGs, and EFIs.
As we will explain later, the equivalence between PRFSs and UPSGs are not clear,
and UPSGs could be weaker than PRFSs.
Despite this, we show that all known applications of PRFSs are also achievable with UPSGs.\footnote{Strictly speaking, MACs with unclonable tags 
that are realized with PRFSs satisfy the security against QPT adversaries that query the oracle {\it quantumly}, but those realized
with UPSGs satisfy that only for the classical oracle query.}
This finding provides us with an insightful observation: 
{\it For many applications,
quantum unpredictability, rather than
quantum pseudorandomness, 
is sufficient.}

Relations among our results and known results are summarized in \cref{fig:primitives}.

\if0
such as IND-CPA-secure SKE \mor{(with quantum ciphertexts?)}, EUF-CMA-secure MACs\mor{MACs?} with unclonable tags, private-key quantum money, etc.\mor{What is etc?}
Because UPSGs seem to be weaker than PRFSs, our results provide the following interesting insight for quantum cryptography:
{\it For many applications, not pseudorandomness (i.e., computational indistinguishability from
Haar random states) but quantum unpredictability is enough.}
More precisely, our results are summarized as follows.
\fi

\paragraph{Defining UPSGs.} 
Our first contribution is to define UPSGs.
A UPSG is a QPT algorithm $\Eval$ that, on input a secret key $k$ and a classical bit string $x$, outputs a quantum state $\phi_k(x)$.
Intuitively, the security (unpredictability) is as follows: no QPT adversary, which can query the oracle $\Eval(k,\cdot)$, can output $(x^*,\rho)$ such that
$x^*$ was not queried and $\rho$ is close to $\phi_k(x^*)$.\footnote{We could consider classical query or
quantum query. In the latter case, it is not clear what we mean by ``not queried''. 
One possible formalization, which we actually adopt, is to define that a bit string $x$
was not queried if the weight of $\ket{x}$ is zero for all quantum queries. For more precise statements, see \cref{sec:definition_UPSG}.}

In the classical case, PRFs and UPFs are equivalent~\cite{C:NaoRei98}.
What happens in the quantum case?
In fact, we can show that PRFSs imply UPSGs. However,
the other direction is not clear.
In the classical case, the construction of PRFs from UPFs is done
by using the Goldreich-Levin \cite{C:NaoRei98,STOC:GolLev89}:
if $f_k(\cdot)$ is a UPF, $g_{k,r}(x)\coloneqq f_k(x)\cdot r$ is a PRF with the key $(k,r)$, where $x\cdot y$ is the inner product between bit strings $x$ and $y$.
However, we cannot directly apply that idea to UPSGs: In particular, what is $\phi_k(x)\cdot r$?

\if0
(There is an example that is UPSG but not PRFS:
Define $\phi_k(x)\coloneqq|f_k(x)\rangle\langle f_k(x)|$, where $\{f_k\}_k$ is a PRF. 
It is easy to check that this is a UPSG, but not 
a PRFS.\footnote{ This is because the following QPT adversary breaks pseudorandomness: 
Choose $x\gets\bit^\ell$, query $x$ to the oracle twice, and receive two copies of $\rho_x$, where
$\rho_x$ is $\phi_k(x)$ or a Haar random state $\psi_x$. Measure 
$\rho^{\otimes 2}$ in the computational basis to get results $(y_0,y_1)$. If $\rho_x=\psi_x$,
the probability that $y_0=y_1$ is negligible. 
On the other hand, if $\rho_x=\phi_k(x)$, the probability that $y_0=y_1$ is $1$.
})
\fi

In summary, a quantum analog of UPFs, UPSGs, are implied by PRFSs, 
which especially means that UPSGs could also exist even if OWFs do not exist.
However, the equivalence is not clear, and UPSGs could be weaker than PRFSs.
Then, a natural question is the following: Do UPSGs have useful applications like PRFSs?

\paragraph{IND-CPA-secure SKE.}
Our second contribution is 
to construct IND-CPA-secure SKE (with quantum ciphertexts) from UPSGs.
In the classical case, unpredictability implies pseudorandomness~\cite{C:NaoRei98}, which implies encryption.
However, in the quantum case, as we have explained before, we do not know 
how to convert unpredictability to pseudorandomness,
and therefore it is not self-evident whether SKE can be constructed from UPSGs.
Despite this,
we show that it is actually possible:
\begin{theorem}
\label{thm:SKE}
If UPSGs exist, then IND-CPA-secure SKE exist.    
\end{theorem}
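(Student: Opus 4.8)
The plan is to mimic the classical ``encrypt with a pad derived from a hard-to-compute value'' template, but to replace the pad by the unpredictable state itself and to replace the XOR by a swap-test-based decoding. Concretely, I would first build a single-bit scheme and then encrypt longer messages bit by bit (or by a hybrid argument over the bits). To encrypt $b\in\bit$ under key $k$, sample a fresh random label $r$ and output the classical-quantum ciphertext $(r,\tau)$, where $\tau\coloneqq\phi_k(r)$ if $b=0$ and $\tau\coloneqq\phi_k(r')$ for an independent fresh random $r'$ if $b=1$; to make the swap test reliable I would actually include polynomially many copies of the chosen state. Decryption, which knows $k$, regenerates $\phi_k(r)$ with $\Eval$ and applies a $\swap$-test (with a majority vote over the copies) between $\tau$ and the regenerated state, outputting $0$ on acceptance and $1$ on rejection. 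The point of drawing $r$ from the encryptor's internal randomness is that the $\INDCPA$ encryption oracle never lets the adversary evaluate $\phi_k$ at a label of its choosing.

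For correctness I would argue that unpredictability forces the outputs to be ``spread out and mutually far.'' If $\phi_k(x)$ were close to a fixed, key-independent state (such as the maximally mixed state) for a noticeable fraction of inputs, then an adversary could predict $\phi_k(x^*)$ at a fresh $x^*$ simply by outputting that fixed state; hence the purity of $\phi_k(r)$ is bounded away from the trivial value and the $b=0$ swap test accepts with probability noticeably above $1/2$. Similarly, if $\phi_k(r)$ and $\phi_k(r')$ were close for random distinct $r,r'$, an adversary could query one random label and submit the answer at a fresh label, again violating unpredictability; this makes the $b=1$ swap test reject with probability noticeably above $1/2$. Amplifying these two noticeable gaps by the repetition and majority vote drives the decryption error below $\negl(\secp)$.

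The main obstacle is confidentiality, and this is where the quantum setting genuinely departs from the classical one. Reducing $\INDCPA$ security to unpredictability would require a decisional ``fresh-point hiding'' statement---that $(r,\phi_k(r))$ is indistinguishable from $(r,\phi_k(r'))$ to an adversary that never queried $\phi_k$ at $r$---and this does \emph{not} follow from search-type unpredictability in any black-box way: a would-be forger built from an $\INDCPA$ distinguisher can never itself obtain a copy of $\phi_k(x^*)$ at the fresh point $x^*$, so a distinguishing bit cannot be converted into a prediction. The pseudorandom route of Ananth--Qian--Yuen is also unavailable, since it relies on a single copy of a pseudorandom state being statistically close to the maximally mixed state, a property that mere unpredictability does not provide. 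Worse, the naive scheme above is outright \emph{insecure} for some UPSGs: taking $\phi_k(x)\coloneqq\proj{x}\otimes\proj{g_k(x)}$ for a PRF $g_k$ yields a valid UPSG whose output reveals its own label, so measuring the first register of $\tau$ and comparing it with $r$ breaks confidentiality. Consequently the heart of the proof must be a construction that prevents such label leakage from helping the adversary---for instance by scrambling the state with secret key material, or by encoding $b$ so that recovering it provably requires reproducing $\phi_k$ at a fresh point---and then a reduction showing that any $\INDCPA$ distinguisher yields a genuine predictor, thereby contradicting unpredictability. Establishing this decision-to-search bridge without pseudorandomness is the technical core I expect to spend the most effort on.
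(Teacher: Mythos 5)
Your proposal does not prove the theorem: you propose a candidate scheme, then correctly show (via the counterexample $\phi_k(x)\coloneqq\proj{x}\otimes\proj{g_k(x)}$ for a PRF $g_k$) that this very scheme is insecure for some UPSGs, and you end by stating that the real proof requires a ``decision-to-search bridge'' that you do not supply. So the write-up stops exactly where the work begins. The missing idea is that such a bridge exists, and it is the key tool of the paper: the duality between distinguishing and swapping (\cref{lem:HMY22}, from \cite{AAS20,EC:HhaMorYam23}). The paper's construction encodes the plaintext bit in a relative \emph{phase} rather than in a choice of which state to send: $\Enc(k,b)$ samples $x,y\gets\bit^\ell$ and outputs $(x,y,\ket{\ct^b_{x,y}})$ with $\ket{\ct^b_{x,y}}\propto\ket{0\|x}\ket{\phi_k(0\|x)}+(-1)^b\ket{1\|y}\ket{\phi_k(1\|y)}$. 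The two ciphertexts are orthogonal states of the form $\ket{\psi}\pm\ket{\phi}$, and the duality lemma converts any distinguisher with advantage $\Delta$ (even one holding a quantum auxiliary state $\ket{\tau}$) into a polynomial-time unitary $V$ satisfying $\frac{1}{2}\left|\bra{\psi}\bra{\tau}V\ket{\phi}\ket{\tau}+\bra{\phi}\bra{\tau}V\ket{\psi}\ket{\tau}\right|=\Delta$ with $\ket{\psi}=\ket{0\|x}\ket{\phi_k(0\|x)}$ and $\ket{\phi}=\ket{1\|y}\ket{\phi_k(1\|y)}$. Applying $V$ to the branch obtained by a single legitimate query (say at $0\|x$) yields a state with noticeable overlap with $\ket{\phi_k(1\|y)}$, i.e., a valid prediction at the never-queried point $1\|y$ --- contradicting unpredictability. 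This phase encoding also sidesteps your label-leakage counterexample, since both labels $x,y$ are sent in the clear and the secret lives only in the sign between the branches; measuring labels reveals nothing about $b$.

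A second gap your outline would have hit even with the right construction concerns the encryption oracle: \cref{lem:HMY22} applies to distinguishers with \emph{no} oracle access, whereas an IND-CPA adversary may query $\Enc(\sk,\cdot)$. The paper removes the oracle by noting that, for single-bit messages and classical queries, polynomially many oracle answers $(x^i_c,y^i_c,\ket{\ct^c_{x^i_c,y^i_c}})$ for $c\in\bit$ can be sampled in advance and handed to the adversary as the auxiliary input $\ket{\tau}$, which the lemma tolerates; the reduction can itself prepare these answers by \emph{coherent} queries of the form $\ket{0\|x^i_c}+(-1)^c\ket{1\|y^i_c}$ to $\Eval(k,\cdot)$, which (by the paper's definition of ``not queried'') leave the challenge points $0\|x$ and $1\|y$ unqueried with overwhelming probability. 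Together with a standard averaging argument fixing a good tuple $(k,x,y,\mathbf{w})$ on which the advantage is noticeable and the labels are fresh, this yields the contradiction. None of this machinery --- nor any substitute for it --- appears in your proposal, so the theorem remains unproven there; your correctness discussion is moot as well, since the scheme it analyzes is abandoned.
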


\if0
\takashi{I feel the following remark about unclonable encryption is redundant. If we don't achieve it, why do we need to talk about it? 
}
Note that although the ciphertexts of the constructed SKE scheme are quantum states,
the SKE scheme is not unclonable encryption~\cite{TQC:BL20}. Just ciphertexts are quantum.
Even if the quantum ciphertexts are unclonable, it does not necessarily mean 
the unclonable encryption: A single ciphertext could be changed into two DIFFERENT states both of them can be decrypted with some algorithms that are DIFFERENT \takashi{It is not very common to use capitalization for emphasis in papers on cryptography. Instead, we often change the font like \emph{different} or \textbf{different}} from the valid decryption algorithm.
All known constructions of unclonable encryption require stronger assumptions such as iO \cite{coladangelo2021hidden}, and therefore it seems to be hard to construct
unclonable encryption with only UPSGs. 
\fi

IND-CPA-secure SKE can be constructed from PRFSs~\cite{C:AnaQiaYue22}.
\cref{thm:SKE} shows that such SKE can be constructed from a possibly weaker primitive, UPSGs.

\paragraph{MACs with unclonable tags.}
Our third contribution is
to define and
construct EUF-CMA-secure MACs with unclonable tags from UPSGs.\footnote{We will see that the unclonability of tags automatically implies EUF-CMA security, and therefore we have only to focus on the unclonability of tags.}
The unclonability of tags roughly means that no QPT adversary can, given $t$-copies of a quantum tag, output a large (possibly entangled) quantum state that
contains at least $t+1$ valid tag states.
MACs with unclonable tags are useful in practical applications. For example, consider the
following attack (which is known as the {\it replay attack} in the classical cryptography):
Alice sends the message ``transfer \$100 to Bob'' with a MAC tag to a bank. 
Malicious Bob can steal the pair of the message and the tag, and sends it ten times to the bank so that he can get \$1000. 
In the classical cryptography, the standard EUF-CMA security of MACs cannot avoid such an attack, and
some higher-level treatments are necessary. For example, common techniques are using counters or time-stamps,
but they require the time synchronization among users.

If tags are unclonable, we can avoid such a replay attack.
Actually, it is easy to see that UPSGs imply EUF-CMA-secure MACs with quantum tags. (We have only to take $\phi_k(x)$ as the tag of the message $x$.)
However, the mere fact that tags are quantum does not automatically imply the unclonability of tags.
Moreover, it is not self-evident whether the quantum unpredictability implies unclonability.
(Quantum pseudorandomness implies unclonability~\cite{C:JiLiuSon18}, but it is not clear whether a possibly weaker notion of quantum unpredictability
also implies unclonability.)
Despite that, we show that MACs with unclonable tags can be constructed from UPSGs.
\begin{theorem}
\label{thm:MAC}
If UPSGs exist, then EUF-CMA-secure MACs with unclonable tags exist.    
\end{theorem}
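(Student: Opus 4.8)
The plan is to build the tag for a message $m$ as a coherent superposition of $\Eval$ outputs over a serial register, rather than as a single state $\phi_k(m)$. Concretely, $\Mint(k,m)$ runs $\Eval$ coherently with the key $k$ hard-wired to produce
\[
|\tau_m\rangle \;\propto\; \sum_{r\in\{0,1\}^{\secp}} |r\rangle_{\regR}\otimes|\Phi_k(m\,\|\,r)\rangle_{\regS},
\]
where $|\Phi_k(\cdot)\rangle$ is the canonical purification of the UPSG output, and $\Ver(k,m,\cdot)$ regenerates $|\tau_m\rangle$ from $k$ and projects the candidate onto it (via a controlled preparation followed by a SWAP test). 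The reason for the superposition is that a plain tag $\phi_k(m)$ need not be unclonable at all: the example $\phi_k(x)=|f_k(x)\rangle\!\langle f_k(x)|$ from \cref{sec:definition_UPSG} is a computational-basis state, which an adversary can measure and freely re-prepare. Entangling the $\Eval$ outputs with a random serial is meant to force a would-be cloner to reproduce $\Eval$ on serials it never saw, which is exactly what unpredictability forbids.

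Since unclonability implies EUF-CMA security, I would first dispatch unforgeability, which follows almost directly: a forgery on a message $m^*$ that was never queried is a state close to $|\tau_{m^*}\rangle$, so measuring its serial register yields some $r$ together with a residual register close to $\phi_k(m^*\|r)$; because $m^*$ was never queried, $m^*\|r$ was never queried either, and this breaks unpredictability. The heart of the proof is unclonability. Here I would set up a reduction $\cB$ to unpredictability that runs the supposed cloner $\cA$ on $t$ copies of $|\tau_m\rangle$, obtains the $t+1$ output registers, measures the serial register of each in the computational basis to get serials $s_1,\dots,s_{t+1}$ and residual states $\sigma_1,\dots,\sigma_{t+1}$, and uses the fact that if the $j$-th output passed verification then $\sigma_j$ is close to $\phi_k(m\|s_j)$. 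The goal is to exhibit one index $j$ for which $m\|s_j$ can be treated as a fresh, unqueried input, so that $(m\|s_j,\sigma_j)$ breaks unpredictability.

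The main obstacle is precisely turning the ``$t{+}1$ from $t$'' overproduction into a genuinely fresh prediction. A naive reduction prepares each input copy $|\tau_m\rangle$ by querying $\Eval(k,\cdot)$, but preparing the full superposition touches every serial $m\|r$, so under the convention that an input counts as queried whenever it carries nonzero amplitude in some query, nothing is left unqueried; worse, a cloner that merely copies the state it is handed produces outputs supported on exactly the serials the reduction queried, so measuring them never yields a fresh $r$. Holding out a single serial in advance does not help either, since an output serial hits it only with probability $2^{-\secp}$. The resolution I would pursue is a counting, or one-shot, argument showing that $t$ copies of $|\tau_m\rangle$ cannot ``contain'' $t+1$ verifiable UPSG states on distinct serials unless $\cA$ itself manufactures a UPSG state on a serial that, information-theoretically, it never received; this would be formalized by bounding, via a compression or gentle-measurement analysis, the number of distinct serials recoverable from $t$ copies, and then arguing that the excess serial supplies the fresh prediction. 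Making this bound tight against a coherent, entangling adversary while keeping the reduction efficient is the step I expect to be the crux. I would first nail the clean special case $t=1$, that one copy cannot yield two verifiable tags on distinct serials, and then lift it to general $t$ by a hybrid over the copies.
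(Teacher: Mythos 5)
Your construction differs from the paper's, and its central claim---unclonability---is exactly where the proposal has a genuine gap that your proposed fix cannot close. You correctly identify the obstruction: any reduction to unpredictability must itself prepare the tags $\ket{\tau_m}\propto\sum_r\ket{r}\ket{\Phi_k(m\|r)}$ by querying $\Eval(k,\cdot)$, and this places nonzero amplitude on \emph{every} serial $m\|r$, so under the paper's definition of ``not queried'' (zero weight in all queries, \cref{sec:definition_UPSG}) no output serial can ever be claimed as fresh; and puncturing a single serial fails because the cloner's outputs hit it only with probability $2^{-\secp}$. Your proposed escape is an information-theoretic counting/compression argument showing that $t$ copies cannot ``contain'' $t+1$ verifiable tags on distinct serials. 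This cannot work: UPSGs carry no statistical security whatsoever (an unbounded adversary can tomograph the copies, recover a functionally equivalent key, and mint arbitrarily many perfect tags on arbitrary serials), so $t$ copies of $\ket{\tau_m}$ do, information-theoretically, determine valid tags on all serials. Any correct unclonability proof must therefore be a \emph{computational} reduction to unpredictability, which is precisely the reduction you showed to be blocked. The gap is not a technical refinement left for later; it is the heart of the theorem, and the route you sketch leads back into the same wall. (A secondary, fixable flaw: verifying by a SWAP test accepts \emph{any} state with probability at least $1/2$, which already destroys EUF-CMA security; verification must be the genuine projection onto $\ket{\tau_m}$, implementable by uncomputing the tag-preparation unitary as in \cref{rem:UPSG}.)

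It is instructive to contrast this with how the paper evades the difficulty: it never tries to extract unclonability from the UPSG outputs at all. Instead, unclonability is imported from a \emph{statistically} unclonable object---random BB84 states, via the security of Wiesner money (\cref{lem:Wiesner_money})---while the UPSG is used only to build two computational primitives: an EUF-CMA MAC (\cref{coro:MAC_from_UPSG}), which binds the BB84 basis description $(x,\theta)$ to the message, and IND-CPA-secure SKE (\cref{thm:SKE_from_UPSG}, via the swapping/distinction duality), which hides $(x,\theta)$ inside the tag $\ket{x}\!\bra{x}_\theta\otimes\Enc(\sk,\ket{x\|\theta}\!\bra{x\|\theta}\otimes\tau_{m\|x\|\theta})$. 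The hybrid argument in \cref{thm:unclonable_MAC_from_UPSG_and_SKE} then peels off the MAC (to force accepted serials into the challenger's set) and the SKE (to erase the classical description of the BB84 states), reducing cloning of the tag to cloning of a bare BB84 state. Your EUF-CMA argument, by contrast, is sound---measuring the serial register of a forgery on an unqueried $m^*$ yields an unqueried input $m^*\|r$---but that part is the easy half; the modular BB84-based composition is what lets the paper prove the hard half, and your proposal has no working substitute for it.
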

EUF-CMA-secure MACs with unclonable tags can be constructed from PRFSs~\cite{C:AnaQiaYue22}.\footnote{\cite{C:AnaQiaYue22} only showed that PRFSs imply
EUF-CMA-secure MACs with quantum tags, but we can easily show that tags are actually unclonable because their tags are pseudorandom.}
\cref{thm:MAC} shows that EUF-CMA-secure MACs with unclonable tags can be constructed from a possibly weaker primitive, UPSGs.\footnote{Strictly speaking,
there is a difference: MACs with unclonable tags that are realized with PRFSs satisfy the security against QPT adversaries that query the oracle {\it quantumly},
but those realized with UPSGs satisfy only the security against the classical query.}

\paragraph{Private-key quantum money.}
The definition of MACs with unclonable tags straightforwardly implies that of private-key quantum money schemes in \cite{C:JiLiuSon18}.
We therefore have the following as a corollary of \cref{thm:MAC}. ( For the definition of private-key quantum money schemes and a proof of \cref{coro:QM_from_UPSG}, see \cref{sec:QM_from_UPSG}.)
\begin{corollary}\label{coro:QM_from_UPSG}
If UPSGs exist, then private-key quantum money schemes exist.
\end{corollary}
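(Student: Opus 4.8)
The plan is to observe that the EUF-CMA-secure MAC with unclonable tags guaranteed by \cref{thm:MAC} \emph{is}, after a purely syntactic relabeling, a private-key quantum money scheme in the sense of \cite{C:JiLiuSon18}: messages play the role of serial numbers and tags play the role of banknote states. Recalling the private-key quantum money syntax (a key-generation algorithm, a minting algorithm that outputs a classical serial number together with a quantum banknote, and a verification algorithm taking the key, a serial number, and a candidate state), I would let the money key be the MAC key $k$, let the minting algorithm sample a fresh uniformly random serial number $s \in \zo{\secp}$ and output the banknote $(s, \sigma)$ with $\sigma \gets \Mac(k, s)$, and let banknote verification be $\Ver(k, s, \cdot)$. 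Correctness is then immediate from correctness of the MAC, since an honestly minted $(s, \sigma)$ passes $\Ver$ with overwhelming probability.

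For unforgeability I would reduce to unclonability of the tags. A money forger is handed $t$ honestly minted banknotes $(s_1, \sigma_1), \dots, (s_t, \sigma_t)$ and must output a (possibly entangled) register claimed to contain $t+1$ banknotes that pass verification under their claimed serial numbers. Because the serial numbers are drawn independently from $\zo{\secp}$, the $s_i$ are pairwise distinct except with probability $O(t^2 / 2^{\secp}) = \negl(\secp)$, so each serial number is minted exactly once. A successful forger therefore falls into one of two cases: either it outputs a valid banknote for a serial number never minted (an EUF-CMA forgery, which the MAC precludes), or, by pigeonhole over serial numbers, it produces strictly more than one valid tag for some single serial number $s^*$ that was minted once (a clone of a single tag), which the unclonability of tags precludes.

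The main obstacle is bridging this pigeonhole counting to the single-message formulation of unclonability, which speaks of turning $t'$ copies of one tag into $t'+1$ valid tags. I would use a reduction that guesses in advance the index $i^* \in [t]$ of the banknote the forger will clone (losing only a polynomial factor), embeds the external unclonability challenge tag for message $m^*$ at serial number $s^* \coloneqq m^*$, and simulates every other requested banknote honestly by querying the tagging oracle on fresh random serial numbers; the reduction never needs $k$ directly. Once the forger halts, the verification projectors of $\Ver(k, s^*, \cdot)$ are applied as a counting measurement certifying that two valid copies for $s^*$ have been produced from the single embedded copy. Making this counting a well-defined, efficiently checkable event and ruling out serial-number collisions (which is exactly why the serial-number domain must be superpolynomial) is the delicate part; the remainder is routine bookkeeping, and \cref{coro:QM_from_UPSG} follows by combining this reduction with \cref{thm:MAC}.
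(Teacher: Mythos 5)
Your proof is correct, but it takes a genuinely different route from the paper's. The paper's own proof (via \cref{coro:QM_from_unclonable_MAC}) is essentially one line: it mints a banknote as the tag of the \emph{fixed} message $0\ldots0$, i.e.\ $\Mint(k)\coloneqq\Tag(k,0\ldots0)$ and $\Ver(k,\rho)\coloneqq\mathsf{MAC}.\Ver(k,0\ldots0,\rho)$, whereupon the money security game is \emph{literally} an instance of the unclonability game (with $m^*=0\ldots0$ and no oracle queries), so no reduction work is needed at all --- the point being that unclonability is quantified over every polynomial number $t$ of challenge copies of a single message's tag, which is exactly what a counterfeiter of this scheme receives. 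Your construction instead reintroduces per-note serial numbers, and your analysis (serials distinct with overwhelming probability; unminted-serial forgeries killed by EUF-CMA; pigeonhole plus a $1/t$ index guess reducing a repeated serial to turning one copy into two) is sound: it invokes unclonability only with a single challenge copy ($t=1$ in \cref{unclonability}) together with EUF-CMA, and the embedding/guessing reduction you sketch is precisely the mini-scheme-to-full-scheme technique of Aaronson--Christiano that the paper itself deploys \emph{inside} the proof of \cref{thm:unclonable_MAC_from_UPSG_and_SKE}. So your route buys a scheme whose notes carry distinct serial numbers and rests on a formally weaker unclonability hypothesis, at the price of an extra layer of reductions that the paper's choice of construction renders unnecessary. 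One cosmetic point: the paper's money syntax (\cref{def:private-key_money}) gives $\Ver$ only the key and the candidate state, with no separate serial-number input, so you must encode $s$ as a computational-basis register of the banknote and have $\Ver$ measure it before running $\mathsf{MAC}.\Ver$; since your case analysis already handles adversarially chosen serials on the forged registers, this adjustment is routine and does not affect your argument.
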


\paragraph{OWSGs and EFIs.}
IND-CPA-secure SKE implies one-time-secure SKE,
and one-time-secure SKE implies OWSGs and EFIs~\cite{cryptoeprint:2022/1336}.
We therefore have the following as a corollary of \cref{thm:SKE}.
\begin{corollary}
If UPSGs exist, then OWSGs and EFIs exist.
\end{corollary}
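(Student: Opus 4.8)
The plan is to obtain the corollary purely by composing implications, so that no new construction or security reduction is needed. The target asserts two conclusions at once (OWSGs and EFIs), and the natural route for both passes through secret-key encryption, which is exactly what \cref{thm:SKE} hands us.

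First I would invoke \cref{thm:SKE} to get IND-CPA-secure SKE from UPSGs. The next step is the observation that IND-CPA security subsumes one-time security: an adversary in the one-time security game is nothing but an IND-CPA adversary that issues no encryption-oracle queries, so any IND-CPA-secure scheme is \emph{a fortiori} one-time-secure. Composing with \cref{thm:SKE}, UPSGs therefore imply one-time-secure SKE. Finally I would appeal to the implication of~\cite{cryptoeprint:2022/1336} that one-time-secure SKE yields both OWSGs and EFIs, and chain the three steps together to conclude.

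Since the whole argument is a transitive chaining of known facts, there is no genuine obstacle; the only point requiring care is definitional bookkeeping. Concretely, the SKE produced by \cref{thm:SKE} has \emph{quantum} ciphertexts, so I would verify that the hypothesis in~\cite{cryptoeprint:2022/1336} is stated for (or visibly extends to) one-time-secure SKE with quantum ciphertexts. I do not expect any difficulty here, because the construction of OWSGs and EFIs from one-time-secure SKE is insensitive to whether the ciphertext register is classical or quantum—the message/key distributions driving the OWSG and EFI security are unaffected by the carrier of the ciphertext—so the match is immediate once the notions are lined up.
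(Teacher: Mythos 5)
Your proposal is correct and follows exactly the paper's own route: invoke \cref{thm:SKE} to get IND-CPA-secure SKE, note that IND-CPA security implies one-time security, and then apply the result of~\cite{cryptoeprint:2022/1336} that one-time-secure SKE yields OWSGs and EFIs. Your definitional caveat about quantum ciphertexts is also consonant with the paper, which remarks that the resulting OWSGs are mixed OWSGs precisely because the ciphertexts are mixed states.
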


However, thus obtained OWSGs are mixed OWSGs (i.e., the ones with mixed states outputs), because ciphertexts of the SKE from UPSGs
are mixed states. We can actually directly show that UPSGs imply pure OWSGs:
\begin{theorem}
If UPSGs exist, then pure OWSGs exist.  
\end{theorem}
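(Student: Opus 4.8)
The plan is to use the UPSG evaluation states directly as one-way states, but to bundle together polynomially many random input points so that any successful inverter is forced to recover a key that is \emph{globally} consistent with the real key, not merely consistent at one point. Concretely, I would define the pure OWSG as follows. Key generation samples a UPSG key $k \gets \KeyGen(1^\secp)$ together with $m = \poly(\secp)$ uniformly random inputs $x_1,\dots,x_m \gets \zo{\ell}$ and outputs the key $(k,x_1,\dots,x_m)$. The state-generation algorithm outputs the pure state $\ket{x_1,\dots,x_m} \otimes \bigotimes_{i=1}^m \ket{\phi_k(x_i)}$, where $\ket{\phi_k(x_i)}$ is the (pure) output of $\Eval(k,x_i)$; purity of the honest evaluation output is exactly what makes the resulting OWSG pure, in contrast to the SKE-based route whose ciphertexts are mixed. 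Verification, on a candidate key $(k',x_1',\dots,x_m')$, checks the classical register against $x_1,\dots,x_m$ and projects each quantum register onto $\ket{\phi_{k'}(x_i)}$, accepting iff all projections succeed. Correctness is immediate from UPSG correctness, and since the revealed register fixes $x_1,\dots,x_m$, an inverter effectively only has to recover $k$.

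The heart of the argument is one-wayness, which I would prove by contradiction. Suppose a QPT adversary $\cA$, given $t$ copies of the state, outputs a key on which $\Ver$ accepts with non-negligible probability. I would build a UPSG unpredictability adversary $\cB$ with oracle access to $\Eval(k,\cdot)$ as follows: $\cB$ samples a fresh challenge $x^* \gets \zo{\ell}$ that it will \emph{never} query, independently samples the points $x_1,\dots,x_m$, issues $t$ classical queries to each $x_i$ to assemble $t$ copies of the OWSG state, runs $\cA$ to obtain a candidate key $k'$, and finally outputs $(x^*, \ket{\phi_{k'}(x^*)})$, which it computes itself by running $\Eval(k',x^*)$. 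Since $x^*$ is drawn independently of, and (except with negligible probability) distinct from, all queried points, and since every query made by $\cB$ is classical, $\ket{x^*}$ has zero weight across all queries and $x^*$ is a legal unpredictability challenge.

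It remains to argue that $\ket{\phi_{k'}(x^*)}$ is close to $\phi_k(x^*)$ with noticeable probability, and this union-bound generalization step is the one I expect to be the main obstacle. Call a key $k'$ \emph{$\delta$-bad} if $\ket{\phi_{k'}(\cdot)}$ agrees with $\phi_k(\cdot)$ (above the relevant fidelity threshold) on fewer than a $(1-\delta)$-fraction of inputs. For a fixed $\delta$-bad $k'$, the probability over the independent random points $x_1,\dots,x_m$ that it agrees at every $x_i$ is at most $(1-\delta)^m$; since there are at most $2^{\abs{k'}} = 2^{\poly(\secp)}$ candidate keys, a union bound shows that for $m = \poly(\secp)/\delta$ large enough, with overwhelming probability \emph{no} $\delta$-bad key agrees at all the revealed points. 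Because $\Ver$ accepting with noticeable probability $p$ forces the recovered $k'$ to agree at \emph{every} $x_i$ (the acceptance probability is the product $\prod_i \abs{\langle \phi_{k'}(x_i)\vert\phi_k(x_i)\rangle}^2$, so each factor is at least $p$), the key $k'$ returned by $\cA$ is, except with negligible probability, not $\delta$-bad, hence agrees with $k$ on at least a $(1-\delta)$-fraction of inputs. As $x^*$ is uniform and independent of $x_1,\dots,x_m$, the prediction $\ket{\phi_{k'}(x^*)}$ is then close to $\phi_k(x^*)$ with probability at least $1-\delta$, so $\cB$ wins the unpredictability game with advantage roughly $p(1-\delta) - \negl(\secp)$, a contradiction.

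The delicate points I would be most careful about are: choosing the agreement threshold (inherited from the noticeable acceptance probability $p$) and the disagreement fraction $\delta$ consistently so that ``$\Ver$ accepts'' and ``$k'$ is not $\delta$-bad'' line up, in particular accounting for the slack in the product form of the acceptance probability and matching the resulting closeness guarantee to whatever closeness notion the unpredictability definition uses; verifying that the honest evaluation output is genuinely pure so that the construction yields a \emph{pure} OWSG rather than a mixed one; and confirming that $\cB$'s query budget $mt = \poly(\secp)$ and its all-classical queries keep the reduction within the admissible (classical-query) UPSG adversary class.
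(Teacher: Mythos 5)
Your proposal is correct and follows essentially the same route as the paper: the same construction bundling $n=|k|+\secp$ random inputs with their pure evaluation states (the paper takes $n=|k|+\secp$ where you take $m=\poly(\secp)/\delta$), the same reduction in which the candidate key recovered by the inverter is used to locally generate a prediction at a fresh, never-queried point $x^*$, and the same analysis combining a union bound over the $2^{|k|}$ candidate keys with an agreement-fraction argument. The quantitative slack you flag (aligning ``$\Ver$ accepts'' with ``$k'$ is not $\delta$-bad'') is exactly what the paper handles by fixing $\delta=1/2$ and formalizing the averaging through the sets $K$, $X_k$, $N_k$, and $Y_k$, so your outline matches the paper's proof in all essentials.
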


Because pure OWSGs are broken if ${\bf PP}={\bf BQP}$~\cite{cavalar2023computational},
we also have the following corollary:
\begin{corollary}
If UPSGs exist, then ${\bf PP}\neq{\bf BQP}$.    
\end{corollary}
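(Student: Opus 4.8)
The plan is to derive the corollary as an immediate consequence of the immediately preceding theorem (UPSGs imply pure OWSGs) combined with the known hardness result for pure OWSGs. First I would invoke that preceding theorem: under the hypothesis that UPSGs exist, pure OWSGs exist. Next, I would recall the result of~\cite{cavalar2023computational}, which asserts that no secure pure OWSG can exist whenever ${\bf PP}={\bf BQP}$. Intuitively, the reason is that a ${\bf PP}$ oracle is powerful enough to estimate the acceptance probabilities of the verification circuit and thereby search for a key whose output state is consistent with the copies handed to the adversary; if the collapse ${\bf PP}={\bf BQP}$ holds, this search-and-verify attack can be carried out by an efficient (${\bf BQP}$) adversary, breaking one-wayness.

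The argument itself is then simply a contrapositive. Suppose toward a contradiction that ${\bf PP}={\bf BQP}$. By~\cite{cavalar2023computational}, this would imply that pure OWSGs do not exist. On the other hand, the assumed existence of UPSGs yields pure OWSGs by the preceding theorem, which is a contradiction. Hence ${\bf PP}\neq{\bf BQP}$, as claimed.

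I do not expect any genuine obstacle here, as the corollary is a one-line chaining of two already-established implications via contraposition. The only point requiring care is to confirm that the notion of ``pure OWSG'' produced by the preceding theorem coincides exactly with the notion for which~\cite{cavalar2023computational} proves the break—in particular, that the security-parameter regime and the number of copies of the output state given to the adversary match—so that the cited hardness result applies to our construction verbatim, with no additional adaptation needed.
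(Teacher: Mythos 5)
Your proof is correct and is exactly the paper's argument: the corollary follows by chaining the preceding theorem (UPSGs imply pure OWSGs) with the result of~\cite{cavalar2023computational} that pure OWSGs cannot exist if ${\bf PP}={\bf BQP}$, and taking the contrapositive. No further elaboration is needed.
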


\if0
Private-key quantum money schemes are known to be constructed from PRSGs~\cite{C:JiLiuSon18},
and the construction essentially uses the pseudorandomness: Haar random states are unclonable, and
therefore psendorandom states, which are computationally indistinguishable from Haar random states, are as well.
The corollary suggests that not pseudorandomness but unpredictability is enough for quantum money. 
\fi

\if0
\mor{Constructing Q money from IND-CPA-SKE with Q ct plus UPSGs (even with ancilla and classical query) is interesting results, because
they seem to be incomparable with PRSGs.}
\fi
\if0
\mor{mixed OWSGs from UPSG:}
\begin{enumerate}
    \item 
    $\KeyGen(1^\secp)\to k:$
    Run $k\gets\mathsf{UPSG}.\KeyGen(1^\secp)$ and output $k$.
    \item 
    $\StateGen(k)\to\phi_k:$
    Output $\phi_k\coloneqq \frac{1}{2^\ell}\sum_{x\in\bit^\ell}\ket{x}\bra{x}\otimes\ket{\psi_k(x)}\bra{\psi_k(x)}$.
    \item 
    $\Ver(k',\phi_k)\to\top/\bot:$
    Measure $\phi_k$ with $\{\Pi_{k'},I-\Pi_{k'}\}$, where
    $\Pi_{k'}\coloneqq \sum_x\ket{x}\bra{x}\otimes\ket{\psi_k(x)}\bra{\psi_k(x)}$.
\end{enumerate}
The correctness is clear. For the security, assume that there is a QPT $\cA$ and a polynomial $t$ such that
\begin{align}
\Pr[\top\gets\Ver(k',\phi_k):k\gets\KeyGen(1^\secp),\phi_k^{\otimes t}\gets\StateGen(k)^{\otimes t},
k'\gets\cA(\phi_k^{\otimes t})]    
\ge\frac{1}{\poly(\secp)}.
\end{align}
From such $\cA$, we construct the QPT adversary $\cB$ that breaks the security of the UPSG as follows:
\begin{enumerate}
    \item 
    Choose $x_1,...,x_t\gets\bit^\ell$ and query each $x_j$ to $\Eval'(k,\cdot)$
    to get $\ket{\psi_k(x_j)}$ for each $j\in[t]$.
    \item
    Run $k'\gets\cA(\phi_k^{\otimes t})$.
    \item 
    Choose $x\gets\bit^\ell$. Run $\ket{\psi_{k'}(x)}\gets\Eval'(k',x)$ and
    output $x$ and $\ket{\psi_{k'}(x)}$.
\end{enumerate}
The probability that $\cB$ wins is
\begin{align}
\frac{1}{2^\ell}\sum_x    
\end{align}
\fi

\if0
\mor{Ika, old version}
\begin{theorem}
If UPSGs exist, then IND-CPA-secure SKE with quantum ciphertexts imply QMs.
\end{theorem}

The first result has an interesting implication.
Morimae and Yamakawa introduced a quantum version of OWFs, which they call one-way states generators (OWSGs).
It is a QPT algorithm that, on input a random bit string $k$, outputs a quantum state $\phi_k$.
The security is, very roughly speaking, that for any polynomial $t$, no QPT adversary
that gets $\phi_k^{\otimes t}$ can output $k$.\footnote{More precisely, there is a verification algorithm, $\Ver$,
and the security is that no QPT adversary given $\phi_k^{\otimes t}$ can output $k'$ such that $\top\gets\Ver(k',\phi_k)$ except for a negligible probability.}
They showed that QMs with pure money states imply
OWSGs~\cite{cryptoeprint:2022/1336}.
This is reasonable, because if an adversary can find $k$ given $\phi_k^{\otimes t}$, 
it can output $\phi_k^{\otimes t+1}$.
The other direction, whether OWSGs imply QMs, is open,
because, intuitively, the ability of copying $\phi_k$ does not necessarily mean the ability of finding $k$.
We therefore have the following open problem:
\begin{center}
\emph{Do OWSGs imply QMs?}
\end{center}

Morimae and Yamakawa also showed that OWSGs imply $q$-time secure DSs with quantum verification keys for
any {\it fixed} polynomial $q$~\cite{C:MorYam22,cryptoeprint:2022/1336}.
One reason why we do not know how to extend the result to EUF-CMA-secure DSs
is that signing verification keys, which is the standard technique used in classical digital signatures,
cannot be used because verification keys are quantum and we do not know how to sign quantum states in general.
Therefore constructing have the following open problem.
\begin{center}
\emph{Can we construct EUF-CMA-secure DSs from OWSGs?}
\end{center}

\if0
One-way functions (OWFs) are one of the most fundamental primitives in classical cryptography~\cite{FOCS:ImpLub89}, because OWFs are equivalent to
many primitives (such as secret-key encryption, zero-knowledge, pseudorandom generators (PRGs), commitments, and digital signatures (DSs)), 
and are implied by many primitives (such as public-key encryption, oblivious transfer, and multiparty computation).
Surprisingly, many recent studies have demonstrated that OWFs 
are not necessarily the most fundamental assumptions in quantum cryptography~\cite{Kre21,KreQiaSinTal22,C:JiLiuSon18,C:MorYam22,C:AnaQiaYue22,ITCS:BraCanQia23,TCC:AGQY22,AC:Yan22,cryptoeprint:2022/1336,cryptoeprint:2023/543,cryptoeprint:2023/904}.
\fi

\if0
Several elementary quantum cryptographic primitives replacing OWFs have been introduced.
In particular, pseudorandom states generators (PRSGs) introduced by Ji, Liu, and Song~\cite{C:JiLiuSon18} are quantum analog
of PRGs.
It is a QPT algorithm that, on input a random bit string $k$, outputs a quantum state $\phi_k$.
The security is that for any polynomial $t$, $\phi_k^{\otimes t}$ and $t$-copies of Haar random states $\psi^{\otimes t}$ 
are computationally indistinguishable.
Ji, Liu, and Song also showed that PRSGs imply private-key quantum money schemes (QMs)~\cite{C:JiLiuSon18,STOC:AarChr12,Wiesner83}.\footnote{In this paper, we do not consider public-key quantum money schemes, so if we just say quantum money schemes, it means the private-key version.}
\fi

The first theorem roughly imply that if the answer to the second open problem above is yes,
then that to the first one is also yes.

\if0
\begin{theorem}
If EUF-CMA-secure digital signature schemes exist, 
then private-key quantum money schemes exist. 
\end{theorem}
This theorem means the following corollary: 
\begin{corollary}
If OWSGs imply EUF-CMA-secure digital signature schemes, then
OWSGs imply private-key quantum money schemes. 
\end{corollary}
\fi

The first theorem also {\it roughly} means that if it is shown that OWSGs do not imply QMs,
then it is impossible to construct EUF-CMA-secure DSs from OWSGs.\footnote{The situation is slightly subtle. What is shown is that
pure QMs imply pure OWSGs. In addition, pure QMs trivially imply mixed QMs,
and pure OWSGs trivially imply mixed OWSGs. Therefore, what we can say is that
(1) If pure OWSGs do not imply mixed QMs, then EUF-CMA-secure DSs cannot be constructed from pure OWSGs.
(2) If mixed OWSGs do not imply mixed QMs, then EUF-CMA-secure DSs cannot be constructed from mixed OWSGs.}

There are two remarks.
\begin{itemize}
\item
Note that the theorem is trivial if the EUF-CMA-secure DSs are classical, because in that case
EUF-CMA-secure DSs imply OWFs, which imply PRSGs and hence QMs~\cite{C:JiLiuSon18}.
However, the point is that
the theorem holds even if EUF-CMA-secure DSs are quantum: for example, verification keys are quantum. 
$q$-time secure DSs with quantum verification keys are constructed from OWSGs, and such DSs will not imply
OWFs~\cite{Kre21}.
\item 
In the theorem, what we actually need for QMs is not EUF-CMA-secure DSs, but EUF-CMA-secure MACs, because the verification of money states
is private.
\cite{C:AnaQiaYue22} showed that EUF-CMA-secure MACs exist if PRFSs exist with some parameters where they are not known to be implied by PRSGs.
Therefore, our result also means that PRFSG with those parameters imply QMs.
\item 
PRSGs imply QMs, but the other direction is not known.
Therefore, QMs seems to be a weaker notion of PRGs.
It is interesting to observe that PRFs imply PRSGs, and deterministic MACs, which is equivalent to unpredictable functions,
that is weaker notion of PRFSs imply QMs. 
\end{itemize}

\if0
\begin{itemize}
    \item 
    EUF-CMA-secure digital signature schemes can be constructed from OWSGs, and OWSGs imply private-key quantum money schemes.
    \item 
    OWSGs do not imply private-key quantum money schemes, and EUF-CMA secure digital signatures cannot be constructed from OWSGs.
\end{itemize}
\fi

\begin{theorem}
If EUF-CMA secure deterministic MACs with classical tags exist, then PRSGs exist.    
\end{theorem}
\fi

\subsection{Technical Overview}

\paragraph{IND-CPA-secure SKE from UPSGs.}
Let us first recall a construction of IND-CPA-secure SKE from UPFs in classical cryptography.
In the classical case, we first use the Goldreich-Levin \cite{STOC:GolLev89} to construct PRFs from UPFs:
Let $f_k(\cdot)$ be a UPF. Then $g_{k,r}(x)\coloneqq f_k(x)\cdot r$ is a PRF with the key $(k,r)$~\cite{C:NaoRei98}.
With a PRF $F_k(\cdot)$, an IND-CPA-secure SKE scheme can be constructed as follows:
The secret key is the key of the PRF.
The ciphertext of a message $m$ is $\ct=(r,F_k(r)\oplus m)$ with a random bit string $r$.

However, a similar strategy does not work in the quantum case.
In particular, we do not know how to convert UPSGs to PRFSs: what is $\phi_k(x)\cdot r$!?

Our idea is to use the duality between the swapping and the distinction~\cite{AAS20,EC:HhaMorYam23,kitagawa2023quantum}. The duality
intuitively means that distinguishing two orthogonal states $\ket{\psi}$ and $\ket{\phi}$ is as hard as
swapping $\ket{\psi}+\ket{\phi}$ and $\ket{\psi}-\ket{\phi}$ with each other.
Our ciphertext for a single bit message $b\in\bit$ is, then,
$\ct_b\coloneqq(x,y,\ket{ct_{x,y}^b})$, where
$\ket{ct_{x,y}^b}\coloneqq|0\|x\rangle|\phi_k(0\|x)\rangle+(-1)^b|1\|y\rangle|\phi_k(1\|y)\rangle$, and
$x$ and $y$ are random bit strings. 
Here, $|\phi_k(0\|x)\rangle$
and $|\phi_k(1\|y)\rangle$ are outputs of UPGSs on inputs $0\|x$ and $1\|y$, respectively.
The secret key of our SKE scheme is the key $k$ of the UPSGs.
If a QPT adversary can distinguish $\ct_0$ and $\ct_1$, then due to the duality, we can construct another QPT adversary that can
convert $|\phi_k(0\|x)\rangle$
to $|\phi_k(1\|y)\rangle$. However, it contradicts the unpredictability of the UPSGs.

This argument seems to work. There is, however, one subtle issue here.
The adversary of the IND-CPA security can query the encryption oracle,
but in general we do not know whether the duality works if the distinguisher queries to an oracle,
because the swapping unitary is constructed from the distinguishing unitary and its inverse.

We can solve the issue by observing that the oracle query by the adversary can actually be removed.
Because the oracle is an encryption algorithm for single-bit messages and because the adversary queries to the oracle only polynomially many times,
we can remove the oracle by giving sufficiently many outputs of the oracle to the adversary in advance as an auxiliary input.
The duality in \cite{EC:HhaMorYam23} takes into account of the auxilially inputs to the adversary, and therefore now we can use the duality.

\paragraph{MACs with unclonable tags from UPSGs.}
It is straightforward to see that UPSGs imply EUF-CMA-secure MACs with quantum tags, because
we have only to take the output $\phi_k(x)$ of the UPSG on input $x$ as the tag corresponding to the message $x$.
However, the mere fact that the tags are quantum does not automatically mean that they are unclonable.
PRFSs also imply EUF-CMA-secure MACs with quantum tags, and in that case, the unclonability of tags is straightforward, because
quantum pseudorandomness implies unclonability~\cite{C:JiLiuSon18}.
However, in the case of UPSGs, it is not clear whether the quantum unpredictability is also sufficient for unclonability.

Our idea to construct unclonable tags is to use the unclonability of random BB84 states. (In other words, to use Wiesner money~\cite{Wiesner83}.)
Assume that a UPSG exists. Then, there exists an EUF-CMA-secure MAC. (Actually, in the following argument, any EUF-CMA-secure MACs even with classical tags are fine.)
Let $\tau_m$ be a tag corresponding to a message $m$. Then, if we set
$\tau_m'\coloneqq \tau_m\otimes |x\rangle\langle x|_\theta$ as a new tag,
it becomes unclonable.
Here, $x,\theta$ are random bit strings, $|x\rangle_\theta\coloneqq\bigotimes_iH^{\theta^i}|x^i\rangle$,
$H$ is the Hadamard gate, 
and $x^i$ and $\theta^i$ are $i$th bit of $x$ and $\theta$, respectively.

However, the verifier who wants to verify the tag cannot verify $\tau_m'$, 
because the verifier does not know $x$ and $\theta$.
Let us therefore modify our tag as
$\tau_m''\coloneqq (x,\theta,\tau_m\otimes|x\rangle\langle x|_\theta)$.
Now, this can be verified by doing the projection onto $|x\rangle_\theta$, but 
the unclonability is no longer satisfied because $x$ and $\theta$ are open.

To solve the issue, we introduce IND-CPA-secure SKE. Fortunately, as we show in this paper, IND-CPA-secure SKE exists if UPSGs exist.
Let us modify our tag as
$\tau_m'''\coloneqq \Enc(\sk,(x,\theta))
\otimes \tau_m\otimes|x\rangle\langle x|_\theta$,
where $\Enc$ is the encryption algorithm of the SKE scheme.
Now it is unclonable due to the security of the SKE scheme, but it is no longer authenticated:
$\Enc(\sk,(x,\theta))
\otimes \tau_m\otimes|x\rangle\langle x|_\theta$
could be replaced with
$\Enc(\sk,(x',\theta'))
\otimes \tau_m\otimes|x'\rangle\langle x'|_{\theta'}$
with another $x'$ and $\theta'$ chosen by the adversary,
because encryption does not necessarily mean authentication.
The adversary who knows $x'$ and $\theta'$ can of course make many copies of the tag.

The problem is finally solved by considering the following tag:
$\tau_m''''\coloneqq \Enc(\sk,\tau_{m\|x\|\theta}\otimes |x,\theta\rangle\langle x,\theta|)
\otimes |x\rangle\langle x|_\theta$,
where $\tau_{m\|x\|\theta}$ is the tag corresponding to the message $m\|x\|\theta$.

\subsection{Open Problems}
To conclude Introduction, let us provide some interesting open problems.
\begin{enumerate}
    \item 
    Do UPSGs imply PRFSs? Or can we separate them?
    \item 
    Is there any application that is possible with PRFSs, but not with UPSGs? So far, all known applications of PRFSs are achievable with UPSGs.
    \item 
    We show that EUF-CMA-secure MACs are possible with UPSGs.
   How about EUF-CMA-secure digital signatures? Can we realize them with UPSGs?
    So far, we do not know how to realize them even with PRUs.\footnote{Recently, \cite{coladangelo2024black} showed an oracle separation between PRUs and EUF-CMA-secure digital signatures with classical signatures.}
    \item 
    Do OWSGs imply UPSGs? It is neither known whether PRSGs imply PRFSs.
    \if0
    \item 
   UPSGs are related to the hardness of learning. In learning, a learner can query any input to a function $f$ in a black-box way, and the goal is to predict the value $f(x)$ for uniformly sampled $x$\mor{what happens if that $x$ was queried before.}. If we consider its quantum version where the function to be learned is a classical-input-quantum-output function, then the hardness of learning means the existence of UPSGs. Although we will not explore details in this direction, it could be an interesting relationship between quantum learning and quantum cryptography.
   \fi
   \if0
   \item We remarked above that $\ket{\phi_k(x)}=\ket{f_k(x)}$ is a UPSG but not a PRFS if $\{f_k\}_k$ is a PRF. However, PRFs imply PRFSs \cite{TCC:AGQY22}. Since PRFSs (and other pseudorandom primitives) could be weaker than PRF \cite{Kre21}, it is an interesting question whether UPSGs could be weaker than PRFSs, PRSGs, and other pseudorandom primitives even if there does not exist any post-quantum cryptographic primitive.
   \fi
   \if0
   \item What is the complexity assumption necessary for the existence of UPSGs? 
   Since UPSGs imply EFI, the existence of UPSGs means $\mathbf{avgUnitaryBQP}\neq\mathbf{avgUnitarySZK}_{\rm HV}$\cite{uhlmann_complexity}. 
   However, we do not know anything about {\it classical} complexity assumptions. 
   It is known that ${\bf BQP}\neq{\bf PP}$ is necessary for the existence of PRSGs~\cite{Kre21} and OWSGs~\cite{cavalar2023computational}, but we do not know how to apply the shadow tomography technique~\cite{NatPhys:HKP20} used there 
   to UPSGs, although the outputs of UPSGs are pure states.\footnote{
   UPSGs imply EFI, but a recent paper \cite{LMW23_one_query} suggests that the existence of EFI will not imply any
   separation of classical complexity classes. Moreover, although UPSGs imply OWSGs (via SKE), 
   such OWSGs are the ones with mixed states, and we do not know any classical complexity assumption necessary for the existence of mixed OWSGs.} \takashi{I left a comment on this.}
   \fi
\end{enumerate} 

\if0
\mor{under construction}
The concept of statistically-secure private-key QMs was first introduced by Wiesner~\cite{Wiesner83}
where the secret key is two random $n$-bit string $x,z\in\bit^n\times\bit^n$, and
the money state is $\ket{\psi_{x\|z}}\coloneqq\bigotimes_{i=1}^n H^{x_i}\ket{z_i}$.
Bennett, Brassard, Breidbart, and Wiesner~\cite{C:BBBW82} considered its computationally-secure
version to avoid the ``gigantic key database" where the money state is $\ket{r}\otimes\ket{\psi_{f_k(r)}}$ with $r\in\bit^\ell$ and
a PRF $f:\bit^\ell\to\bit^{2n}$.
Our construction using DSs shares a similar idea with theirs.
Aaronson and Christiano~\cite{STOC:AarChr12} formalized definitions of private-key QMs.
We use the definition of private-key QMs in \cite{STOC:AarChr12,C:JiLiuSon18}.

\mor{We can also construct MACs with unclonable tags and IND-CPA-secure SEK from PRFSs. (The former was not given in \cite{C:AnaQiaYue22}. We privide a proof in appendix...) 
Our results, \cref{thm:MAC} and \cref{thm:SKE}, improve the results,
because they are constructed from the weaker assumption, namely, UPSGs.
}
\fi

\begin{figure}
\begin{tikzcd}
&&
\text{OWFs} \arrow[rrdddd, "{\cite{TCC:AGQY22}}"]  
&&\\
&&&& \\
&&&&\\
&& 
\text{PRUs} \arrow[uuu,"{\cite{Kre21}\:\:\:\:\:}" cross] \arrow[rrd, "\cite{TCC:AGQY22}" description]
&&\\
&&&& 
\text{PRFSs} \arrow[lldd, "\text{\cref{thm:PRFS_to_UPSG}}" description,color=red] \arrow[lllld, "\text{Trivial}" description] \\
\text{PRSGs} \arrow[rrdddddddd, "{\cite{C:JiLiuSon18}}" description, bend right]
&&&&\\
&& 
\text{UPSGs} \arrow[ldd, "\text{Trivial}" description] \arrow[rdd, "\text{\cref{thm:SKE_from_UPSG}}" description,color=red]
&&\\
&&&&\\
& 
\text{EUF-CMA MACs} 
&& 
\text{IND-CPA SKE} \arrow[ld, no head,color=red] \arrow[lddddddd,"\cite{cryptoeprint:2022/1336}" description,bend left] \arrow[lddddddddd,"\cite{cryptoeprint:2022/1336}" description,bend left=49] 
&\\
&& 
{} \arrow[lu, no head,color=red] \arrow[dd, "\text{\cref{thm:unclonable_MAC_from_UPSG_and_SKE}}" description,color=red]
&&\\
&&&&\\
&& 
\text{unclonable MACs} \arrow[dd, "\text{Trivial}" description]
&&\\
&&&&\\
&&
\text{Private Money} \arrow[dd, "\cite{cryptoeprint:2022/1336}" description, dashed]
&&\\
&&&&\\
&&
\text{OWSGs} \arrow[dd, "\cite{cryptoeprint:2023/1620}" description, dashed]
&&\\
&&&&\\
&&
\text{EFIs}                                                              &&                                               
\end{tikzcd}
\caption{Relation among primitives.
{\color{red}The red color arrows} represent our results. A dotted arrow from primitive A to primitive B represents that primitive A with pure outputs implies primitive B.}
\label{fig:primitives}
\end{figure}
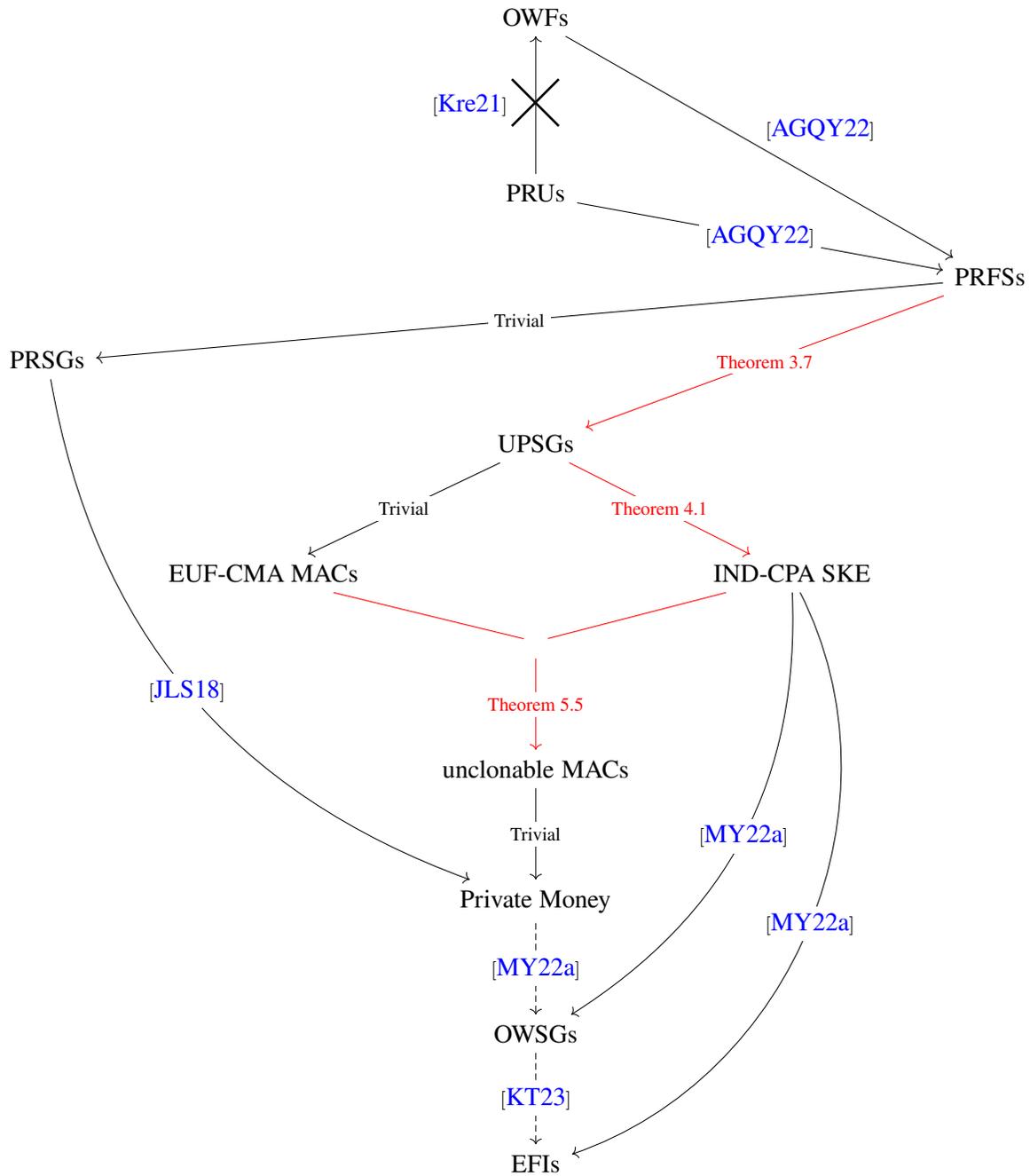

\clearpage
\section{Preliminaries}\label{sec:preliminaries}

\subsection{Basic Notations}
\label{sec:basic_notations}

We use the standard notations of quantum computing and cryptography. For a bit string $x$, $x^i$ denotes the $i$th bit of $x$.
For two bit strings $x$ and $y$, $x\|y$ means the concatenation of them.
We use $\secp$ as the security parameter.
$[n]$ means the set $\{1,2,...,n\}$.
For any set $S$, $x\gets S$ means that an element $x$ is sampled uniformly at random from the set $S$.
We write $\negl$ to mean a negligible function
and $\poly$ to mean a polynomial. 
PPT stands for (classical) probabilistic polynomial-time and QPT stands for quantum polynomial-time.
For an algorithm $A$, $y\gets A(x)$ means that the algorithm $A$ outputs $y$ on input $x$.

For simplicity, we sometimes omit the normalization factor of a quantum state.
(For example, we write $\frac{1}{\sqrt{2}}(|x_0\rangle+|x_1\rangle)$ just as
$|x_0\rangle+|x_1\rangle$.)
$I\coloneqq|0\rangle\langle0|+|1\rangle\langle 1|$ is the two-dimensional identity operator.
For the notational simplicity, we sometimes write $I^{\otimes n}$ just as $I$ when the dimension is clear from the context. 
We use $X$, $Y$ and $Z$ as Pauli operators.
For a bit string $x$, $X^x\coloneqq\bigotimes_iX^{x^i}$. We use $Y^y$ and $Z^z$ similarly.
For two density matrices $\rho$ and $\sigma$, the trace distance is defined as
${\rm TD}(\rho,\sigma)\coloneqq 
\frac{1}{2}\|\rho-\sigma\|_1 =
\frac{1}{2} \mathrm{Tr}\left[\sqrt{(\rho - \sigma)^2}\right]$,
where $\|\cdot\|_1$ is the trace norm. 

\if0
$\|X\|_1\coloneqq\mbox{Tr}\sqrt{X^\dagger X}$ is the trace norm.
$\mbox{Tr}_\regA(\rho_{\regA,\regB})$ means that the subsystem (register) $\regA$ of the state $\rho_{\regA,\regB}$ on
two subsystems (registers) $\regA$ and $\regB$ is traced out.
For simplicity, we sometimes write $\mbox{Tr}_{\regA,\regB}(|\psi\rangle_{\regA,\regB})$ to mean
$\mbox{Tr}_{\regA,\regB}(|\psi\rangle\langle\psi|_{\regA,\regB})$.
$I$ is the two-dimensional identity operator. For simplicity, we sometimes write $I^{\otimes n}$ as $I$ 
if the dimension is clear from the context.
For the notational simplicity, we sometimes write $|0...0\rangle$ just as $|0\rangle$,
when the number of zeros is clear from the context.
For two pure states $|\psi\rangle$ and $|\phi\rangle$,
we sometimes write $\||\psi\rangle\langle\psi|-|\phi\rangle\langle\phi|\|_1$
as
$\||\psi\rangle-|\phi\rangle\|_1$
to simplify the notation.
$F(\rho,\sigma)\coloneqq\|\sqrt{\rho}\sqrt{\sigma}\|_1^2$
is the fidelity between $\rho$ and $\sigma$.
We often use the well-known relation between the trace distance and the fidelity:
$1-\sqrt{F(\rho,\sigma)}\le\frac{1}{2}\|\rho-\sigma\|_1\le\sqrt{1-F(\rho,\sigma)}$.
\fi

\if0
\subsection{Pseudorandom Functions}
\begin{definition}[Pseudorandom functions (PRFs)]
A keyed function $\{\PRF_k: \cX\rightarrow \cY\}_{k\in \bit^\secp}$ that is computable in classical deterministic polynomial-time is a quantum-query secure pseudorandom function if 
for any QPT adversary $\cA$ with quantum access to the evaluation oracle $\PRF_k(\cdot)$,
\begin{equation}
   |\Pr[1\gets\cA^{\PRF_k(\cdot)}(1^\secp)] 
   -\Pr[1\gets\cA^{f(\cdot)}(1^\secp)] |\le\negl(\secp),
\end{equation}
where $k\gets\bit^\secp$ and $f:\cX\rightarrow \cY$ is a function chosen uniformly at random.
\end{definition}

\begin{remark}
Quantum-query secure PRFs exist if quantum-query secure\mor{quantum-secure?} OWFs exist~\cite{FOCS:Zhandry12}.
\end{remark}
\fi

\if0
\subsection{Digital Signatures}
\begin{definition}[Digital signatures] \label{def:sEUF-CMA} 
A digital signature scheme is a set of algorithms $(\Gen,\Sign,\Ver)$ such that
\begin{itemize}
    \item 
    $\Gen(1^\secp)\to(k,\vk):$ It is a PPT algorithm that, on input the security parameter $\secp$, outputs
    a signing key $k$ and a verification key $\vk$.
    \item
    $\Sign(k,\msg)\to\sigma:$
    It is a PPT algorithm that, on input the message $\msg$ and $k$, outputs a signature $\sigma$.
    \item
   $\Ver(\vk,\msg,\sigma)\to\top/\bot:$ 
   It is a deterministic classical polynomial-time algorithm that, on input $\vk$, $\msg$, and $\sigma$, outputs $\top/\bot$.
\end{itemize}
We require the following correctness and strong EUF-CMA security.

\paragraph{\bf Correctness:}
For any $\msg$,
\begin{equation}
   \Pr[\top\gets\Ver(\vk,\msg,\sigma):
   (k,\vk)\gets\Gen(1^\secp),
   \sigma\gets\Sign(k,\msg)
   ]\ge1-\negl(\secp). 
\end{equation}

\paragraph{\bf Strong EUF-CMA security:}
For any QPT adversary $\cA$ with classical oracle access to the signing oracle $\Sign(k,\cdot)$,
\begin{equation}
   \Pr[(\msg^\ast,\sigma^\ast)\notin \mathcal{Q}~\land~\top\gets\Ver(\vk,\msg^*,\sigma^*):
   (k,\vk)\gets\Gen(1^\secp),
   (\msg^\ast,\sigma^\ast)\gets\cA^{\Sign(k,\cdot)}(\vk)
   ]\le\negl(\secp), 
\end{equation}
where $\mathcal{Q}$ is the set of message-signature pairs returned by the signing oracle. 
\end{definition}

\begin{remark}
Without loss of generality, we can assume that $\Sign$ is deterministic.
(The random seed used for $\Sign$ can be generated by applying a PRF on the message signed, and the key of PRF is appended to the signing key.)
\end{remark}

\begin{theorem}[{\cite[Sec. 6.5.2]{Book:Goldreich04}}]\label{thm:sig_from_OWF}
Strong EUF-CMA-secure digital signatures exist if OWFs exist.
\end{theorem}
\fi

\if0
\subsection{Symmetric Key Encryption}
\begin{definition}[Symmetric Key Encryption (SKE)]
A (classical) symmetric key encryption (SKE) scheme with message space $\bit^\ell$ is a set of algorithms $(\Enc,\Dec)$ such that
\begin{itemize}
    \item
    $\Enc(K,\msg)\to\ct:$
    It is a PPT algorithm that, on input $K\in \bit^\secp$ and the message $\msg\in \bit^\ell$, outputs a ciphertext $\ct$.
    \item
   $\Dec(K,\ct)\to\msg':$ 
   It is a deterministic classical polynomial-time algorithm that, on input $K$ and $\ct$, outputs $\msg'$.
\end{itemize}
We require the following correctness and IND-CPA security.
\paragraph{\bf Correctness:}
For any $\msg\in\bit^\ell$,
\begin{equation}
   \Pr[\msg\gets\Dec(K,\ct):
   K\gets\bit^\secp, 
   \ct\gets\Enc(K,\msg)
   ] = 1.
\end{equation}

\paragraph{\bf IND-CPA Security:}
For any QPT adversary $\cA$ with classical oracle access to the encryption oracle $\Enc(K,\cdot)$,
\begin{equation}
   \Pr\left[b\gets\cA(\ct,\st):
   \begin{array}{r}
   K\gets\bit^\secp\\
   (\msg_0,\msg_1,\st)\gets \cA^{\Enc(K,\cdot)}(1^\secp)\\
   b\gets \bit\\ 
   \ct\gets\Enc(K,\msg_b)
   \end{array}
   \right] \le \frac{1}{2}+\negl(\secp).
\end{equation}   
\end{definition}

\begin{remark}
IND-CPA-secure SKE exists if OWFs exist~\cite{JACM:GolGolMic86,SIAMCOMP:HILL99}.
\end{remark}
\fi

\if0
\subsection{Lemma by Boneh and Zhandry}
In this paper, we use the following lemma by Boneh and Zhandry~\cite{C:BonZha13}.

\begin{lemma}[{\cite[Lemma 2.1]{C:BonZha13}}]
\label{lem:BZ}
Let $\cA$ be a quantum algorithm, and let $\Pr[x]$ be the probability that $\cA$ outputs $x$. Let
$\cA'$ be another quantum algorithm obtained from $\cA$ by pausing $\cA$ at an arbitrary stage of execution,
performing a partial measurement that obtains one of $k$ outcomes, and then resuming $\cA$. 
Let $\Pr'[x]$ be the probability that $\cA'$ outputs $x$. Then $\Pr'[x] \ge \Pr[x]/k$.
\end{lemma}
\fi

\subsection{Lemmas}
We use the following lemma by Hhan, Morimae and Yamakawa~\cite{EC:HhaMorYam23} (based on~\cite{AAS20}).

\begin{lemma}[Duality Between Swapping and Distinction~\cite{EC:HhaMorYam23}, Theorem 5.1]
\label{lem:HMY22}
     Let $\ket{\psi}$ and $\ket{\phi}$ be orthogonal $n$-qubit states. Assume that a QPT algorithm $\cA$ with some $m$-qubit advice state $\ket{\tau}$ 
     can distinguish $\ket{\psi}$ and $\ket{\phi}$ with advantage $\Delta$. Then, there exists a polynomial-time implementable unitary $V$ over $(n+m)$-qubit states such that
        \begin{align}
            \frac{|\bra{\alpha}\bra{\tau}V\ket{\beta}\ket{\tau}+\bra{\beta}\bra{\tau}V\ket{\alpha}\ket{\tau}|}{2}=\Delta,
        \end{align}
        where $\ket{\alpha}\coloneqq\frac{\ket{\psi}+\ket{\phi}}{\sqrt{2}}$ and $\ket{\beta}\coloneqq\frac{\ket{\psi}-\ket{\phi}}{\sqrt{2}}$.
\end{lemma}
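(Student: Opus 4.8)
The plan is to put the distinguisher into a canonical ``unitary then projective measurement'' form and then read off the swapping unitary $V$ as the reflection associated with that measurement. First I would argue that, by deferring measurements and purifying any internal randomness, the algorithm $\cA$ together with its advice can be implemented by a polynomial-time unitary $U$ acting on the $(n+m)$-qubit register, where the $m$-qubit register holds $\ket{\tau}$ and also absorbs any ancilla workspace initialized to $\ket{0}$, followed by a fixed projective measurement $\{\Pi, I-\Pi\}$ onto the ``output $1$'' subspace. For any input state $\ket{\chi}$ the acceptance probability is then $\|\Pi U\ket{\chi}\ket{\tau}\|^2=\bra{\chi}\bra{\tau}U^\dagger\Pi U\ket{\chi}\ket{\tau}$. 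Setting $M\coloneqq U^\dagger\Pi U$, which is again a projector, the distinguishing advantage becomes $\Delta=|p_\psi-p_\phi|$, where $p_\psi\coloneqq\bra{\psi}\bra{\tau}M\ket{\psi}\ket{\tau}$ and $p_\phi\coloneqq\bra{\phi}\bra{\tau}M\ket{\phi}\ket{\tau}$.

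Next I would take the candidate swapping unitary to be the reflection $V\coloneqq I-2M=U^\dagger(I-2\Pi)U$. This is manifestly unitary, being a reflection about the support of $M$, and it is polynomial-time implementable, since one runs $U$, applies a single-qubit phase flip on the output qubit to realize $I-2\Pi$, and then runs $U^\dagger$. It remains to evaluate the symmetric matrix element. Because $\ket{\psi}$ and $\ket{\phi}$ are orthonormal, the states $\ket{\alpha}$ and $\ket{\beta}$ are orthogonal and $\braket{\tau|\tau}=1$, so the identity part of $V$ drops out, giving $\bra{\alpha}\bra{\tau}V\ket{\beta}\ket{\tau}=-2\bra{\alpha}\bra{\tau}M\ket{\beta}\ket{\tau}$ and similarly with $\alpha,\beta$ exchanged. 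Expanding $\ket{\alpha},\ket{\beta}$ in the $\{\ket{\psi},\ket{\phi}\}$ basis, the two off-diagonal contributions $\bra{\psi}\bra{\tau}M\ket{\phi}\ket{\tau}$ and $\bra{\phi}\bra{\tau}M\ket{\psi}\ket{\tau}$ enter the two summands with opposite signs and cancel, leaving $\bra{\alpha}\bra{\tau}M\ket{\beta}\ket{\tau}+\bra{\beta}\bra{\tau}M\ket{\alpha}\ket{\tau}=p_\psi-p_\phi$. Hence the left-hand side of the claim equals $\frac{|-2(p_\psi-p_\phi)|}{2}=|p_\psi-p_\phi|=\Delta$, which is exactly the desired identity.

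The linear algebra in the second step is routine, so the main obstacle is the modeling step. I must justify that an arbitrary QPT distinguisher with advice $\ket{\tau}$, potentially using fresh ancillas and intermediate measurements, can be recast as a single unitary $U$ followed by the projector $\Pi$ on $(n+m)$ qubits \emph{without changing its advantage}, and, crucially, that the resulting $V=U^\dagger(I-2\Pi)U$ remains polynomial-time implementable and acts on precisely the input-plus-advice register, so that the same $\ket{\tau}$ serves as the advice on both sides of the swap as the statement demands. A secondary point to verify is that the quantity called $\Delta$ is genuinely the two-sided distinguishing advantage $|p_\psi-p_\phi|$, matching the definition of advantage used in the hypothesis, rather than a one-sided acceptance gap.
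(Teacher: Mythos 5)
Your proposal is correct, and in fact this lemma is not proved in the paper at all: it is imported verbatim from the cited reference \cite{EC:HhaMorYam23} (Theorem 5.1 there), whose proof is exactly your argument --- purify the distinguisher into a unitary $U$ followed by a projective measurement $\{\Pi,I-\Pi\}$ (absorbing ancillas into the advice register) and take the reflection $V=U^{\dagger}(I-2\Pi)U$, with the same off-diagonal cancellation yielding $|p_{\psi}-p_{\phi}|=\Delta$. So your reconstruction matches the known proof of the cited result; the only point worth keeping explicit, as you do, is that the ancilla workspace must be folded into $\ket{\tau}$ so that $V$ acts on the input-plus-advice register with the \emph{same} advice state on both sides of the swap.
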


\if0
The following lemma shows that the query to the reflection oracle can be implemented approximately using many copies of the state.
\begin{lemma}[\cite{C:JiLiuSon18}, Theorem 4]\label{lem:reflection_oracle}\mor{footnote ni shinai}\footnote{Although they only show the case when $\rho$ is a pure state in \cite{C:JiLiuSon18}, we can see that this lemma follows from their result and the concavity of the trace distance.}
    Let $\ket{\psi}$ be a quantum state. Define oracle $\mathcal{O}_{\ket{\psi}}=I-2\ket{\psi}\bra{\psi}$ to be the reflection about $\ket{\psi}$. Let $\rho$ be a state not necessarily independent of $\ket{\psi}$. Let $\cA^{\mathcal{O}_{\ket{\psi}}}$ be an oracle algorithm that makes $q$ queries to $\mathcal{O}_{\ket{\psi}}$. For any integer $\ell>0$, there exists a quantum algorithm $\cB$ that makes no queries to $\ket{O}_{\ket{\psi}}$ such that
    \begin{align}
        {\rm TD}\left(\cA^{\mathcal{O}_{\ket{\psi}}}(\rho),\cB(\ket{\psi}\bra{\psi}^{\otimes\ell}\otimes\rho)\right)\le\frac{q\sqrt{2}}{\sqrt{\ell+1}}.
    \end{align}
    Moreover, the running time of $\cB$ is polynomial in that of $\cA$ and $\ell$.
\end{lemma}
\fi

We also use the security of Wiesner money \cite{Wiesner83,MVW12_Wiesner_money}.
\begin{lemma}[Security of Wiesner Money~\cite{MVW12_Wiesner_money}]\label{lem:Wiesner_money}
   Let us consider the following security game:
   \begin{enumerate}
       \item The challenger $\cC$ chooses $x,\theta\gets\bit^\secp$ and sends $\ket{x}_{\theta}$ to the adversary $\cA$. Here, $\ket{x}_{\theta}\coloneqq\bigotimes_{i\in[\secp]}H^{\theta^i}\ket{x^i}$.
       \item $\cA$ sends a $2\secp$-qubit state $\rho$ to $\cC$.
       \item $\cC$ projects $\rho$ onto $\ket{x}^{\otimes 2}_{\theta}$. If the projection is successful, $\cC$ outputs $\top$. Otherwise, $\cC$ outputs $\bot$.
   \end{enumerate}
    For any unbounded adversary $\cA$, $\Pr[\top\gets\cC]\le\negl(\secp)$.
\end{lemma}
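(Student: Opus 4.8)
The plan is to recognize the game as a $1\to 2$ cloning experiment for the BB84 ensemble and to bound its value by semidefinite-programming duality with a trivial (scalar) dual certificate. First I would fix an arbitrary adversary and absorb its whole strategy into a single CPTP channel $\Phi$ sending the $\secp$-qubit input register $A$ to the two output registers $B_1B_2$; the winning probability is then
\[
p=\mathbb{E}_{x,\theta}\,\Tr\!\big[\big(\proj{x}_\theta\big)^{\otimes 2}\,\Phi\big(\proj{x}_\theta\big)\big].
\]
Since every BB84 state $\ket{x}_\theta$ has real amplitudes in the computational basis, $\proj{x}_\theta$ is a real symmetric projector, so passing to the Choi matrix $J$ of $\Phi$ (with $\Tr_{B_1B_2}[J]=I_A$) lets me rewrite $p=\Tr[R\,J]$ with
\[
R:=\frac{1}{4^\secp}\sum_{x,\theta}\big(\proj{x}_\theta\big)^{\otimes 3},
\]
now read as an operator on three copies $A',B_1,B_2$ of the $\secp$-qubit register.

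Next I would invoke weak SDP duality. For any Hermitian $Y$ on $A$ with $Y\otimes I_{B_1B_2}\succeq R$ one has $\Tr[Y]-p=\Tr[(Y\otimes I_{B_1B_2}-R)\,J]\ge 0$ because $J\succeq 0$, hence $p\le \Tr[Y]$. Choosing the scalar certificate $Y=\lambda_{\max}(R)\,I_A$ is automatically feasible and gives $p\le 2^\secp\,\lambda_{\max}(R)$, so everything reduces to controlling the top eigenvalue of $R$.

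The key structural observation is that $R$ factorizes across qubit positions: regrouping the $3\secp$ qubits into $\secp$ blocks of three (one from each of $A',B_1,B_2$) gives $R\cong R_1^{\otimes \secp}$, where $R_1=\tfrac14\big(\big(\proj{0}\big)^{\otimes3}+\big(\proj{1}\big)^{\otimes3}+\big(\proj{+}\big)^{\otimes3}+\big(\proj{-}\big)^{\otimes3}\big)$ is the single-qubit analogue. Since $\lambda_{\max}$ is multiplicative under tensor products of PSD operators, $\lambda_{\max}(R)=\lambda_{\max}(R_1)^\secp$. Each summand of $R_1$ is a projector onto a permutation-symmetric three-qubit state, so $R_1$ is supported on the $4$-dimensional symmetric subspace; restricting to that subspace and diagonalizing the resulting small matrix (it splits into two congruent $2\times2$ blocks) yields $\lambda_{\max}(R_1)=3/8$. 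Combining, $p\le 2^\secp (3/8)^\secp=(3/4)^\secp=\negl(\secp)$, which holds for every unbounded $\cA$.

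I expect the main obstacle to be the two places where quantitative care is required: verifying that the scalar dual certificate is valid and, more importantly, pinning down $\lambda_{\max}(R_1)$ precisely enough. A crude bound does not suffice, since negligibility forces $2\lambda_{\max}(R_1)<1$, i.e.\ $\lambda_{\max}(R_1)<1/2$, so I genuinely need the symmetric-subspace computation giving $3/8$ rather than a loose estimate. The supporting facts (the Choi-matrix identity $\Tr[M\Phi(\rho)]=\Tr[(\rho^{\top}\otimes M)\,J]$, multiplicativity of $\lambda_{\max}$ over tensor products of PSD operators, and confinement of the range of $R_1$ to the symmetric subspace) are standard and routine to check.
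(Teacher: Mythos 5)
Your proposal is correct, and the computation checks out: in the ordered symmetric-subspace basis $(\ket{000},\ket{\overline{W}})$ and $(\ket{111},\ket{W})$ the operator $4R_1$ splits into two copies of $\begin{pmatrix} 5/4 & \sqrt{3}/4 \\ \sqrt{3}/4 & 3/4 \end{pmatrix}$, whose largest eigenvalue is $3/2$, giving $\lambda_{\max}(R_1)=3/8$ and hence $p\le (2\cdot 3/8)^\secp=(3/4)^\secp$. Note, however, that the paper does not prove this lemma at all: it imports it wholesale from the cited work of Molina, Vidick and Watrous, so there is no in-paper argument to compare against. Your derivation is essentially a self-contained reconstruction of that cited analysis — the Choi-matrix/SDP formulation, the per-qubit tensor factorization, and the symmetric-subspace eigenvalue computation are exactly the ingredients of the $(3/4)^n$ optimal-counterfeiting bound — with one small simplification: you use the scalar dual certificate $Y=\lambda_{\max}(R)I_A$ rather than solving the dual exactly, which happens to lose nothing here since the honest cloning attack already achieves $(3/4)^\secp$, so your certificate is in fact tight. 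The only point worth flagging is cosmetic: you should state explicitly that regrouping the $3\secp$ qubits into $\secp$ triples is a permutation of tensor factors and therefore preserves the spectrum, which is what licenses the identification $\lambda_{\max}(R)=\lambda_{\max}(R_1)^{\secp}$; you use this but do not say it.
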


\subsection{Cryptographic Primitives}
The following is the standard definition of IND-CPA-secure SKE schemes for classical messages. However, in this paper, we consider general cases where
ciphertexts can be quantum states.
\begin{definition}[IND-CPA-Secure SKE for Classical Messages]\label{def:SKE_for_classical}
An IND-CPA-secure secret-key encryption (SKE) scheme for classical messages is a set of algorithms $(\KeyGen,\Enc,\Dec)$ such that
\begin{itemize}
    \item 
    $\KeyGen(1^\secp)\to \sk:$ It is a QPT algorithm that, on input the security parameter $\secp$,
    outputs a classical secret key $\sk$.
    \item 
    $\Enc(\sk,m)\to\ct:$
    It is a QPT algorithm that, on input $\sk$ and a classical bit string (plaintext) $m$, outputs a ciphertext $\ct$,
    which can be a quantum state in general.
    \item 
    $\Dec(\sk,\ct)\to m:$
    It is a QPT algorithm that, on input $\sk$ and $\ct$, outputs $m$.
\end{itemize}
We require the following two properties.

\paragraph{Correctness:}
For any bit string $m$,
\begin{align}
\Pr[m\gets\Dec(\sk,\ct):\sk\gets\KeyGen(1^\secp),\ct\gets\Enc(\sk,m)]\ge1-\negl(\secp).\label{eq:cSKE_correctness}
\end{align}

\paragraph{IND-CPA security (against classical query):}
For any QPT adversary $\cA$,
\begin{align}
\Pr\left[b=b':
\begin{array}{r}
\sk\gets\KeyGen(1^\secp)\\
(m_0,m_1,\st)\gets\cA^{\Enc(\sk,\cdot)}\\
b\gets\bit\\
\ct\gets\Enc(\sk,m_b)\\
b'\gets\cA^{\Enc(\sk,\cdot)}(\st,\ct)
\end{array}
\right]    
\le\frac{1}{2}+\negl(\secp),
\end{align}
where $\cA$ can only classically query $\Enc(\sk,\cdot)$.
\end{definition}

We also need IND-CPA-secure SKE for quantum messages. 
\begin{definition}[IND-CPA-Secure SKE for Quantum Messages~\cite{broadbent2015quantum,alagic2016computational}]\label{def:SKE_for_quantum}
An IND-CPA-secure secret-key encryption (SKE) scheme for quantum messages is a set of algorithms $(\KeyGen,\Enc,\Dec)$ such that
\begin{itemize}
    \item 
    $\KeyGen(1^\secp)\to \sk:$ It is a QPT algorithm that, on input the security parameter $\secp$,
    outputs a classical secret key $\sk$.
    \item 
    $\Enc(\sk,\rho)\to\ct:$
    It is a QPT algorithm that, on input $\sk$ and a quantum state $\rho$ on the register $\regM$, outputs a quantum state $\ct$ on the register $\regC$.
    \item 
    $\Dec(\sk,\ct)\to \rho:$
    It is a QPT algorithm that, on input $\sk$ and a state $\ct$ on the register $\regC$, outputs a state $\rho$ on the register $\regM$.
\end{itemize}
We require the following two properties.

\paragraph{Correctness:}
\begin{align}
\Exp_{\sk\gets\KeyGen(1^\secp)}
\|\Dec(\sk,\cdot)\circ\Enc(\sk,\cdot)-\mathrm{id}\|_{\diamond}\le\negl(\secp),
\end{align}
where $\mathrm{id}$ is the identity map,
$\Enc(\sk,\cdot)$ is a CPTP map\footnote{In this paper, we sometimes use the same notation $\Enc$ for an algorithm and a CPTP map, but we believe there is no confusion.} that runs the encryption algorithm $\Enc$ with $\sk$
on the plaintext state,
$\Dec(\sk,\cdot)$ is a CPTP map that runs the decryption algorithm $\Dec$ with $\sk$
on the ciphertext state,
and $\Dec(\sk,\cdot)\circ\Enc(\sk,\cdot)$ is the composition of $\Dec(\sk,\cdot)$ and $\Enc(\sk,\cdot)$. 
Here
$\|\cF-\cE\|_\diamond
\coloneqq\max_{\rho}\|(\cF\otimes \rm{id})(\rho)-(\cE\otimes\rm{id})(\rho)\|_1$ 
is the diamond norm between two CPTP maps $\cF$ and $\cE$ acting on $n$ qubits~\cite{watrous2018theory}, 
where the max is taken over all $2n$-qubit states $\rho$.

\paragraph{IND-CPA security:}
Let us consider the following security game:
\begin{enumerate}
    \item The challenger $\cC$ runs $\sk\gets\KeyGen(1^\secp)$.
    \item 
    The adversary $\cA$ can query the oracle $\Enc(\sk,\cdot)$. (This means that $\cA$ can apply the CPTP map $\Enc(\sk,\cdot)$
    on the register $\regM$ of any $\cA$'s state $\rho_{\regM,\regZ}$ over the registers $\regZ$ and $\regM$, 
    and get another state $\rho_{\regZ,\regC}'$ over the registers $\regZ$ and $\regC$.)
    \item $\cA$ sends two registers $\regM_0$ and $\regM_1$ to $\cC$. \label{step:challenge_query_send_def}
    \item $\cC$ chooses $b\gets\bit$ and applies the CPTP map $\Enc(\sk,\cdot)$ on $\regM_b$.
    $\cC$ then sends the output to $\cA$.\label{step:challenge_query_generate_def}
    \item
    $\cA$ can query the oracle $\Enc(\sk,\cdot)$.
    \item $\cA$ sends $b'\in\bit$ to $\cC$. 
    \item If $b=b'$, $\cC$ outputs $\top$. Otherwise, $\cC$ outputs $\bot$.
\end{enumerate}
For any QPT adversary $\cA$,
$
    \Pr[\top\gets\cC]\le\frac{1}{2}+\negl(\secp).
$
\end{definition}

\if0
\begin{remark}
    In \cite{alagic2016computational}, they require $\|\Dec(\sk,\cdot)\circ\Enc(\sk,\cdot)-I\|_{\diamond}\le\negl(\secp)$ for any $\sk$ as the correctness. However, this is not 
    \mor{kono remark no syushi wa?}
\end{remark}
\fi

\if0
\begin{remark}
In the above security definition, only a single copy of the challenge ciphertext is sent to the adversary.
However, from the standard hybrid argument, the single-copy security also implies the multi-copy security
where any polynomial number of copies of the challenge ciphertexts are sent to the adversary~\cite{broadbent2015quantum}.\mor{entangle shitetemo?}
\takashi{I think this doesn't work for the quantum message case even without entanglement because a reduction algorithm cannot copy the challenge message.
On the other hand, if we define "multi-challenge" security instead of "multi-copy" security where the adversary sends multiple pairs of challenge messages $\regM_0^i$ and $\regM_1^i$ for $i=1,2,...$ then the hybrid argument works. 
I believe this version is sufficient for our purpose. 
}
\end{remark}
\fi

The following lemma is essentially shown in \cite{broadbent2015quantum}. We give its proof in \cref{sec:proof_of_cSKE_to_qSKE}.

\begin{lemma}[IND-CPA security for classical messages implies that for quantum messages~\cite{broadbent2015quantum}]\label{lem:SKE_for_classical_imply_SKE_for_quantum}
If IND-CPA-secure SKE schemes for classical messages that are secure against QPT adversaries that query the encryption oracle classically
exist, then IND-CPA-secure SKE schemes for quantum messages exist.
\end{lemma}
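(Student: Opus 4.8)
The plan is to use the quantum one-time pad and encrypt the pad keys with the given classical-message scheme, which is the standard hybrid construction of~\cite{broadbent2015quantum}. Let $(\KeyGen',\Enc',\Dec')$ be an IND-CPA-secure SKE for classical messages (secure against classical queries), and let $n$ be the number of qubits of the quantum message. I would set $\KeyGen\coloneqq\KeyGen'$; to encrypt an $n$-qubit state $\rho$ on $\regM$, sample $p=(p_1,p_2)\gets\bit^{2n}$, apply the Pauli one-time pad to get $\sigma\coloneqq X^{p_1}Z^{p_2}\,\rho\,Z^{p_2}X^{p_1}$, and output $\ct\coloneqq(\Enc'(\sk,p),\sigma)$ on $\regC$; to decrypt, recover $p$ from the classical part via $\Dec'$ and apply $Z^{p_2}X^{p_1}$ to $\sigma$. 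Correctness is immediate: by the correctness of the classical scheme, $p$ is recovered except with negligible probability (averaged over $\sk$ and the uniform $p$), and conditioned on recovering $p$ the one-time pad is inverted perfectly, so the diamond-norm error is negligible in expectation over $\sk$, as required by \cref{def:SKE_for_quantum}.

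For security I would argue via a short hybrid sequence over the challenge ciphertext. Let $P_b$ denote the probability that $\cA$ outputs $1$ when $\regM_b$ is encrypted in the real challenge, so that $\Pr[\top\gets\cC]=\tfrac12+\tfrac12(P_1-P_0)$ and it suffices to show $|P_1-P_0|\le\negl(\secp)$. Introduce the modified experiment in which the challenge ciphertext is $(\Enc'(\sk,0^{2n}),\,X^{p_1}Z^{p_2}\rho_{\regM_b}Z^{p_2}X^{p_1})$, i.e.\ the genuine pad keys are dropped from the classical part and $0^{2n}$ is encrypted in their place; let $Q_b$ be the corresponding output probability. Every encryption-oracle query of $\cA$ (both before and after the challenge) is answered by sampling fresh pad keys, applying the Pauli pad, and classically querying $\Enc'$ on those keys; since $\Enc(\sk,\cdot)=\Exp_{p}[\Enc'(\sk,p)\otimes X^{p_1}Z^{p_2}(\cdot)Z^{p_2}X^{p_1}]$, this perfectly implements the CPTP map $\Enc(\sk,\cdot)$ using only classical oracle access.

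The inequalities $|P_0-Q_0|\le\negl(\secp)$ and $|P_1-Q_1|\le\negl(\secp)$ each follow from a reduction to the classical IND-CPA game: the reduction $\cB$ simulates the quantum-message challenger for $\cA$, samples $p$, one-time-pads $\regM_b$, submits the pair $(p,\,0^{2n})$ to its classical challenger, and forwards the received ciphertext together with $\sigma$. A direct computation gives $\Pr[\cB\text{ wins}]=\tfrac12+\tfrac12(Q_b-P_b)$, so $\cB$'s classical IND-CPA advantage is $\tfrac12|P_b-Q_b|$, which is negligible by assumption. The crucial remaining step is $Q_0=Q_1$: once the pad keys are removed, averaging the Pauli pad over uniform $p$ turns the map on $\regM_b$ into the completely depolarizing channel $\rho\mapsto\Tr[\rho]\,I/2^n$, so the joint state on $(\regZ,\regC)$ becomes $\rho_{\regZ}\otimes I_{\regC}/2^n$, independent of whether $\regM_0$ or $\regM_1$ was selected (and of the discarded register), while the classical part $\Enc'(\sk,0^{2n})$ is likewise independent of the choice. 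Hence $\cA$'s entire view, including answers to post-challenge queries, is identical in the two modified experiments, giving $Q_0=Q_1$. Combining, $|P_1-P_0|\le|P_1-Q_1|+|Q_1-Q_0|+|Q_0-P_0|\le\negl(\secp)$.

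The main obstacle is the correct handling of entanglement between the challenge registers $\regM_0,\regM_1$ and the side register $\regZ$ that $\cA$ retains: the identity $Q_0=Q_1$ relies precisely on the fact that, once the pad keys are hidden, one-time-padding a register completely decouples it from $\regZ$ (the depolarizing channel destroys all correlations), so the adversary's reduced state carries no information about the challenge bit even though the messages may be maximally entangled with its memory. The second point requiring care is that the encryption oracle, which in general acts on quantum inputs and is available both before and after the challenge, must be simulated using only classical queries to $\Enc'$; this works because the pad keys for each oracle answer are sampled by the reduction itself and are never needed again.
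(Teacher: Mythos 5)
Your proposal is correct and follows essentially the same route as the paper's proof: the identical construction (Pauli one-time pad with the pad keys encrypted by the classical scheme), a hybrid that swaps the encrypted pad keys for an encryption of $0^{2n}$ via the classical IND-CPA security, and then the Pauli-mixing/depolarizing argument showing the padded challenge register decouples from the adversary's side information, making the view independent of $b$. The only differences are bookkeeping (you condition on $b$ and define $P_b,Q_b$ where the paper keeps $b$ random inside its hybrids, and the paper spells out the correctness average argument in more detail), so no substantive gap exists.
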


The following lemma can be shown with the standard hybrid argument~\cite{broadbent2015quantum}.
\begin{lemma}[IND-CPA-multi security~\cite{broadbent2015quantum}]\label{lem:CPA_multi}
    Let $(\KeyGen,\Enc,\Dec)$ be an IND-CPA-secure SKE scheme for quantum messages. 
    Let $t$ be a polynomial.
    Let us consider the security game that is the same as that of \cref{def:SKE_for_quantum} except for the following two modifications.
    \begin{itemize}
        \item In step \ref{step:challenge_query_send_def}, $\cA$ sends two registers $\regM'_0$ and $\regM'_1$ to $\cC$.
        Here, $\regM'_0$ consists of $t$ registers $\{\regM^i_0\}_{i\in[t]}$, and
        $\regM'_1$ consists of $t$ registers $\{\regM^i_1\}_{i\in[t]}$. 
        For each $i\in[t]$ and $b\in\bit$, $|\regM^i_b|=|\regM_b|$, where $|\regA|$ is the size (i.e., the number of qubits) of the register $\regA$.
        \item In step \ref{step:challenge_query_generate_def}, $\cC$ chooses $b\gets\bit$ and applies the CPTP map $\Enc(\sk,\cdot)$ on each $\regM^i_b$ for $i\in[t]$. $\cC$ then sends the all outputs to $\cA$.
    \end{itemize}
    Then, in this modified game,
    $
        \Pr[\top\gets\cC]\le\frac{1}{2}+\negl(\secp)
    $
    for any QPT adversary $\cA$ and any polynomial $t$.
\end{lemma}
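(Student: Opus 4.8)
The plan is to carry out a standard hybrid argument over the $t$ challenge slots, reducing each neighboring pair of hybrids to the single-challenge IND-CPA security of \cref{def:SKE_for_quantum}. For $j\in\{0,1,\dots,t\}$ I would define a hybrid game $\hybi{j}$ that is identical to the modified game of the statement except that, in the challenge step, $\cC$ applies $\Enc(\sk,\cdot)$ to $\regM^i_1$ for every $i\le j$ and to $\regM^i_0$ for every $i>j$, tracing out the complementary registers exactly as the single-challenge challenger discards the register it does not encrypt. Then $\hybi{0}$ is precisely the $b=0$ branch of the game (all slots encrypt $\regM^i_0$) and $\hybi{t}$ is the $b=1$ branch (all slots encrypt $\regM^i_1$). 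Writing $p_j$ for the probability that $\cA$ outputs $1$ in $\hybi{j}$, one checks $\Pr[\top\gets\cC]=\frac{1}{2}+\frac{1}{2}(p_t-p_0)$, so by the triangle inequality it suffices to show that $|p_j-p_{j-1}|\le\negl(\secp)$ for every $j\in[t]$.

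For the neighboring step I would build a single-challenge adversary $\cB$ that embeds the multi-challenge adversary $\cA$. Concretely, $\cB$ forwards every encryption-oracle query of $\cA$ to its own oracle $\Enc(\sk,\cdot)$, relaying both the pre-challenge and the post-challenge queries into the corresponding oracle phases of the single-challenge game. When $\cA$ submits the $t$ register pairs $(\regM^i_0,\regM^i_1)_{i\in[t]}$, $\cB$ sends $(\regM^j_0,\regM^j_1)$ to its own challenger as the single challenge, receives the response ciphertext, and produces the remaining $t-1$ ciphertexts itself by querying its encryption oracle on $\regM^i_1$ for $i<j$ and on $\regM^i_0$ for $i>j$ (discarding the unused registers in each slot). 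It assembles the $t$ ciphertexts in order, hands them to $\cA$, continues to relay queries, and outputs $\cA$'s guess. When the hidden single-challenge bit is $0$ the view of $\cA$ is exactly $\hybi{j-1}$, and when it is $1$ it is exactly $\hybi{j}$; hence $\cB$'s distinguishing advantage equals $|p_j-p_{j-1}|$, which is therefore $\negl(\secp)$. Summing over the $t=\poly(\secp)$ neighboring steps bounds the overall advantage by $t\cdot\negl(\secp)=\negl(\secp)$.

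The one point that needs care — and the reason the claim is not completely immediate for quantum messages — is that no quantum state may be copied. The argument avoids this because each register is touched exactly once: $\cB$ merely holds the registers it receives from $\cA$ and feeds each into the encryption oracle at most once, tracing out the complementary register in every slot, so no cloning of the challenge messages is required. This is also why I insist that the hybrid challenger \emph{discard} the register it does not encrypt, matching the single-challenge game; otherwise the reduction would need a register the single-challenge challenger never reveals. I expect the main (but still routine) obstacle to be purely the bookkeeping: verifying that $\cB$ is QPT (it runs $\cA$ once plus $t-1$ extra oracle calls) and that the oracle interface is faithfully simulated across the pre- and post-challenge phases, together with the constant-factor accounting in the hybrid telescoping.
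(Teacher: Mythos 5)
Your proposal is correct and is essentially the paper's own approach: the paper gives no explicit proof of this lemma, asserting only that it ``can be shown with the standard hybrid argument''~\cite{broadbent2015quantum}, and your slot-by-slot hybrids $\hybi{0},\dots,\hybi{t}$ with the single-challenge reduction (forwarding oracle queries, embedding slot $j$ as the challenge, encrypting the remaining slots via the encryption oracle, and never touching any register twice) is precisely that argument. The one refinement worth noting for full rigor is that, since $t$ grows with $\secp$, the telescoping step is cleanest if the reduction samples the hybrid index $j\gets[t]$ uniformly at random, so that a single QPT adversary has advantage $\frac{1}{2t}\left|p_t-p_0\right|$ against the single-challenge game, rather than invoking a separate negligible bound for each fixed $j$.
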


\if0
\begin{definition}[Private-Key Quantum Money Schemes \cite{C:JiLiuSon18,STOC:AarChr12}]\label{def:private-key_money}
A private-key quantum money scheme is a set of algorithms 
$(\KeyGen,\Mint,\Ver)$ such that 
\begin{itemize}
\item
$\KeyGen(1^\secp)\to k:$
It is a QPT algorithm that, on input the security parameter $\secp$,
outputs a classical secret key $k$.
\item
$\Mint(k)\to \$_k:$
It is a QPT algorithm that, on input
$k$, outputs an $m$-qubit quantum state $\$_k$.
\item
$\Ver(k,\rho)\to\top/\bot:$
It is a QPT algorithm that, on input $k$ and a quantum
state $\rho$, outputs $\top/\bot$.
\end{itemize}
We require the following correctness and security.

\paragraph{Correctness:}
\begin{eqnarray*}
\Pr[\top\gets\Ver(k,\$_k):k\gets\KeyGen(1^\secp),\$_k\gets\Mint(k)]
\ge1-\negl(\secp).
\end{eqnarray*}

\paragraph{Security:}
For any QPT adversary $\cA$ and any polynomial $t$,
\begin{eqnarray*}
\Pr[\mathsf{Count}(k,\xi)\ge t+1
:k\gets\KeyGen(1^\secp),\$_k^{\otimes t}\gets\Mint(k)^{\otimes t},\xi\gets\cA(\$_k^{\otimes t})]
\le\negl(\secp),
\end{eqnarray*}
where $\xi$ is a quantum state on $\ell$ registers, $\regR_1,...,\regR_\ell$, each of which is of $m$ qubits,
and $\mathsf{Count}$ is the following QPT algorithm:
on input $\xi$, it runs $\top/\bot\gets\Ver(k,\xi_j)$ for each $j\in[\ell]$, where
$\xi_j\coloneqq{\rm Tr}_{\regR_1,...,\regR_{j-1},\regR_{j+1},...,\regR_\ell}(\xi)$,
and outputs the total number of $\top$. Here, $\$^{\otimes t}_k\gets\Mint(k)^{\otimes}$
    means that the $\Mint$ algorithm is run $t$ times.
\end{definition}
\fi

\begin{definition}[One-way States Generators (OWSGs)~\cite{cryptoeprint:2022/1336,C:MorYam22}]
    A one-way states generator (OWSG) is a set of algorithms $(\KeyGen,\StateGen,\Ver)$ such that
    \begin{itemize}
        \item $\KeyGen(1^\secp)\to k:$ It is a QPT algorithm that, on input the security parameter $\secp$, outputs a classical key $k$.
        \item $\StateGen(k)\to\phi_k:$ It is a QPT algorithm that, on input $k$, outputs a quantum state $\phi_k$.
        \item $\Ver(k',\phi_k)\to\top/\bot:$ It is a QPT algorithm that, on input $\phi_k$ and a bit string $k'$, outputs $\top$ or $\bot$.
    \end{itemize}
    We require the following correctness and security.
    
    \paragraph{Correctness:}
    \begin{align}
        \Pr[\top\gets\Ver(k,\phi_k):k\gets\KeyGen(1^\secp),\phi_k\gets\StateGen(k)]\ge1-\negl(\secp).
    \end{align}

    \paragraph{Security:}
    For any QPT adversary $\cA$ and any polynomial $t$,
    \begin{align}
        \Pr[\top\gets\Ver(k',\phi_k):k\gets\KeyGen(1^\secp),\phi^{\otimes t}_k\gets\StateGen(k)^{\otimes t},k'\gets\cA(1^\secp,\phi^{\otimes t}_k)]\le\negl(\secp).
    \end{align}
    Here, $\phi^{\otimes t}_k\gets\StateGen(k)^{\otimes t}$
    means that the $\StateGen$ algorithm is run $t$ times.
\end{definition}

\if0
\subsection{EUF-CMA-Secure MAC}
\begin{definition}[EUF-CMA-secure MAC]
An EUF-CMA-secure MAC is a set $(\KeyGen,\Tag,\Ver)$ of QPT algorithms such that
\begin{itemize}
    \item 
    $\KeyGen(1^\secp)\to\sigk:$
    It is a QPT algorithm that, on input the security parameter $\secp$, outputs a classical key $\sigk$.
    \item 
    $\Tag(\sigk,m)\to\tau:$
    It is a QPT algorithm that, on input $\sigk$ and a message $m$, outputs a tag $\tau$.
    \item 
    $\Ver(\sigk,m,\tau)\to\top/\bot:$
    It is a QPT algorithm that, on input $\sigk$, $m$, and $t$, output $\top/\bot$.
\end{itemize}
   We require the following two properties.
   
   \paragraph{Correctness:}
   For any $m$,
   \begin{align}
    \Pr\left[\top\gets\Ver(\sigk,\tau,m):
    \begin{array}{rr}
    \sigk\gets\KeyGen(1^\secp)\\
    \tau\gets\Tag(\sigk,m)
    \end{array}
    \right]\ge1-\negl(\secp). 
   \end{align}

   \paragraph{EUF-CMA security:}
   For any QPT adversary $\cA$,
   \begin{align}
    \Pr\left[\top\gets\Ver(\sigk,m^*,\tau^*):
    \begin{array}{rr}
    \sigk\gets\KeyGen(1^\secp)\\
    (m^*,\tau^*)\gets\cA^{\mathsf{Tag}(\sigk,\cdot)}
    \end{array}
    \right]\le\negl(\secp), 
   \end{align}
   where $\cA$ is not allowed to query $m^*$.

\if0
   \paragraph{EUF-CMA security:}
   For any QPT adversary $\cA$,
   \begin{align}
    \Pr\left[\tau^*=\mathsf{Tag}(\sk,m^*)\wedge (m^*,\tau^*)\not\in\{m_i,\tau_i\}_{i\in[t]}:
    \begin{array}{rr}
    \sk\gets\KeyGen(1^\secp)\\
    (m^*,\tau^*)\gets\cA^{\mathsf{Tag}(\sk,\cdot)}
    \end{array}
    \right]\le\negl(\secp), 
   \end{align}
   where $\cA$ queried $m_i$ and got $\tau_i$ for $i\in[t]$.
   \fi
   
\end{definition}

\if0
\begin{definition}[EUF-CMA-secure deterministic MAC]
We say that an EUF-CMA-secure MAC is deterministic if $\Tag$ is a QPT deterministic algorithm.
In that case, without loss of generality, we can assume that
$\Ver(\sk,m,\tau)$ is to check $\tau=\Tag(\sk,m)$ or not.
\mor{Katz textbook says that all real-world MAC are deterministic. It also show that if the MAC is deterministic,
the strong security is automatically satisfied.}
\end{definition}

\begin{definition}[EUF-CMA-secure MAC with quantum tags]
We say that an EUF-CMA-secure MAC is with quantum tags
if tags $\tau$ are quantum states.
\end{definition}
\fi

\fi

\if0
\subsection{Digital Signature Schemes}
\begin{definition}[EUF-CMA-Secure Digital Signatures]
An EUF-CMA-secure digital signature scheme is a set $(\KeyGen,\Sign,\Ver)$ of QPT algorithms such that
\begin{itemize}
    \item 
    $\KeyGen(1^\secp)\to(\sigk,\vk):$
    It is a QPT algorithm that, on input the security parameter $\secp$, outputs a classical signing key $\sigk$ and a classical verification
    key $\vk$.
    \item 
    $\Sign(\sigk,m)\to\sigma:$
    It is a QPT algorithm that, on input $\sigk$ and a message $m$, outputs a classical signature $\sigma$.
    \item 
    $\Ver(\vk,\sigma,m)\to\top/\bot:$
    It is a QPT algorithm that, on input $\vk$, $\sigma$, and $m$, outputs $\top/\bot$.
\end{itemize}
   We require the following two properties.

   \paragraph{Correctness:}
   For any message $m$,
   \begin{align}
    \Pr[\top\gets\Ver(\vk,\sigma,m):(\sigk,\vk)\gets\KeyGen(1^\secp),\sigma\gets\Sign(\sigk,m)]\ge1-\negl(\secp).   
   \end{align}
   
   \paragraph{EUF-CMA security:}
   For any QPT adversary $\cA$,
   \begin{align}
    \Pr\left[\top\gets\Ver(\vk,\sigma,m^*):
    \begin{array}{rr}
    (\sigk,\vk)\gets\KeyGen(1^\secp)\\
    (m^*,\sigma)\gets\cA^{\Sign(\sigk,\cdot)}(\vk)
    \end{array}
    \right]\le\negl(\secp), 
   \end{align}
   where $\cA$ is not allowed to query $m^*$ to the signing oracle.
\end{definition}
\fi

\if0
\begin{remark}
Note that we require that the signing algorithm is deterministic.    
\end{remark}

\begin{remark}
Our main result, \cref{thm:moneyfromDS}, holds even if DSs are quantum.
In particular, $\KeyGen(1^\secp)$ could be applying a unitary $U$ such that
$U\ket{0...0}=\sum_{\sigk,\vk}c_{\sigk,\vk}\ket{\sigk}\ket{\vk}\ket{{\rm junk}_{\sigk,\vk}}$ on $\ket{0...0}$
and then measuring the first and second registers,
which will not imply OWFs.
Or, the verification key $\vk$ can be quantum states like ~\cite{C:MorYam22}.
(In that case, it is natural to have two key generation algorithms, $\SKGen$ that outputs $\sigk$ on input $\secp$
and $\PKGen$ that outputs quantum $\vk$ on input $\sigk$.)
\end{remark}
\fi

\if0
\begin{definition}[Quantum digital signatures (QDSs)~\cite{cryptoeprint:2021/1691}]
\label{definition:signatures}
A quantum digital signature (QDS) scheme is a set of algorithms 
$(\SKGen,\PKGen,\Sign,\Ver)$ such that 
\begin{itemize}
\item
$\SKGen(1^\secp)\to\sk:$ It is a QPT algorithm that, on input the security parameter $\lambda$, outputs a classical secret key $\sk$.
\item
$\PKGen(\sk)\to\pk:$ It is a QPT algorithm that, on input $\sk$, outputs 
a quantum public key $\pk$. 
\item
$\Sign(\sk,m)\to\sigma:$ It is a classical deterministic algorithm that, on input $\sk$ and a message $m$, outputs a classical signature $\sigma$.
\item
$\Ver(\pk,m,\sigma)\to\top/\bot:$ It is a QPT algorithm that, on input $\pk$, $m$, and $\sigma$, outputs $\top/\bot$.
\end{itemize}

We require the correctness and the security as follows.

\paragraph{\bf Correctness:}
For any $m$,
\begin{eqnarray*}
\Pr\left[
\top\leftarrow\Ver(\pk,m,\sigma):
\begin{array}{l}
\sk\leftarrow\SKGen(1^\lambda),\\
\pk\leftarrow \PKGen(\sk),\\
\sigma\leftarrow\Sign(\sk,m)
\end{array}
\right]
\ge1-\negl(\secp).
\end{eqnarray*}

\if0
\paragraph{\bf (One-time) security:}
Let us consider the following security game, $\mathsf{Exp}$, between a challenger $\cC$ and a QPT adversary $\cA$:
\begin{enumerate}
\item
$\cC$ runs $\sk\gets\KeyGen(1^\secp)$. 
\item
$\cC$ runs $\pk\gets\StateGen(\sk)$ $t+1$ times, and
sends $\pk^{\otimes t}$ to $\cA$.
\item
$\cA$ sends a message $m$ to $\cC$.
\item
$\cC$ runs $\sigma\gets\mathsf{Sign}(\sk,m)$, and sends $\sigma$ to $\cA$.
\item
$\cA$ sends $\sigma'$ and $m'$ to $\cC$.
\item
$\cC$ runs $v\gets\mathsf{Ver}(\pk,m',\sigma')$.
If $m'\neq m$ and $v=\top$, the output of the game is 1.
Otherwise, the output of the game is 0.
\end{enumerate}
For any QPT adversary $\cA$ and any polynomial $t$,
\begin{eqnarray*}
\Pr[\mathsf{Exp}=1]\le\negl(\secp).
\end{eqnarray*}
\fi

\paragraph{\bf $q$-time security:} 
Let us consider the following security game, $\mathsf{Exp}$, between a challenger $\cC$ and a QPT adversary $\cA$:
\begin{enumerate}
\item
$\cC$ runs $\sk\gets\SKGen(1^\secp)$. 
\item
$\cC$ runs $\pk\gets\PKGen(\sk)$ $t$ times, and
sends $\pk^{\otimes t}$ to $\cA$.
\item For $i=1$ to $q$, do:
\begin{enumerate}
\item
$\cA$ sends a message $m^{(i)}$ to $\cC$.
\item
$\cC$ runs $\sigma^{(i)}\gets\mathsf{Sign}(\sk,m^{(i)})$, and sends $\sigma^{(i)}$ to $\cA$.
\end{enumerate}
\item
$\cA$ sends $\sigma'$ and $m'$ to $\cC$.
\item
$\cC$ runs $\pk\gets\PKGen(\sk)$ and $v\gets\mathsf{Ver}(\pk,m',\sigma')$.
If $m'\notin \{m^{(1)},\ldots,m^{(q)}\}$ and $v=\top$, the output of the game is 1.
Otherwise, the output of the game is 0.
\end{enumerate}
For any QPT adversary $\cA$ and any polynomial $t$,
$
\Pr[\mathsf{Exp}=1]\le\negl(\secp).
$
\end{definition}

\begin{remark}
Note that we require that the signing algorithm is deterministic. The construction from OWSGs satisfies it.   
\end{remark}
\fi

\if0
\subsection{IND-CPA-Secure SKE}
\begin{definition}[IND-CPA-Secure SKE]
A secret-key encryption scheme is a set of algorithms $(\KeyGen,\Enc,\Dec)$ such that
\begin{itemize}
    \item 
    $\KeyGen(1^\secp)\to k:$ It is a QPT algorithm that, on input the security parameter $\secp$,
    outputs a classical secret key $k$.
    \item 
    $\Enc(k,m)\to\ct:$
    It is a QPT algorithm that, on input $k$ and a plaintext $m$, outputs ciphertext $\ct$.
    \item 
    $\Dec(k,\ct)\to m:$
    It is a QPT algrithm that, on input $k$ and $\ct$, outputs $m$.
\end{itemize}
We require the following two properties.

\paragraph{Correctness:}
\begin{align}
\Pr[m\gets\Dec(k,\ct):k\gets\KeyGen(1^\secp),\ct\gets\Enc(k,m)]\ge1-\negl(\secp).    
\end{align}

\paragraph{IND-CPA security:}
For any QPT adversary $\cA$,
\begin{align}
\Pr\left[b=b':
\begin{array}{l}
k\gets\KeyGen(1^\secp)\\
(m_0,m_1,\st)\gets\cA^{\Enc(k,\cdot)}\\
b\gets\bit\\
\ct\gets\Enc(k,m_b)\\
b'\gets\cA^{\Enc(k,\cdot)}(\st,\ct)
\end{array}
\right]    
\le\frac{1}{2}+\negl(\secp).
\end{align}
\end{definition}
\fi

\if0
Private-key quantum money schemes are defined as follows. 
\begin{definition}[Private-Key Quantum Money Schemes \cite{C:JiLiuSon18,STOC:AarChr12}]
A private-key quantum money scheme is a set of algorithms 
$(\KeyGen,\Mint,\Ver)$ such that 
\begin{itemize}
\item
$\KeyGen(1^\secp)\to k:$
It is a QPT algorithm that, on input the security parameter $\secp$,
outputs a classical secret key $k$.
\item
$\Mint(k)\to \$_k:$
It is a QPT algorithm that, on input
$k$, outputs an $m$-qubit quantum state $\$_k$.
\item
$\Ver(k,\rho)\to\top/\bot:$
It is a QPT algorithm that, on input $k$ and a quantum
state $\rho$, outputs $\top/\bot$.
\end{itemize}
We require the following correctness and security.

\paragraph{\bf Correctness:}
\begin{eqnarray*}
\Pr[\top\gets\Ver(k,\$_k):k\gets\KeyGen(1^\secp),\$_k\gets\Mint(k)]
\ge1-\negl(\secp).
\end{eqnarray*}

\paragraph{\bf Security:}
For any QPT adversary $\cA$ and any polynomial $t$,
\begin{eqnarray*}
\Pr[\mathsf{Count}(k,\xi)\ge t+1
:k\gets\KeyGen(1^\secp),\$_k\gets\Mint(k),\xi\gets\cA(\$_k^{\otimes t})]
\le\negl(\secp),
\end{eqnarray*}
where $\xi$ is a quantum state on $\ell$ registers, $\regR_1,...,\regR_\ell$, each of which is of $m$ qubits,
and $\mathsf{Count}$ is the following QPT algorithm:
on input $\xi$, it runs $\top/\bot\gets\Ver(k,\xi_j)$ for each $j\in[1,2,...,\ell]$, where
$\xi_j\coloneqq{\rm Tr}_{\regR_1,...,\regR_{j-1},\regR_{j+1},...,\regR_\ell}(\xi)$,
and outputs the total number of $\top$.
\end{definition}
\fi

\if0
\begin{definition}[Statistical one-time secure quantum money]
    A statistical one-time secure quantum money scheme is a set of QPT algorithms $(\KeyGen,\Mint,\Ver)$ such that
    \begin{itemize}
    \item
    $\KeyGen(1^\secp)\to k:$
    It is a QPT algorithm that, on input the security parameter $\secp$,
    outputs a classical secret key $k$.
    \item
    $\Mint(k)\to \$_k:$
    It is a QPT algorithm that, on input
    $k$, outputs an $m$-qubit quantum state $\$_k$.
    \item
    $\Ver(k,\rho)\to\top/\bot:$
    It is a QPT algorithm that, on input $k$ and a quantum
    state $\rho$, outputs $\top/\bot$.
    \end{itemize}
    We require the following correctness and security.
    \paragraph{Correctness}
    \begin{align}
        \Pr[\top\gets\Ver(k,\$_k):k\gets\KeyGen(1^{\secp}), \$_k\gets\Mint(k)]\ge 1-\negl(\secp)
    \end{align}
    \paragraph{Statistical one-time security}
    For any unbounded quantum adversary $\cA$,
    \begin{align}
        \Pr\left[
    \top\gets\Ver(k,\rho_{\regA})
    \wedge
    \top\gets\Ver(k,\rho_{\regB})
    :
    \begin{array}{rr}
    k\gets\KeyGen(1^\secp)\\
    \$_k\gets\Mint(k)\\
    \rho_{\regA,\regB}\gets\cA(\$_k)
    \end{array}
    \right]\le\negl(\secp).
    \end{align}
\end{definition}
\fi

\if0
\begin{theorem}\label{thm:stat_money}
    Statistical one-time secure quantum money schemes exist.
\end{theorem}

\begin{remark}
    For instance, Wiesner’s quantum money is a statistical one-time secure quantum money scheme.
\end{remark}
\fi

\if0
\subsection{Statistical one-time secure MACs}

In this section, we define statistical one-time secure MACs.

\begin{definition}[Statistical one-time secure MACs]
    A statistical one-time secure MACs is a set of PPT algorithms $(\KeyGen,\Tag,\Ver)$ such that
    \begin{itemize}
        \item $\KeyGen(1^\secp)\to k:$
        It is a PPT algorithm that, on input the security parameter $\secp$, outputs a classical secret key $k$.
        \item $\Tag(k,m)\to\tau:$
        It is a PPT algorithm that, on input a secret key $k$ and a classical message $m$, outputs a claasical tag $\tau$.
        \item $\Ver(k,m,\tau')\to\top/\bot:$
        It is a PPT algorithm that, on input $k$, $m$ and $\tau'$, outputs $\top$ or $\bot$.
    \end{itemize}
     We require the following correctness and security.
    
    \paragraph{Correctness}
    For any $m$,
    \begin{align}
        \Pr[\top\gets\Ver(k,m,\tau):k\gets\KeyGen(1^{\secp}), \tau\gets\Tag(k,m)]\ge 1-\negl(\secp)
    \end{align}

    \paragraph{Statistical one-time security}
    For any $m$ and unbounded adversary $\cA$,
    \begin{align}
        \Pr[\top\gets\Ver(k,m',\tau')\wedge m\neq m':k\gets\KeyGen(1^\secp), \tau\gets\Tag(k,m), (m',\tau')\gets\cA(m,\tau)]\le\negl(\secp).
    \end{align}
\end{definition}

\begin{theorem}\label{thm:stat_MAC}
    Statistical one-time secure MACs exist.
\end{theorem}

\shogo{atode reference.}
\fi
\section{Unpredictable State Generators}
\label{sec:UPSG}

\subsection{Definition}
\label{sec:definition_UPSG}
In this subsection, we define UPSGs.
The syntax is given as follows.

\begin{definition}[Unpredictable States Generators (UPSGs)]
\label{def:UPSGs}
An unpredictable states generator is a set $(\KeyGen,\Eval)$ of QPT algorithms such that
\begin{itemize}
    \item 
    $\KeyGen(1^\secp)\to k:$
    It is a QPT algorithm that, on input the security parameter $\secp$, outputs a classical key $k$.
    \item 
    $\Eval(k,x)\to (x,\phi_k(x)):$
    It is a QPT algorithm that, on input $k$ and a bit string $x$, outputs $x$ and a quantum state $\phi_k(x)$.
\end{itemize}
\end{definition}

In general, $\phi_k(x)$ could be mixed states, but in this paper, we restrict them to pure states.

The security, which we call unpredictability, roughly means that
no QPT adversary (who can {\it quantumly} query to $\Eval(k,\cdot)$) can output $(x^*,\rho)$ such that $x^*$ was {\it not queried} and $\rho$ is close to $|\phi_k(x^*)\rangle$.
In order to formally define it, we have to clarify what we mean by ``quantumly query'' and ``not queried before''.

\paragraph{Quantum query.}
We assume that $|\phi_k(x)\rangle\gets\Eval(k,x)$ is the following QPT algorithm:
on input $k$ and $x$, it applies a unitary $U_k$ on $\ket{x}_{\regX}\ket{0...0}_{\regY,\regZ}$ 
to generate $\ket{x}_{\regX}\ket{\phi_k(x)}_{\regY}\ket{{\rm junk}_k}_{\regZ}$ and outputs the $\regX$ and $\regY$ registers.
Note that it is not the most general case. 
First, as we have mentioned, 
we assume that the output $|\phi_k(x)\rangle$ is pure. 
Second, in general, the junk state $\ket{{\rm junk}_k}$ could depend on $x$,
but we here assume that it depends only $k$. These two restrictions seem to be necessary to
well define the quantum query.

With such $\Eval$, the quantum query to the oracle $\Eval(k,\cdot)$ means the following:
\begin{enumerate}
    \item A state $\sum_x \alpha_x \ket{x}_{\regX}\ket{\xi_x}$ is input to the oracle, where $\{\alpha_x\}_x$ are any complex coefficients and $\{\ket{\xi_x}\}_x$ are any states.
    \item The oracle adds the ancilla state $\ket{0...0}_{\regY,\regZ}$ and applies $U_k$ on the registers $\regX,\regY,\regZ$ 
    of $\sum_x \alpha_x \ket{x}_{\regX}\ket{0...0}_{\regY,\regZ}\ket{\xi_x}$ to generate $\sum_x \alpha_x \ket{x}_{\regX}\ket{\phi_k(x)}_{\regY}\ket{{\rm junk}_k}_{\regZ}\ket{\xi_x}$.
    \item The oracle removes the junk register $\regZ$ and outputs the state $\sum_x\alpha_x\ket{x}_{\regX}\ket{\phi_k(x)}_{\regY}\ket{\xi_x}$.
\end{enumerate}

\paragraph{Not queried.}
We define the word ``not queried'' as follows.
Assume that $\cA$ queries the oracle $q$ times.
For each $i\in[q]$, let $\ket{\psi_i}$ be the {\it entire} $\cA$'s state immediately before its $i$th query to the oracle. 
(Without loss of generality, we can assume that $\cA$ postpones all measurements to the last step, and then $\cA$'s entire state is
always pure.)
We say that $x^*$ is not queried if 
$\langle\psi_i|(|x^*\rangle\langle x^*|_{\regX}\otimes I)|\psi_i\rangle=0$ for all $i\in[q]$.
Here, for each $i\in[q]$, $\ket{\psi_i}=\sum_x c_x |x\rangle_\regX\otimes|\eta_x\rangle$.

\if0
\begin{definition}[Coherent query]
\label{def:coherent query}
    Let $G$ be a quantum algorithm that on input a classical bit string $x$, outputs a quantum state $\ket{\psi_x}$. We define a quantum algorithm $\cA$ that can make coherent queries to $G$ as follows: for each query, $\cA$ inputs some state $\sum_x\alpha_x\ket{x}$ to $G$ and receives $\sum_x\alpha_x\ket{x}\ket{\psi_x}$.
\end{definition}

\begin{remark}
    In \cref{def:coherent query}, we implicitly assume that $G$ is a clean-output algorithm. In other words, let $U$ be the purification of $G$, we only consider the case when $U\ket{x}\ket{0...0}=\ket{\psi_x}\ket{0...0}$ for all possible input $x$.
\end{remark}
\fi

\if0
\begin{definition}[Non-zero weight query]
    Let $G$ be a quantum algorithm that on input a classical bit string $x$, outputs a quantum state $\ket{\psi_x}$. Let $\cA^G$ be a query algorithm. We say that $\cA^G$ is not allowed to make coherent queries $y$ with non-zero weight if, for each query to $G$, $\cA$'s input state is restricted as $\sum_{x\neq y}\alpha_x\ket{x}$.
\end{definition}

\subsection{Unpredictable states generators}
We define a quantum version of unpredictable functions~\cite{C:NaoRei98},
which we call unpredictable states generators (UPSGs).
\fi

Now we define the unpredictability.
\begin{definition}[Unpredictability]\label{def:unpredictability}
   Let us consider the following security game:
   \begin{enumerate}
       \item 
       The challenger $\cC$ runs $k\gets\KeyGen(1^\secp)$.
       \item 
       The adversary $\cA^{\Eval(k,\cdot)}(1^\secp)$ outputs a bit string $x^*$ and a quantum state $\rho$, and sends them to $\cC$. 
       Here, $\cA$ can make quantum queries to $\Eval(k,\cdot)$. $x^*$ should not be queried by $\cA$.
       \item 
      $\cC$ projects $\rho$ onto $\ket{\phi_k(x^*)}$.
      If the projection is successful, $\cC$ outputs $\top$.
      Otherwise, $\cC$ outputs $\bot$.
   \end{enumerate}
   For any QPT adversary $\cA$, $\Pr[\top\gets\cC]\le\negl(\secp)$.
\end{definition}

\begin{remark}\label{rem:UPSG}
Note that the projection of $\rho$ onto $\ket{\phi_k(x^*)}$ can be done as follows:
\begin{enumerate}
    \item 
    Prepare $|x^*\rangle\langle x^*|\otimes\rho\otimes|{\rm junk}_k\rangle\langle{\rm junk}_k|$.
    \item 
    Apply $U^{\dag}_{k}$ on $\ket{x^*}\bra{x^*}\otimes\rho\otimes \ket{{\rm junk}_{k}}\bra{{\rm junk}_{k}}$.
    \item 
    Measure all qubits in the computational basis. If the result is $x^*\|0...0$, the projection is successful.
    Otherwise, the projection is failed.
\end{enumerate}
\end{remark}

\if0
We next define the selective unpredictability as follows.
\begin{definition}[Selective Unpredictability]
   Let us consider the following security game:
   \begin{enumerate}
       \item 
       The challenger $\cC$ runs $k\gets\KeyGen(1^\secp)$.
       \item 
      $\cC$ chooses $x_1,...,x_r\gets\bit^\ell$.
      \item
      $\cC$ sends $\ket{\phi_k(x_1)},...,\ket{\phi_k(x_r)}$ to the adversary $\cA$.
       \item 
       $\cA$ sends $\{x_i^*,\rho_i\}_{i\in[q+1]}$ to $\cC$.
       \item 
       If $x_i^*=x_j^*$ for some $i\neq j$, $\cC$ outputs $\bot$ and aborts. 
       \item
      $\cC$ projects $\rho_i$ onto $\ket{\phi_k(x^*)}$ for each $i\in[q+1]$.
      If all projections are successful, $\cC$ outputs $\top$.
      Otherwise, $\cC$ outputs $\bot$.
   \end{enumerate}
   For any QPT adversary $\cA$ and any polynomial $q$, $\Pr[\top\gets\cC]\le\negl(\secp)$.
\end{definition}
\mor{Modify it later so that it fits the proof.}
\fi

\if0
\begin{remark}
We could also consider the following security definition, which is unique to the quantum case.
   Let us consider the following security game:
   \begin{enumerate}
       \item 
       The challenger $\cC$ runs $k\gets\KeyGen(1^\secp)$.
       \item 
       The adversary $\cA$ can query $\Eval(k,\cdot)$ many times.
       \item 
      $\cA$ sends a bit string $x$ and a quantum state $\rho$ to the challenger.
       \item 
       If $x$ was not queried by $\cA$,
      $\cC$ projects $\rho$ onto $\ket{\phi_k(x)}$.
      If the projection is successful, $\cC$ outputs $\top$.
      Otherwise, $\cC$ outputs $\bot$.
      If $x$ was queried $t$ times by $\cA$,
      $\cC$ projects $\rho$ onto $\ket{\phi_k(x)}^{\otimes t+1}$.
      If the projection is successful, $\cC$ outputs $\top$.
      Otherwise, $\cC$ outputs $\bot$.
   \end{enumerate}
   For any QPT adversary $\cA$, $\Pr[\top\gets\cC]\le\negl(\secp)$.
\end{remark}
\fi

\if0
We can show that statistically-secure UPSGs do not exist.
\mor{remove it and instead write the proof via money?}
\begin{theorem}
UPSGs are broken by an unbounded adversary.    
\end{theorem}

\begin{proof}
Unbounded adversary samples $x_1,...,x_t\gets\bit^\ell$ and query each $x_i$ to the oracle.
The adversary gets 
$\rho_k^{\otimes t}\coloneqq(\frac{1}{2^\ell}\sum_x|x\rangle\langle x|\otimes |\phi_k(x)\rangle\langle\phi_k(x)|)^{\otimes t}$.
Do the shadow tomography to find $k'$ such that
\begin{align}
\Tr[\Pi_{k'}\rho_k]\ge1-\frac{1}{\poly(\secp)},    
\label{shadow}
\end{align}
where $\Pi_{k'}\coloneqq \sum_x |x\rangle\langle x|\otimes|\phi_{k'}(x)\rangle\langle\phi_{k'}(x)|$.
Then choose $x^*\gets\bit^\ell$ and generate $\ket{\phi_{k'}(x^*)}$.
Output $(x^*,\ket{\phi_{k'}(x^*)})$.

Because \cref{shadow} means
\begin{align}
\frac{1}{2^\ell}\sum_x |\langle\phi_k(x)|\phi_{k'}(x)\rangle|^2    
\ge1-\frac{1}{\poly(\secp)},
\end{align}
Therefore, with a non-negligible probability, $x^*\neq x_i$ for all $i\in[t]$
and $\langle\phi_k(x^*)|\phi_{k'}(x^*)\rangle\ge\frac{1}{\poly(\secp)}$.
\end{proof}
\fi

\begin{remark}
It is easy to see that    
UPSGs with $O(\log\secp)$-qubit output do not exist.\footnote{The adversary has only to output $0...0$ and maximally-mixed state.}
\end{remark}

\begin{remark}
   In \cite{C:BonZha13}, they define a security of digital signatures against quantum adversaries. Their security definition is as follows: any QPT quantum adversary, who queries the signing oracle $t$ times, cannot output $t+1$ valid message-signature pairs. 
   We could define a quantum version of unpredictability based on their security definition, but exploring this possibility is beyond the scope of the present paper. 
   At least, their definition seems to be incomparable to \cref{def:unpredictability}.
   In particular, we do not know how to construct IND-CPA-secure SKE from their definition, because
   we do not know how to use the duality in that case.
\end{remark}

\if0
\begin{proof}
    If $n(\secp)=O(\log\secp)$, the following QPT adversary $\cA$ breaks the unpredictability: on input $\secp$, choose $x\gets\bit^\ell$, then output it and the $n(\secp)$ qubit maximally mixed state. The probability $\cA$ wins is 
    \begin{align}
        \Pr[\cA\:\rm{wins}]
        &=\sum_{k,x}\Pr[k\gets\KeyGen(1^{\secp})]\frac{1}{2^\ell}\bra{\phi_k(x)}\frac{I}{2^{n(\secp)}}\ket{\phi_k(x)}\\
        &=\sum_{k,x}\Pr[k\gets\KeyGen(1^{\secp})]\frac{1}{2^\ell}\frac{1}{2^{n(\secp)}}\\
        &=\frac{1}{2^{n(\secp)}}\\
        &\ge\frac{1}{\poly(\secp)}.
    \end{align}
    Here, we used $n(\secp)=O(\log\secp)$ in the last inequality.
\end{proof}
\fi

\if0
\subsection{Relation to Learning}
\begin{definition}[Learning]
In the first stage, the learner can query the function $f$.
In the next stage, the learner, given uniformly random $x$, outputs $y$.
The approximation error is defined to be $\Pr[y\neq f(x)]$.
\end{definition}
\fi

\subsection{Relation to PRFSs}
In this section, we recall the definition of PRFSs and construct UPSGs from PRFSs.

\begin{definition}[Pseudorandom Function-Like States (PRFSs)~\cite{C:AnaQiaYue22,TCC:AGQY22}]
A pseudorandom function-like state (PRFS) (generator) is a set of algorithms $(\KeyGen,\Eval)$ such that
\begin{itemize}
    \item 
    $\KeyGen(1^\secp)\to k:$ It is a QPT algorithm that, on input the security parameter $\secp$, outputs a classical secret key $k$.
   \item 
   $\Eval(k,x)\to \ket{\phi_k(x)}:$ It is a QPT algorithm that on input $k$ and a bit string $x$, outputs a quantum state $\ket{\phi_k(x)}$.
\end{itemize}
We require the following security.
For any QPT adversary $\cA$,
\begin{align}
|\Pr[1\gets\cA^{\Eval(k,\cdot)}(1^\secp)]-\Pr[1\gets\cA^{\mathcal{O}_{Haar}}(1^\secp)]|\le \negl(\secp).
\end{align}
Here, $\cA^{\Eval(k,\cdot)}$ means that $\cA$ can quantumly query the oracle $\Eval(k,\cdot)$ in the sense of \cref{sec:definition_UPSG}\footnote{In \cite{C:AnaQiaYue22,TCC:AGQY22}, they do not explicitly consider the junk state
$\ket{{\rm junk}_k}$. Here, we assume that $\ket{{\rm junk}_k}$ is independent of $x$ similarly to the case of UPSGs.}.
$\cA^{\mathcal{O}_{Haar}}$ means that $\cA$ can quantumly query the oracle $\mathcal{O}_{Haar}$ in the following sense.
\begin{enumerate}
  \item A state $\sum_x \alpha_x \ket{x}_{\regX}\ket{\xi_x}$ is input to the oracle, where $\{\alpha_x\}_x$ are any complex coefficients and $\{\ket{\xi_x}\}_x$ are any states.
    \item The oracle returns $\sum_x\alpha_x\ket{x}_{\regX}\ket{\psi_x}_{\regY}\ket{\xi_x}$,
    where $\ket{\psi_x}$ is a Haar random state.
\end{enumerate}

\end{definition}

\begin{theorem}\label{thm:PRFS_to_UPSG}
If PRFSs exist then UPSGs exist.    
\end{theorem}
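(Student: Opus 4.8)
The plan is to use the PRFS essentially as-is (after a mild output-length expansion) and show that any adversary breaking unpredictability yields a PRFS distinguisher. First I would fix the construction: set $\KeyGen$ to be the PRFS key generation, and define $\Eval(k,x)$ to output $\bigotimes_{i\in[\secp]}\ket{\phi_k(x\|i)}$, i.e.\ $\secp$ parallel PRFS evaluations on the inputs $x\|1,\dots,x\|\secp$. This preserves the required syntactic form (a single unitary $U_k$ mapping $\ket{x}_{\regX}\ket{0\cdots0}_{\regY,\regZ}$ to $\ket{x}_{\regX}\ket{\phi_k(x)}_{\regY}\ket{\mathrm{junk}_k}_{\regZ}$ with $x$-independent junk, since each PRFS junk depends only on $k$), and its sole purpose is to place the output in a dimension $d=2^{\secp\cdot n}$ that is superpolynomially large, where $n$ is the PRFS output length. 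This blow-up is necessary because, by the Remark, $O(\log\secp)$-qubit UPSGs cannot exist.

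Next I would set up the reduction. Suppose a QPT $\cA$ wins the unpredictability game with probability $p\ge 1/\poly(\secp)$. Since a valid $\cA$ always outputs a non-queried $x^*$, we have $p=\Exp[\bra{\phi_k(x^*)}\rho\ket{\phi_k(x^*)}]$, the expected projection-success probability. The distinguisher $\cB$, given oracle access to either $\Eval(k,\cdot)$ or $\mathcal{O}_{Haar}$, simulates $\Eval(k,\cdot)$ for $\cA$: each quantum query of $\cA$ whose input register holds $x$ is answered by coherently computing $x\|1,\dots,x\|\secp$ into scratch, forwarding $\secp$ queries to $\cB$'s own oracle, and tensoring the responses. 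After $\cA$ outputs $(x^*,\rho)$, $\cB$ makes $\secp$ fresh oracle queries on $x^*\|1,\dots,x^*\|\secp$ to obtain a state $\sigma$, performs a $\swap$ test between $\rho$ and $\sigma$, and outputs $1$ iff the test accepts.

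Then I would analyze the two worlds. In the PRFS world $\sigma=\ket{\phi_k(x^*)}$ exactly, so the $\swap$ test accepts with probability $\tfrac12+\tfrac12\Exp[\bra{\phi_k(x^*)}\rho\ket{\phi_k(x^*)}]=\tfrac12+\tfrac{p}{2}$. In the Haar world, the crucial observation is that because $\cA$ never queried $x^*$ with nonzero weight, the oracle never revealed the Haar states $\ket{\psi_{x^*\|i}}$ during $\cA$'s run; hence, conditioned on the joint distribution of $(x^*,\rho)$, each $\ket{\psi_{x^*\|i}}$ is still an independent Haar-random state. Therefore $\Exp[\ket{\sigma}\bra{\sigma}]=\bigotimes_{i}(I/2^{n})=I/d$, and the $\swap$ test accepts with probability $\tfrac12+\tfrac12\Tr[\rho\,I/d]=\tfrac12+\tfrac1{2d}=\tfrac12+\negl(\secp)$. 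The distinguishing advantage is thus at least $\tfrac12\bigl(p-1/d\bigr)\ge 1/\poly(\secp)$, contradicting PRFS security.

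The main obstacle I expect is the Haar-world independence argument: making rigorous that a never-queried input's Haar state remains uniformly random and independent given the adversary's (possibly entangled) output $(x^*,\rho)$, and correctly tracking the ``not queried'' condition (zero weight on $\ket{x^*}$ in every query) through $\cB$'s coherent re-indexing $x\mapsto x\|i$, so that zero weight on $x^*$ forces zero weight on every $x^*\|i$. The $\swap$-test accounting and the choice of expansion factor ensuring $1/d=\negl(\secp)$ are then routine.
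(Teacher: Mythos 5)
Your proposal is correct, and its core reduction is the same as the paper's: simulate the unpredictability adversary $\cA$ by forwarding its quantum queries to the PRFS/Haar oracle, measure its output register to obtain $x^*$, make a fresh oracle query at $x^*$, and run a swap test between $\cA$'s output state $\rho$ and the fresh oracle response, exploiting the fact that Haar states at never-queried (zero-weight) inputs remain independent of $(x^*,\rho)$, so the Haar-world acceptance probability is $\tfrac12+\tfrac{1}{2d}$. The genuine difference is the construction itself. The paper proves that a PRFS \emph{is} a UPSG as-is, and in the Haar-world analysis invokes the bound $\Exp_{\ket{\psi}\gets\mu}\bra{\psi}\sigma\ket{\psi}\le\negl(\secp)$ "for any state $\sigma$"; since this expectation equals $1/d$, the bound is valid only when the PRFS output lives in superpolynomial dimension, i.e.\ has $\omega(\log\secp)$ qubits --- an assumption the PRFS definition does not impose and which the paper leaves implicit. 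Your $\secp$-fold parallel evaluation $x\mapsto(x\|1,\dots,x\|\secp)$ removes this assumption: it forces $d\ge 2^{\secp}$ regardless of the underlying PRFS output length, and it is in fact necessary for the theorem to cover short-output PRFSs, since by the paper's own remark UPSGs with $O(\log\secp)$-qubit outputs cannot exist (so a short-output PRFS cannot itself be a UPSG). What this generality costs is the extra bookkeeping you identified: re-indexed coherent queries ($\secp$ oracle calls per query of $\cA$, with scratch registers uncomputed so the junk stays $x$-independent), and checking that zero weight on $\ket{x^*}$ propagates to zero weight on every $\ket{x^*\|i}$ --- both of which go through as you describe.
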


\begin{proof}[Proof of \cref{thm:PRFS_to_UPSG}]
Let $(\KeyGen,\Eval)$ be a PRFS. We show that it is a UPSG.
Assume that it does not satisfy the unpredictability.
Then, there exist a polynomial $p$ and
a QPT adversary $\cA$ that can quantumly query $\Eval(k,\cdot)$ such that 
\begin{align}
\sum_{k}\Pr[k\gets\KeyGen(1^\secp)]
\sum_{x^*}\langle x^*|\langle \phi_k(x^*)|
\cA^{\Eval(k,\cdot)}(1^\secp)|x^*\rangle|\phi_k(x^*)\rangle 
\ge\frac{1}{p(\secp)}\label{eq:assume_UPSG_not_secure}
\end{align}

for infinitely many $\secp\in\mathbb{N}$. Here, $\cA^{(\cdot)}(1^\secp)$ denotes the state of $\cA^{(\cdot)}$ before the measurement.
Then, the following QPT adversary $\cB$ breaks the security of PRFS.
\begin{enumerate}
\item 
The challenger $\cC'$ of the PRFS chooses $b\gets\bit$.
    \item Run $\cA$ on input $1^\secp$. When $\cA$ queries the oracle, $\cB$ simulates it by querying $\cB$'s oracle (that is $\Eval(k,\cdot)$ if $b=0$ and $\mathcal{O}_{Haar}$ if $b=1$).
    \item $\cB$ measures the first register of $\cA^{(\cdot)}(1^\secp)$ to get $x^*$.
    Query $x^*$ to $\cB$'s oracle to get $\ket{\xi}$, 
    which is $\ket{\xi}=\ket{\phi_k(x^*)}$ if $b=0$ and a Haar random state $\ket{\psi_{x^*}}$ if $b=1$.
    \item $\cB$ does the swap test between the second register of $\cA^{(\cdot)}(1^\secp)$ and $\ket{\xi}$. 
    If the swap test succeeds, $\cB$ outputs 1. 
    Otherwise, $\cB$ outputs $0$.
\end{enumerate}
If $b=0$,
\begin{align}
    \Pr[1\gets\cB]&=\frac{1}{2}+\frac{1}{2}\sum_{k}\Pr[k\gets\KeyGen(1^\secp)]\sum_{x^*}\langle x^*|\langle\phi_k(x^*)|\cA^{\Eval(k,\cdot)}(1^\secp)|x^*\rangle|\phi_k(x^*)\rangle\\
    &\ge\frac{1}{2}+\frac{1}{2p(\secp)}
\end{align}
for infinitely many $\secp$. Here we have used \cref{eq:assume_UPSG_not_secure}.
On the other hand, if $b=1$,
\begin{align}
    \Pr[1\gets\cB]&=\frac{1}{2}+\frac{1}{2}\sum_{k}\Pr[k\gets\KeyGen(1^\secp)]\sum_{x^*}
    \Exp_{|\psi\rangle\gets\mu}\langle x^*|\langle\psi|\cA^{\mathcal{O}_{Haar}}(1^\secp)|x^*\rangle|\psi\rangle\\
    &\le\frac{1}{2}+\negl(\secp),
\end{align}
where $\mu$ denotes the Haar measure and we have used $\Exp_{\ket{\psi}\gets\mu}\bra{\psi}\sigma\ket{\psi}\le\negl(\secp)$ for any state $\sigma$. Therefore, $\cB$ breaks the security of the PRFS.
\end{proof}

\subsection{Pure OWSGs from UPSGs}
In this section, we show that UPSGs imply OWSGs with pure output states.
\begin{theorem}
If UPSGs exist, then pure OWSGs exist. 
\end{theorem}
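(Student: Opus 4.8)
The plan is to construct a pure OWSG \emph{directly} from a UPSG $(\KeyGen,\Eval)$ by keeping the random input register visible alongside the evaluated state, which is exactly what purifies each copy (in contrast to the mixed OWSG obtained via SKE, where the random input is traced out). I would instantiate the UPSG on inputs $x\in\bit^\secp$, let the OWSG key generation simply run the UPSG's $\KeyGen$, define $\StateGen(k)$ to sample $x\gets\bit^\secp$, run $\Eval(k,x)$, and output the pure state $\ket{x}_{\regX}\ket{\phi_k(x)}_{\regY}$, and define $\Ver(k',\sigma)$ to measure $\regX$ in the computational basis to obtain $x$ and then project $\regY$ onto $\ket{\phi_{k'}(x)}$ using the $U_{k'}$-based procedure of \cref{rem:UPSG}, accepting iff the projection succeeds. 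Each output of $\StateGen$ is a pure state, so this is a pure-output OWSG even though $\StateGen$ is randomized. Correctness is immediate: on an honest copy $\ket{x}\ket{\phi_k(x)}$, the verifier $\Ver(k,\cdot)$ reads the correct $x$ and projects onto $\ket{\phi_k(x)}$, succeeding with probability $1-\negl(\secp)$.

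For security I would argue by contradiction. Suppose a QPT $\cA$ and a polynomial $t$ break OWSG security, so that with probability $\ge 1/\poly(\secp)$ over $k\gets\KeyGen(1^\secp)$, the $t$ i.i.d.\ copies $\bigotimes_{i=1}^{t}\ket{x_i}\ket{\phi_k(x_i)}$, the coins of $\cA$, and the fresh verification sample, the output $k'\gets\cA$ satisfies $\top\gets\Ver(k',\StateGen(k))$. Expanding $\Ver$, this probability equals $\Exp\!\left[\,\big|\langle\phi_{k'}(x^*)|\phi_k(x^*)\rangle\big|^2\,\right]$, where $x^*\gets\bit^\secp$ is the independent uniform input chosen inside the verification copy.

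From $\cA$ I would build an unpredictability adversary $\cB$ as follows: $\cB$ samples $x_1,\dots,x_t\gets\bit^\secp$, makes the $t$ \emph{classical} queries $x_i$ to $\Eval(k,\cdot)$ to assemble $\bigotimes_i\ket{x_i}\ket{\phi_k(x_i)}$, runs $k'\gets\cA$ on this state, samples a fresh $x^*\gets\bit^\secp$, computes $\ket{\phi_{k'}(x^*)}$ on its own (it knows the classical $k'$ and the algorithm $\Eval$), and outputs $(x^*,\ket{\phi_{k'}(x^*)})$. Since $\cB$'s queries are classical on the $x_i$, the string $x^*$ counts as not queried exactly when $x^*\notin\{x_1,\dots,x_t\}$, which fails only with probability $\le t/2^\secp=\negl(\secp)$. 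Because $\cB$ reproduces the OWSG game distribution exactly, the challenger's projection of $\ket{\phi_{k'}(x^*)}$ onto $\ket{\phi_k(x^*)}$ succeeds with probability $\Exp\!\left[\,\big|\langle\phi_k(x^*)|\phi_{k'}(x^*)\rangle\big|^2\,\right]\ge 1/\poly(\secp)$; subtracting the negligible chance that $x^*$ was queried still leaves a non-negligible advantage, contradicting unpredictability.

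The main obstacle, and the reason the coherent state $\frac{1}{\sqrt{2^\secp}}\sum_x\ket{x}\ket{\phi_k(x)}$ does \emph{not} work, is the interaction between query bookkeeping and state preparation: to prepare that coherent state $\cB$ would have to query $\Eval$ on a uniform superposition, giving every candidate $x^*$ nonzero weight and thus marking all inputs as queried. Keeping the input classical and random, so that $\StateGen$'s own verification sample and $\cB$'s challenge $x^*$ are identically distributed fresh uniform strings that miss the $t$ queried points, is precisely what makes the reduction go through. I would also want to confirm that restricting the UPSG to length-$\secp$ inputs (so that $2^\secp$ is super-polynomial) is legitimate, which it is: the UPSG is defined for inputs of every length, and $\cB$ querying and outputting only length-$\secp$ strings remains a valid unpredictability adversary.
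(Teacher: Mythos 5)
Your reduction is sound, but it proves the wrong statement: the scheme you build is a \emph{mixed} OWSG, not a pure one. In the OWSG syntax used in this paper (and in \cite{C:MorYam22,cryptoeprint:2022/1336}), $\phi_k$ denotes the output of the channel $\StateGen(k)$, and ``pure'' means that this density matrix is pure for every key $k$. Because you sample $x\gets\bit^\secp$ \emph{inside} $\StateGen$, your output state is
\begin{align}
\phi_k=\frac{1}{2^\secp}\sum_{x\in\bit^\secp}\proj{x}\otimes\proj{\phi_k(x)},
\end{align}
which is highly mixed; correspondingly, the security game hands the adversary $\bigotimes_{i=1}^t\ket{x_i}\ket{\phi_k(x_i)}$ with independent $x_i$, not $t$ copies of a single pure state. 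Your claim that ``each output of $\StateGen$ is a pure state, so this is a pure-output OWSG even though $\StateGen$ is randomized'' would, by the same reading, make the SKE-based OWSGs pure as well (each encryption run also yields a specific pure state together with classical strings), yet the paper explicitly classifies those as mixed --- that is precisely why this theorem exists as a separate statement. The distinction also has teeth downstream: the corollary ${\bf PP}\neq{\bf BQP}$ invokes the attack of \cite{cavalar2023computational} on pure OWSGs, which needs identical copies $\ket{\phi_k}^{\otimes t}$ and does not apply to your construction.

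The paper's fix is to move the randomness into the key: $\KeyGen$ outputs $k'=(k,x_1,\ldots,x_n)$ with $n=|k|+\secp$, and $\StateGen(k')$ deterministically outputs the pure state $\left(\bigotimes_{i}\ket{\phi_k(x_i)}\right)\otimes\left(\bigotimes_{i}\ket{x_i}\right)$. But this is where the real work begins, and it is exactly the work your verifier let you skip: now verification only tests the recovered key $s$ on the inputs $x_1,\ldots,x_n$ sitting in the key, whereas the unpredictability reduction must produce a prediction on a \emph{fresh}, unqueried $x^*$, and agreement on the keyed inputs does not by itself imply agreement anywhere else. The paper bridges this with a counting argument: defining $N_k$ as the set of keys $s$ whose outputs have squared overlap at least $\frac{1}{8p(\secp)}$ with those of $k$ on at least half of all inputs, any $s\notin N_k$ passes all $n$ checks with probability at most $2^{-n}$, so by a union bound over all $2^{|k|}$ candidate keys no bad key survives except with probability $2^{-n+|k|}=2^{-\secp}$ (this is why $n=|k|+\secp$). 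Hence a successful OWSG adversary must essentially output some $s\in N_k$, and such an $s$ does predict $\ket{\phi_k(x^*)}$ on a random fresh $x^*$ with inverse-polynomial probability. Your observation about why the coherent superposition $2^{-\secp/2}\sum_x\ket{x}\ket{\phi_k(x)}$ fails is correct, but the lesson is the opposite of the one you drew: purity must come from determinism given the key, and the price is the generalization argument above, not a relaxation of what ``pure'' means.
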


\begin{proof}
Let $(\mathsf{UPSG}.\KeyGen,\mathsf{UPSG}.\Eval)$ be a UPSG. 
From it, we construct a pure OWSG $(\KeyGen,\StateGen)$ as follows.
\begin{itemize}
    \item 
    $\KeyGen(1^\secp)\to k':$ Run $k\gets\mathsf{UPSG}.\KeyGen(1^\secp)$. 
    Choose $x_i\gets\bit^\ell$ for $i\in[n]$.
    Here, $n\coloneqq |k|+\secp$.
    Output $k'\coloneqq(k,x_1,...,x_n)$.
    \item 
    $\StateGen(k')\to\psi_{k'}:$
    Parse $k'=(k,x_1,...,x_n)$.
    Run $\phi_{k}(x_i)\gets \mathsf{UPSG}.\Eval(k,x_i)$ for $i\in[n]$.
Output $\psi_{k'}\coloneqq
\left(\bigotimes_{i=1}^n\phi_{k}(x_i)\right)\otimes(\bigotimes_{i=1}^n\ket{x_i})$.
\end{itemize}
For the sake of contradiction, assume that this construction is not secure.
This means that there exist polynomials $p$ and $t$, and a QPT adversary $\cA$ such that
\begin{align}
\frac{1}{p(\secp)}
&\le
\sum_{k}\Pr[k]
\frac{1}{2^{n\ell}}\sum_{x_1,...,x_n}
\sum_{s,x_1',...,x_n'}{\rm Pr'}[s,x_1',...,x_n'|k,x_1,...,x_n]
\prod_{i\in[n]}|\langle\phi_k(x_i)|\phi_s(x_i')\rangle|^2\delta_{x_i,x_i'}\\
&=
\sum_{k}\Pr[k]
\frac{1}{2^{n\ell}}\sum_{x_1,...,x_n}
\sum_{s}{\rm Pr'}[s,x_1,...,x_n|k,x_1,...,x_n]
\prod_{i\in[n]}|\langle\phi_k(x_i)|\phi_s(x_i)\rangle|^2
\end{align}
for infinitely many $\secp$.
Here, $\Pr[k]\coloneqq\Pr[k\gets\mathsf{UPSG}.\KeyGen(1^\secp)]$ and
\begin{align}
{\rm Pr}'[s,x_1',...,x_n'|k,x_1,...,x_n]\coloneqq
\Pr[(s,x_1',...,x_n')\gets\cA(1^\secp,((\bigotimes_{i}\phi_k(x_i))\otimes(\bigotimes_{i}|x_i\rangle))^{\otimes t})].
\end{align} 
Define 
\begin{align}
K\coloneqq \left\{
k:
\frac{1}{2^{n\ell}}\sum_{x_1,...,x_n}
\sum_{s}{\rm Pr'}[s,x_1,...,x_n|k,x_1,...,x_n]
\prod_{i\in[n]}|\langle\phi_k(x_i)|\phi_s(x_i)\rangle|^2
\ge\frac{1}{2p(\secp)}
\right\}.
\end{align}
Then, from the standard average argument, 
\begin{align}
\sum_{k\in K}\Pr[k]\ge\frac{1}{2p(\secp)}
\label{K}
\end{align}
for infinitely many $\secp$.
Define
\begin{align}
X_k\coloneqq\left\{(x_1,...,x_n):
\sum_{s}{\rm Pr'}[s,x_1,...,x_n|k,x_1,...,x_n]
\prod_{i\in[n]}|\langle\phi_k(x_i)|\phi_s(x_i)\rangle|^2
\ge\frac{1}{4p(\secp)}
\right\}.
\end{align}
Then, from the standard average argument, 
for any $k\in K$,
\begin{align}
\frac{1}{2^{n\ell}}\sum_{(x_1,..,x_n)\in X_k}\ge\frac{1}{4p(\secp)}
\label{X}
\end{align}
for infinitely many $\secp$.
Define 
\begin{align}
N_{k}\coloneqq
\left\{s: \Pr_{x\gets\bit^\ell}
\left[
|\langle\phi_{s}(x)|\phi_k(x)\rangle|^2
\ge \frac{1}{8p(\secp)}\right]\ge \frac{1}{2}\right\}. 
\end{align}
For any $k$ and $s\notin N_{k}$, we have 
\begin{align}
\Pr_{x_1,...,x_n\gets\bit^\ell}
\left[\prod_{i\in[n]}|\langle\phi_{s}(x_i)|\phi_k(x_i)\rangle|^2\ge  \frac{1}{8p(\secp)}\right]\le 2^{-n}. 
\end{align}
This is because to satisfy $\prod_{i}|\langle\phi_{s}(x_i)|\phi_k(x_i)\rangle|^2\ge  1/8p(\secp)$, we must have  $|\langle\phi_{s}(x_i)|\phi_k(x_i)\rangle|^2\ge  1/8p(\secp)$ for all $i$, and the probability that it holds is at most $1/2$ for each $i$ by the assumption that $s\notin N_{k}$.
By the union bound, for any $k$,
\begin{align}
&\Pr_{x_1,...,x_n\gets\bit^\ell}\left[
\forall s\in \bit^{|k|}\setminus N_{k},
\prod_{i\in[n]}|\langle\phi_{s}(x_i)|\phi_k(x_i)\rangle|^2\le  \frac{1}{8p(\secp)}\right]\\
&\ge 1-(2^{|k|}-|N_k|)\cdot 2^{-n}\\
&\ge1- 2^{-n+|k|}.
\label{unionY}
\end{align}
Define
\begin{align}
Y_k\coloneqq\left\{
(x_1,...,x_n):
\forall s\in \bit^{|k|}\setminus N_{k},
\prod_{i\in[n]}|\langle\phi_{s}(x_i)|\phi_k(x_i)\rangle|^2\le  \frac{1}{8p(\secp)}
\right\}.
\end{align}
Then, \cref{unionY} means
\begin{align}
\frac{1}{2^{n\ell}}\sum_{(x_1,...,x_n)\in Y_k}\ge1-2^{-n+|k|}
\label{Y}
\end{align}
for all $k$. 
From the union bound, \Cref{X}, and \cref{Y}, 
for any $k\in K$,
\begin{align}
\frac{1}{2^{n\ell}}\sum_{(x_1,...,x_n)\in X_k\cap Y_k}\ge\frac{1}{4p(\secp)}-2^{-n+|k|}
\label{XY}
\end{align}
for infinitely many $\secp$.
Then for any $k\in K$ and any $(x_1,...,x_n)\in X_k\cap Y_k$,
\begin{align}
\frac{1}{4p(\secp)}
&\le
\sum_{s\in N_k}{\rm Pr'}[s,x_1,...,x_n|k,x_1,...,x_n]
\prod_{i\in[n]}|\langle\phi_k(x_i)|\phi_s(x_i)\rangle|^2\\
&+\sum_{s\not\in N_k}{\rm Pr'}[s,x_1,...,x_n|k,x_1,...,x_n]
\prod_{i\in[n]}|\langle\phi_k(x_i)|\phi_s(x_i)\rangle|^2\\
&\le
\sum_{s\in N_k}{\rm Pr'}[s,x_1,...,x_n|k,x_1,...,x_n]
\prod_{i\in[n]}|\langle\phi_k(x_i)|\phi_s(x_i)\rangle|^2
+\frac{1}{8p(\secp)}\\
&\le
\sum_{s\in N_k}{\rm Pr'}[s,x_1,...,x_n|k,x_1,...,x_n]
+\frac{1}{8p(\secp)},
\end{align}
which gives
\begin{align}
\sum_{s\in N_k}{\rm Pr'}[s,x_1,...,x_n|k,x_1,...,x_n]
\ge\frac{1}{8p(\secp)}
\label{s_A}
\end{align}
for any $k\in K$ and any $(x_1,...,x_n)\in X_k\cap Y_k$.

From the $\cA$, we construct a QPT adversary $\cB$ that breaks the security of the UPSG as follows.
\begin{enumerate}
\item Sample $x_1,...,x_n\gets\bit^\ell$ and $x^*\gets\bit^\ell$.
\item For each $i\in[n]$, query $x_i$ to the oracle $\mathsf{UPSG}.\Eval(k,\cdot)$ $t$ times to get $\phi_k(x_i)^{\otimes t}$.
\item Run $(s,x_1',...,x_n')\gets\A(1^\secp,(\bigotimes_{i=1}^n\phi_k(x_i))\otimes (\bigotimes_{i=1}^n \ket{x_i}))^{\otimes t})$.
If $x_i'\neq x_i$ for at least one $i\in[n]$, abort.
\item Query $x^*$ to the oracle $\mathsf{UPSG}.\Eval(k,\cdot)$ to get $\phi_k(x^*)$. Output $(x^*,\phi_k(x^*))$. 
\end{enumerate}
The probability that $\cB$ wins is
\begin{align}
&\frac{1}{2^\ell}\sum_{x^*}
\sum_{k}\Pr[k]
\frac{1}{2^{n\ell}}\sum_{x_1,...,x_n}
\sum_{s}{\rm Pr'}[s,x_1,...,x_n|k,x_1,...,x_n]
|\langle\phi_k(x^*)|\phi_s(x^*)\rangle|^2\\
&\ge
\sum_{k\in K}\Pr[k]
\frac{1}{2^{n\ell}}\sum_{(x_1,...,x_n)\in X_k\cap Y_k}
\sum_{s\in N_k}{\rm Pr'}[s,x_1,...,x_n|k,x_1,...,x_n]
\frac{1}{2^\ell}\sum_{x^*}
|\langle\phi_k(x^*)|\phi_s(x^*)\rangle|^2\\
&\ge
\frac{1}{2p(\secp)}\left(\frac{1}{4p(\secp)}-2^{-n+|k|}\right)\frac{1}{8p(\secp)}\frac{1}{8p(\secp)}\frac{1}{2}
\ge
\frac{1}{\poly(\secp)}
\end{align}
for infinitely many $\secp$, which means that $\cB$ breaks the security of the OWSGs.
Here, we have used the definition of $N_k$, \cref{s_A},
\cref{XY}, and \cref{K}.
\end{proof}
\section{IND-CPA-Secure SKE from UPSGs}
\label{sec:SKE_from_UPSG}
In this section, we construct IND-CPA secure SKE from UPSGs.

\if0
\subsection{IND-CPA secure SKE schemes for classical messages imply IND-CPA secure SKE schemes for quantum messages}
\shogo{Appendix ni utusu.}
In this section, we show that IND-CPA secure\mor{-secure} SKE schemes for classical messages imply IND-CPA secure SKE schemes for quantum messages.

\clearpage
\fi

\begin{theorem}\label{thm:SKE_from_UPSG}
If UPSGs exist, then IND-CPA-secure SKE schemes  
for classical messages secure against classically querying QPT adversaries exist.    
\end{theorem}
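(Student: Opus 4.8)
The plan is to instantiate the single-bit scheme from the technical overview and reduce its IND-CPA security to the unpredictability of the UPSG via the duality of \cref{lem:HMY22}. Let $(\KeyGen',\Eval)$ be the assumed UPSG, with output on input $z$ denoted $\ket{\phi_k(z)}$. I set the secret key to $\sk := k \gets \KeyGen'(1^\secp)$. To encrypt a bit $b \in \bit$, sample random strings $x,y$, query $\Eval$ on $0\|x$ and $1\|y$, and output
\begin{align}
\ct_b := \Bigl(x,\,y,\ \ket{0\|x}\ket{\phi_k(0\|x)} + (-1)^b \ket{1\|y}\ket{\phi_k(1\|y)}\Bigr).
\end{align}
Since $\ket{0\|x}\ket{\phi_k(0\|x)}$ and $\ket{1\|y}\ket{\phi_k(1\|y)}$ are orthogonal (their first registers differ), $\Dec$ recovers $b$ by coherently uncomputing the UPSG state with $U_k^\dagger$ controlled on the first register and then measuring the leading qubit in the Hadamard basis; correctness is perfect up to the UPSG correctness error. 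A general message is encrypted bit-by-bit, and single-bit IND-CPA security upgrades to many-bit IND-CPA security by the standard hybrid argument, so it suffices to treat the single-bit case.

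In the single-bit game the challenge messages are forced to be $0$ and $1$, hence an adversary $\cA$ winning with non-negligible advantage $\epsilon$ is exactly a distinguisher of $\ct_0$ and $\ct_1$ with advantage $\Delta = \Theta(\epsilon)$. Fixing the classical parts $x,y$, set $\ket{\psi}$ and $\ket{\phi}$ to be the (orthogonal) quantum parts of $\ct_0$ and $\ct_1$, respectively. Then $\cA$ distinguishes $\ket\psi,\ket\phi$, and \cref{lem:HMY22} supplies a polynomial-time unitary $V$ (acting together with $\cA$'s advice $\ket\tau$) whose swapping quality between $\ket\alpha := \tfrac{\ket\psi+\ket\phi}{\sqrt2} = \ket{0\|x}\ket{\phi_k(0\|x)}$ and $\ket\beta := \tfrac{\ket\psi-\ket\phi}{\sqrt2} = \ket{1\|y}\ket{\phi_k(1\|y)}$ equals $\Delta$. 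Because the $V$ from the construction underlying \cref{lem:HMY22} can be taken Hermitian, we get $\bra\alpha\bra\tau V\ket\beta\ket\tau = \overline{\bra\beta\bra\tau V\ket\alpha\ket\tau}$ and hence $|\bra\beta\bra\tau V\ket\alpha\ket\tau| \ge \Delta$; that is, applying $V$ to $\ket\alpha\ket\tau$ produces $\ket{1\|y}\ket{\phi_k(1\|y)}$ with overlap at least $\Delta$. Note carefully that we need this particular direction (from the queryable $\ket\alpha$ to the un-queryable $\ket\beta$), which is why the Hermitian structure, rather than just the symmetrized bound, is used.

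This yields the UPSG breaker $\cB$: sample $x^*,y^*$, query $\Eval$ on $0\|x^*$ to build $\ket\alpha = \ket{0\|x^*}\ket{\phi_k(0\|x^*)}$ \emph{without ever querying} $1\|y^*$, apply $V$, measure the first register, and—on outcome $1\|y^*$—output $(1\|y^*,\sigma)$ with $\sigma$ the second register. The challenger's projection onto $\ket{\phi_k(1\|y^*)}$ then succeeds with probability at least $\Delta^2$ (projecting onto a subspace containing $\ket\beta\ket\tau$ can only help), which is non-negligible and contradicts \cref{def:unpredictability}. The one genuine obstacle, and the crux of the proof, is that the distinguisher in \cref{lem:HMY22} is a fixed circuit with advice whereas the IND-CPA adversary adaptively queries $\Enc(\sk,\cdot)$, and the swapping unitary is built from the distinguisher together with its inverse, so naive oracle access would break the construction of $V$. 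I resolve this by exploiting that the encryption oracle only encrypts single bits and is queried at most $q = \poly(\secp)$ times: $\cB$ pre-generates $q$ fresh encryptions of $0$ and $q$ of $1$ by querying $\Eval$ at freshly sampled points $0\|x'$ and $1\|y'$, and folds all $2q$ ciphertexts into the advice state $\ket\tau$, whereupon the (now oracle-free) distinguisher answers $\cA$'s adaptive queries by consuming them. Since the fresh randomness makes any queried point equal $1\|y^*$ only with negligible probability, $1\|y^*$ has zero weight across all of $\cB$'s (classical) queries, so $\cB$ is a legal adversary, and the advice-aware version of \cref{lem:HMY22} applies as stated.
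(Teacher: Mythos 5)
Your scheme, your advice-based removal of the encryption oracle, and your invocation of \cref{lem:HMY22} are exactly the paper's proof strategy, but one step of your reduction fails as written. You say $\cB$ prepares the advice ciphertexts ``by querying $\Eval$ at freshly sampled points $0\|x'$ and $1\|y'$,'' and you later describe all of $\cB$'s queries as \emph{classical}. Two separate classical queries return the product state $\ket{0\|x'}\ket{\phi_k(0\|x')}\otimes\ket{1\|y'}\ket{\phi_k(1\|y')}$, and no physical operation can convert two unknown orthogonal states into their superposition $\ket{0\|x'}\ket{\phi_k(0\|x')}+(-1)^c\ket{1\|y'}\ket{\phi_k(1\|y')}$ --- the hardness of creating such superpositions is precisely the phenomenon the duality lemma is about, so this is not a fixable-by-hand detail inside your framing. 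The advice ciphertexts must instead be prepared by a single \emph{coherent} query on the input $\ket{0\|x'}+(-1)^c\ket{1\|y'}$ for each $i$ and $c$, which is what the paper's reduction does; this is legal because \cref{def:unpredictability} permits quantum queries, and the target $1\|y^*$ still has zero weight in every query provided $y^*$ collides with none of the sampled strings (a negligible-probability event). Relatedly, before \cref{lem:HMY22} can be applied you must pass from the \emph{average} distinguishing advantage to a \emph{fixed} tuple $(k,x,y,\mathbf{w})$ with large conditional advantage and no collisions; you wave at this with ``fixing the classical parts,'' whereas the paper makes it precise via an average argument (its good set $G$) and notes that $V$ is built from $\cA$ black-box, hence is independent of the fixed tuple, so a single $V$ works for the whole good set.

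Your second deviation is the appeal to Hermiticity of $V$: you use $V=V^\dagger$ to get the one-directional bound $|\bra{\beta}\bra{\tau}V\ket{\alpha}\ket{\tau}|\ge\Delta$, so that $\cB$ always swaps from the queryable $\ket{\alpha}$ toward the forbidden $\ket{\beta}$. This claim is in fact true of the construction underlying \cref{lem:HMY22} (the swapper has the form $U^\dagger(\Pi_0-\Pi_1)U$ for the distinguisher's unitary $U$, which is manifestly Hermitian), but it is a white-box property not contained in the lemma as stated, so as written your proof silently strengthens the cited result. The paper avoids needing it: its reduction flips a coin $b$ and runs the swap in a random direction (querying $0\|x$ and targeting $1\|y$ when $b=0$, and the reverse when $b=1$), so the symmetrized bound of \cref{lem:HMY22} alone guarantees that at least one direction has amplitude at least $\Delta$, at the cost of a factor $\tfrac{1}{2}$ in the success probability. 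Either route closes the argument once the coherent-query issue above is repaired; the paper's is fully black-box in the lemma, yours is tighter by a constant but needs the Hermiticity of $V$ to be stated and justified.
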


\begin{remark}
From \cref{lem:SKE_for_classical_imply_SKE_for_quantum}, IND-CPA-secure SKE schemes for classical messages secure against classically querying
QPT adversaries imply IND-CPA-secure SKE schemes for quantum messages.
Therefore, the above theorem also shows the existence of such SKE schemes if UPSGs exist.
\end{remark}

\begin{proof}[Proof of \cref{thm:SKE_from_UPSG}]
It suffices to construct an IND-CPA-secure SKE scheme for single-bit messages because, from it, we can construct an IND-CPA-secure SKE scheme for multi-bit messages by parallel repetition.\footnote{See \cite{books/crc/KatzLindell2007}.} Let $(\mathsf{UPSG}.\KeyGen,\mathsf{UPSG}.\Eval)$ be a UPSG.
As is explained in \cref{sec:definition_UPSG}, we assume that $\mathsf{UPSG}.\Eval$ is the following algorithm: on input $k$ and $x\in\bit^\ell$, it applies a unitary 
$U_k$ on $\ket{x}_{\regX}\ket{0...0}_{\regY,\regZ}$ to generate $\ket{x}_{\regX}\ket{\phi_k(x)}_{\regY}\ket{{\rm junk}_k}_{\regZ}$, and outputs the registers $\regX$ and $\regY$.
From $(\mathsf{UPSG}.\KeyGen,\mathsf{UPSG}.\Eval)$, we construct an IND-CPA-secure SKE scheme $(\KeyGen,\Enc,\Dec)$ for single-bit messages as follows.
\begin{itemize}
    \item 
    $\KeyGen(1^\secp)\to \sk:$
    Run $k\gets\mathsf{UPSG}.\KeyGen(1^\secp)$ and output $\sk\coloneqq k$.
    \item 
    $\Enc(\sk,b)\to\ct:$
    Parse $\sk=k$.
    Choose $x,y\gets\bit^\ell$. Generate 
    \begin{align}
    \ket{\ct^{b}_{x,y}}_{\regX,\regY}\coloneqq\frac{\ket{0\|x}_{\regX}\ket{\phi_k(0\|x)}_{\regY}+(-1)^b\ket{1\|y}_{\regX}\ket{\phi_k(1\|y)}_{\regY}}{\sqrt{2}}
    \end{align}
    and output $\ct\coloneqq(x,y,\ket{\ct^{b}_{x,y}})$.
    Here, $\ket{\ct^b_{x,y}}$ is generated as follows:
    \begin{enumerate}
        \item 
        Prepare $\ket{0\|x}_{\regX}\ket{0...0}_{\regY,\regZ}+(-1)^b\ket{1\|y}_{\regX}\ket{0...0}_{\regY,\regZ}$.
        \item 
        Apply $U_k$ on the registers $\regX$, $\regY$, and $\regZ$ to generate
        \begin{align}
        \ket{0\|x}_{\regX}\ket{\phi_k(0\|x)}_{\regY}\ket{{\rm junk}_k}_{\regZ}+(-1)^b\ket{1\|y}_{\regX}\ket{\phi_k(1\|y)}_{\regY}\ket{{\rm junk}_k}_{\regZ}.
        \end{align}
        \item 
        Remove the register $\regZ$.
    \end{enumerate}
    \item 
    $\Dec(\sk,\ct)\to b':$ 
    Parse $\sk=k$ and $\ct=(x,y,\rho_{\regX,\regY})$. Run the following algorithm.
    \begin{enumerate}
        \item Prepare $\rho_{\regX,\regY}\otimes\ket{{\rm junk}_k}\bra{{\rm junk}_k}_{\regZ}$.
        \item Apply $U^{\dag}_k$ on  $\rho_{\regX,\regY}\otimes\ket{{\rm junk}_k}\bra{{\rm junk}_k}_{\regZ}$.
        \item Apply $\ket{0}\bra{0}\otimes X^x+\ket{1}\bra{1}\otimes X^y$ on the register $\regX$. 
        \item Measure the first qubit of the register $\regX$ in the Hadamard basis to get $b'\in\bit$. Output $b'$.
    \end{enumerate}
\end{itemize}
Correctness is clear. To show the security, we define Hybrid 0, which is the original security game of the IND-CPA-secure SKE scheme
between the challenger $\cC$ and the QPT adversary $\cA$, as follows.

\paragraph{Hybrid 0}

\begin{enumerate}
     \item The challenger $\cC$ runs $k\gets\mathsf{UPSG.}\KeyGen(1^{\lambda})$.
    \item $\cC$ chooses $b\gets\bit$ and $x,y\gets\bit^\ell$. $\cC$ generates $\ket{\ct_{x,y}^b}$ by running $\mathsf{UPSG.}\Eval(k,\cdot)$ coherently. Here,
    \begin{align}
        \ket{\ct^{b}_{x,y}}=\frac{\ket{0\|x}\ket{\phi_k(0\|x)}+(-1)^b\ket{1\|y}\ket{\phi_k(1\|y)}}{\sqrt{2}}.
    \end{align}
    \item $\cC$ sends $\ct\coloneqq(x,y,\ket{\ct_{x,y}^b})$ to the adversary $\cA$.
    \item 
    $\cA$ can classically query to the oracle $\mathcal{O}_k$, where $\mathcal{O}_k$ works as follows:
    \begin{enumerate}
        \item On input $c\in\bit$, it chooses $x',y'\gets\bit^\ell$ and generates $\ket{\ct_{x',y'}^c}$.
        \item It outputs $(x',y',\ket{\ct_{x',y'}^c})$.
    \end{enumerate} 
    $\cA$ sends $b'\in\bit$ to $\cC$.
    \item If $b=b'$, $\cC$ outputs $\top$. Otherwise, $\cC$ outputs $\bot$.
\end{enumerate}

For the sake of contradiction, assume that our construction is not IND-CPA secure. This means that there exist a polynomial $p$ and a QPT adversary $\cA$ such that
\begin{align}
     \Pr[\top\gets{\rm Hybrid~0}]\ge\frac{1}{2}+\frac{1}{p(\secp)}\label{assumption for SKE}
\end{align}
for infinitely-many $\secp\in\mathbb{N}$. 

Our goal is to construct a QPT adversary $\cB$ that breaks the unpredictability of the UPSG. For that goal,
we use the duality between swapping and distinction~\cite{EC:HhaMorYam23}. However, we cannot directly use it here,
because our $\cA$ queries to the encryption oracle $\mathcal{O}_k$, but the distinguisher in \cref{lem:HMY22} does not access any oracle.
To solve the issue,
we have to remove the oracle $\mathcal{O}_k$ from Hybrid 0. Fortunately, $\mathcal{O}_k$ is an encryption oracle for single-bit messages, and
$\cA$ makes classical queries only polynomial times.
Therefore, we can give $\cA$ enough number of outputs of $\mathcal{O}_k$ in advance as auxiliary inputs, and $\cA$ can use these states
instead of the outputs of $\mathcal{O}_k$. In this way, we can remove the oracle $\mathcal{O}_k$.
We formalize this as Hybrid 1.
It is clear that $\Pr[\top\gets{\rm Hybrid~1}]=\Pr[\top\gets{\rm Hybrid~0}]$.

\paragraph{Hybrid 1}

\begin{enumerate}
     \item The challenger $\cC$ runs $k\gets\mathsf{UPSG.}\KeyGen(1^{\lambda})$.
    \item $\cC$ chooses $b\gets\bit$ and $x,y\gets\bit^\ell$. $\cC$ generates $\ket{\ct_{x,y}^b}$ by running $\mathsf{UPSG.}\Eval(k,\cdot)$ coherently. Here,
    \begin{align}
        \ket{\ct^{b}_{x,y}}=\frac{\ket{0\|x}\ket{\phi_k(0\|x)}+(-1)^b\ket{1\|y}\ket{\phi_k(1\|y)}}{\sqrt{2}}.
    \end{align}
    \item $\cC$ sends $\ct\coloneqq(x,y,\ket{\ct_{x,y}^b})$ to the adversary $\cA$.
    \item 
    \Erase{
    $\cA$ can classically query to the oracle $\mathcal{O}_k$, where $\mathcal{O}_k$ works as follows:}
    \begin{enumerate}
        \item \Erase{
        On input $c\in\bit$, it chooses $x',y'\gets\bit^\ell$ and generates $\ket{\ct_{x',y'}^c}$.}
        \item \Erase{
        It outputs $(x',y',\ket{\ct_{x',y'}^c})$.}
    \end{enumerate}
    {\color{red}$\cA$ receives $\ket{\tau}\coloneqq\bigotimes_{i\in[t],c\in\bit}\ket{x^i_c}\ket{y^i_c}\ket{\ct^{c}_{x^{i}_c,y^{i}_c}}$ as an auxiliary input, 
    where $t$ is the maximum number of $\cA$'s queries to $\mathcal{O}_k$ in the step 4 of Hybrid 0, and $x^i_c,y^i_c\gets\bit^\ell$ for each $i\in[t]$ and $c\in\bit$.
    When $\cA$ queries $c_i\in\bit$ to $\mathcal{O}_k$ in its $i$th query, it does not query to $\mathcal{O}_k$. Instead, it uses $\ket{x^i_c}\ket{y^i_c}\ket{\ct^c_{x^i_c,y^i_c}}$ as the output of $\mathcal{O}_k$.}
    $\cA$ sends $b'\in\bit$ to $\cC$.
    \item If $b=b'$, $\cC$ outputs $\top$. Otherwise, $\cC$ outputs $\bot$.
\end{enumerate}

Let $\mathbf{w}\coloneqq\{x^{i}_c,y^{i}_c\}_{i\in[t],c\in\bit}$, where $x^{i}_c\in\bit^\ell$ and $y^{i}_c\in\bit^\ell$ for each $i\in[t]$ and $c\in\bit$.
Let $\Pr[\top\gets\mathrm{Hybrid\:1}|k,x,y,\mathbf{w}]$ be the conditional
probability that $\cC$ outputs $\top$ given $k\gets\mathsf{UPSG.}\KeyGen(1^\secp)$ and
$x,y,\mathbf{w}$ are chosen in Hybrid 1.
We define a ``good'' set of $(k,x,y,\mathbf{w})$ as follows:
\begin{align}
    G\coloneqq\left\{(k,x,y,\mathbf{w}):\Pr[\top\gets\mathrm{Hybrid\:1}|k,x,y,\mathbf{w}]\ge\frac{1}{2}+\frac{1}{2p(\secp)}\land x\notin \mathbf{w}\land y\notin \mathbf{w}\land x\neq y\right\}.
    \label{G}
\end{align}
Let $\Pr[k,x,y,\mathbf{w}]$ be the probability that $k,x,y$ and $\mathbf{w}$ are chosen in Hybrid 1. Then, we can show the following lemma by the standard average argument. We give its proof later.
\begin{lemma}\label{lem:avg argument}
$
\sum_{(k,x,y,\mathbf{w})\in G}\Pr[k,x,y,\mathbf{w}]\ge\frac{1}{4p(\lambda)}
$
for infinitely many $\secp\in\mathbb{N}$.
\end{lemma}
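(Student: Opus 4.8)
The plan is to prove \cref{lem:avg argument} by a standard averaging (Markov-type) argument, isolating the two ways a tuple can be excluded from $G$: its conditional success probability can be too small, or the collision/inequality constraints in \eqref{G} can fail. First I would record the starting point. Since $\Pr[\top\gets\mathrm{Hybrid\:1}]=\Pr[\top\gets\mathrm{Hybrid\:0}]$ and by the contradiction hypothesis \eqref{assumption for SKE}, we have $\Pr[\top\gets\mathrm{Hybrid\:1}]\ge\frac12+\frac{1}{p(\secp)}$ for infinitely many $\secp$. Expanding the total success probability as an average over the challenger's random choices,
\begin{align}
\Pr[\top\gets\mathrm{Hybrid\:1}]=\sum_{k,x,y,\mathbf{w}}\Pr[k,x,y,\mathbf{w}]\cdot\Pr[\top\gets\mathrm{Hybrid\:1}\mid k,x,y,\mathbf{w}],
\end{align}
I would then split the index set into $G$ and its complement.

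The second step is to control the constraint-violating tuples. Let $\overline{B}$ denote the event that $x\in\mathbf{w}$, $y\in\mathbf{w}$, or $x=y$. Since $x,y$ and all $4t$ strings in $\mathbf{w}=\{x^i_c,y^i_c\}_{i\in[t],c\in\bit}$ are sampled independently and uniformly from $\bit^\ell$, a union bound gives $\Pr[\overline{B}]\le (8t+1)2^{-\ell}$, which is negligible because $\ell$ is super-logarithmic in $\secp$ (we are free to take $\ell$ polynomial in the construction). This lets me discard the $\overline{B}$-tuples up to a negligible additive loss.

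The third step is the averaging itself. Write $\overline{A}$ for the event that $\Pr[\top\gets\mathrm{Hybrid\:1}\mid k,x,y,\mathbf{w}]<\frac12+\frac{1}{2p(\secp)}$ and $B$ for the complement of $\overline{B}$. Then the complement of $G$ is the disjoint union of $\overline{A}\cap B$ and $\overline{B}$. On $\overline{A}\cap B$ I bound the conditional probability by $\frac12+\frac{1}{2p(\secp)}$; on $\overline{B}$ I bound it by $1$, so that this block contributes at most $\Pr[\overline{B}]$; and on $G$ I bound it by $1$. Combining these with the display above yields
\begin{align}
\frac12+\frac{1}{p(\secp)}\le\sum_{(k,x,y,\mathbf{w})\in G}\Pr[k,x,y,\mathbf{w}]+\left(\frac12+\frac{1}{2p(\secp)}\right)+\Pr[\overline{B}],
\end{align}
and rearranging gives $\sum_{(k,x,y,\mathbf{w})\in G}\Pr[k,x,y,\mathbf{w}]\ge\frac{1}{2p(\secp)}-\Pr[\overline{B}]\ge\frac{1}{2p(\secp)}-\negl(\secp)\ge\frac{1}{4p(\secp)}$ for all sufficiently large $\secp$ among the infinitely many for which \eqref{assumption for SKE} holds, as claimed.

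I expect no serious obstacle here, as the whole argument is classical bookkeeping once the collision probability over $\mathbf{w}$ is seen to be negligible. The only points requiring care are partitioning $\overline{G}$ as $(\overline{A}\cap B)\sqcup\overline{B}$ so that the probability-threshold condition and the constraint condition are not double-counted, and checking that the $\frac{1}{p}$ advantage from \eqref{assumption for SKE} is only halved (not erased) by the $\frac{1}{2p}$ threshold baked into the definition of $G$, leaving a clean $\frac{1}{4p}$ after absorbing the negligible collision term.
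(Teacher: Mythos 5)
Your proof is correct and follows essentially the same route as the paper's: the paper first defines the threshold-only set $T$ (your event $A$), applies the same averaging step with the $\frac{1}{2}+\frac{1}{2p(\secp)}$ cutoff to get $\sum_{(k,x,y,\mathbf{w})\in T}\Pr[k,x,y,\mathbf{w}]\ge\frac{1}{2p(\secp)}$, and then removes the collision event ($x\in\mathbf{w}$, $y\in\mathbf{w}$, or $x=y$) by a union bound, exactly as you do --- the only difference is that the paper's bookkeeping is sequential while yours interleaves the two bounds into a single three-block partition, with identical constants and the same final $\frac{1}{2p(\secp)}-\negl(\secp)\ge\frac{1}{4p(\secp)}$ step. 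Your explicit observation that the collision term is negligible only because $2^{-\ell}$ is negligible (i.e., $\ell=\omega(\log\secp)$) makes precise a point the paper leaves implicit.
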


Let us fix $(k,x,y,\mathbf{w})$. Moreover, assume that $(k,x,y,\mathbf{w})\in G$. Then from \cref{G}, $\cA$ of Hybrid 1 can distinguish $\ket{\ct^{0}_{x,y}}$ and $\ket{\ct^{1}_{x,y}}$ with an advantage greater than $\frac{1}{2p}$ using
the auxiliary input $\ket{\tau}$.
By using \cref{lem:HMY22}, we can construct 
a polynomial-time implementable unitary $V$\footnote{Note that this $V$ is independent of $(k,x,y,\mathbf{w})$ since, in the proof of \cref{lem:HMY22}, we use $\cA$ only as a black-box. For details, see \cite{EC:HhaMorYam23}.}
such that
\begin{align}
    \frac{1}{2p(\secp)}
    \le&
    \frac{|\bra{0\|x}\bra{\phi_k(0\|x)}\bra{\tau}V\ket{1\|y}\ket{\phi_k(1\|y)}\ket{\tau}+\bra{1\|y}\bra{\phi_k(1\|y)}\bra{\tau}V\ket{0\|x}\ket{\phi_k(0\|x)}\ket{\tau}|}{2}\\
    \le&\max\huge\{\left|\bra{0\|x}\bra{\phi_k(0\|x)}\bra{\tau}V\ket{1\|y}\ket{\phi_k(1\|y)}\ket{\tau}\right|,\left|\bra{1\|y}\bra{\phi_k(1\|y)}\bra{\tau}V\ket{0\|x}\ket{\phi_k(0\|x)}\ket{\tau}\right|\huge\}\\
    \le&\max\huge\{\|
    \bra{\phi_k(0\|x)}_\regY
    \left(V\ket{1\|y}_\regX\ket{\phi_k(1\|y)}_\regY\ket{\tau}_\regZ\right)\|,\\
    &\left\|\bra{\phi_k(1\|y)}_\regY\left(V\ket{0\|x}_\regX\ket{\phi_k(0\|x)}_\regY\ket{\tau}_\regZ\right)\right\|\huge\}.
    \label{conversion}
\end{align}
From this $V$, we construct the QPT adversary $\cB$ that breaks the security of the UPSG as follows.

\paragraph{Adversary $\cB$}

\begin{enumerate}
    \item Choose $b\gets\bit$.
    \item Choose $x,y\gets\bit^\ell$. If $x=y$, output $\bot$ and abort. Choose $x^{i}_c\gets\bit^\ell$ and $y^{i}_c\gets\bit^\ell$ for each $i\in[t]$ and $c\in\bit$. Set $\mathbf{w}\coloneqq\{x^{i}_c,y^{i}_c\}_{i\in[t],c\in\bit}$. If $x\in \mathbf{w}$ or $y\in \mathbf{w}$, output $\bot$ and abort.\label{cB:select_pahse}
    \item If $b=0$, get $\ket{0\|x}\ket{\phi_k(0\|x)}$ by querying $0\|x$ to $\mathsf{UPSG.}\Eval(k,\cdot)$. 
    If $b=1$, get $\ket{1\|y}\ket{\phi_k(1\|y)}$ by querying $1\|y$ to $\mathsf{UPSG.}\Eval(k,\cdot)$.
    \item For each $i\in[t]$ and $c\in\bit$, generate $\ket{\ct^{c}_{x^{i}_c,y^{i}_c}}$ by making the coherent query $\ket{0\|x^{i}_c}+(-1)^c\ket{1\|y^{i}_c}$ to $\mathsf{UPSG.}\Eval(k,\cdot)$. Set 
    $
        \ket{\tau}\coloneqq\bigotimes_{i\in[t],c\in\bit}\ket{x^i_c}\ket{y^i_c}\ket{\ct^{c}_{x^{i}_c,y^{i}_c}}.
        $
    \item If $b=0$, apply the unitary $V$ on $\ket{0\|x}\ket{\phi_k(0\|x)}\ket{\tau}$ and output the second register and $1\|y$. If $b=1$, apply the unitary $U$ on $\ket{1\|y}\ket{\phi_k(1\|y)}\ket{\tau}$ and output the second register and $0\|x$.
\end{enumerate}

Since $\cB$ does not abort if $(k,x,y,\mathbf{w})\in G$, the probability that the adversary $\cB$ wins is 
\begin{align}
    \Pr[\cB\:\mathrm{wins}]
    \ge&\sum_{(k,x,y,\mathbf{w})\in G}\frac{\Pr[k,x,y,\mathbf{w}]}{2}
    \left(
    \left\|
    \bra{\phi_k(0\|x)}_\regY
    \left(V\ket{1\|y}_\regX\ket{\phi_k(1\|y)}_\regY\ket{\tau}_\regZ\right)\right\|^2\right.\\
    &\left.+\left\|
    \bra{\phi_k(1\|y)}_\regY
    \left(V\ket{0\|x}_\regX\ket{\phi_k(0\|x)}_\regY\ket{\tau}_\regZ\right)\right\|^2\right)\\
    \ge&\sum_{(k,x,y,\mathbf{w})\in G}\frac{\Pr[k,x,y,\mathbf{w}]}{2}\max
    \left\{\left\|\bra{\phi_k(0\|x)}_\regY\left(V\ket{1\|y}_\regX\ket{\phi_k(1\|y)}_\regY\ket{\tau}_\regZ\right)\right\|^2\right.\\,
    &\left.\left\|\bra{\phi_k(1\|y)}_\regY\left(V\ket{0\|x}_\regX\ket{\phi_k(0\|x)}_\regY\ket{\tau}_\regZ\right)\right\|^2\right\} \label{caluculation1_in_SKE_from_UPSG}\\
    \ge&\sum_{(k,x,y,\mathbf{w})\in G}\frac{\Pr[k,x,y,\mathbf{w}]}{2}\frac{1}{4p(\secp)^2}
    \ge\frac{1}{32p(\secp)^3}\label{caluclation2_in_SKE_from_UPSG}
\end{align}
for infinitely many $\secp$,
where we have used \cref{conversion} in \cref{caluculation1_in_SKE_from_UPSG}, 
and \cref{lem:avg argument} in \cref{caluclation2_in_SKE_from_UPSG}.
This shows that $\cB$ breaks the security of the UPSG. Hence we have shown the theorem.
\end{proof}

Finally, we give a proof of \cref{lem:avg argument}

\begin{proof}[Proof of \cref{lem:avg argument}]
    We define
    \begin{align}
        T\coloneqq\left\{(k,x,y,\mathbf{w}):\Pr[\top\gets\mathrm{Hybrid\:1}|k,x,y,\mathbf{w}]\ge\frac{1}{2}+\frac{1}{2p(\secp)}\right\}.
    \end{align}
    From $\Pr[\top\gets\mathrm{Hybrid\:1}]=\Pr[\top\gets\mathrm{Hybrid\:0}]\ge\frac{1}{2}+\frac{1}{p(\secp)}$ for infinitely many $\secp$, and the definition of $T$,
    \begin{align}
    \frac{1}{2}+\frac{1}{p(\secp)}
    \le&\Pr[\top\gets\mathrm{Hybrid\:1}]\\
    =&\sum_{(k,x,y,\mathbf{w})\in T}\Pr[k,x,y,\mathbf{w}]\Pr[\top\gets\mathrm{Hybrid\:1}|k,x,y,\mathbf{w}]\\
    &+\sum_{(k,x,y,\mathbf{w})\notin T}\Pr[k,x,y,\mathbf{w}]\Pr[\top\gets\mathrm{Hybrid\:1}|k,x,y,\mathbf{w}]\\
    <&\sum_{(k,x,y,\mathbf{w})\in T}\Pr[k,x,y,\mathbf{w}]+\left(\frac{1}{2}+\frac{1}{2p(\secp)}\right)\sum_{(k,x,y,\mathbf{w})\notin T}\Pr[k,x,y,\mathbf{w}]\\
    \le&\sum_{(k,x,y,\mathbf{w})\in T}\Pr[k,x,y,\mathbf{w}]+\frac{1}{2}+\frac{1}{2p(\secp)}
    \end{align}
    for infinitely many $\secp$,
    which means
    \begin{align}
        \sum_{(k,x,y,\mathbf{w})\in T}\Pr[k,x,y,\mathbf{w}]\ge\frac{1}{2p(\secp)}\label{caluclation_for_T}
    \end{align}
    for infinitely many $\secp$.
    Thus,
    \begin{align}
    \sum_{(k,x,y,\mathbf{w})\in G}\Pr[k,x,y,\mathbf{w}]=&1-\sum_{(k,x,y,\mathbf{w})\notin G}\Pr[k,x,y,\mathbf{w}]\\
    =&1-\sum_{(k,x,y,\mathbf{w})\notin T\lor x\in\mathbf{w}\lor y\in\mathbf{w}\lor x= y}\Pr[k,x,y,\mathbf{w}]\\
    \ge&1-\left(\sum_{(k,x,y,\mathbf{w})\notin T}\Pr[k,x,y,\mathbf{w}]+\sum_{x\in\mathbf{w}\lor y\in\mathbf{w}\lor x= y}\Pr[k,x,y,\mathbf{w}]\right)\\
    \ge&\frac{1}{2p(\secp)}-\negl(\secp)
    \ge\frac{1}{4p(\secp)}
    \end{align}
    for infinitely many $\secp$,
    where the first inequality follows from the union bound and in the second inequality we have used \cref{caluclation_for_T} and $\sum_{x\in\mathbf{w}\lor y\in\mathbf{w}\lor x= y}\Pr[k,x,y,\mathbf{w}]=\negl(\lambda)$ since $x$, $y$ and each element of $\mathbf{w}$ is selected independently and uniformly at random.
\end{proof}

\section{MACs with Unclonable Tags}
\label{sec:unclonable_tag}
In this section, we define MACs with unclonable tags and construct it from UPSGs. 

\subsection{Definition}
First, we give the definition of the standard EUF-CMA-secure MACs. However, in this paper, we consider more general case where the tags could be quantum states.
MACs with classical tags can be considered as a special case where the tags are computational-basis states.
\begin{definition}[EUF-CMA-Secure MACs]
An EUF-CMA-secure MAC is a set $(\KeyGen,\Tag,\Ver)$ of QPT algorithms such that
\begin{itemize}
    \item 
    $\KeyGen(1^\secp)\to\sigk:$
    It is a QPT algorithm that, on input the security parameter $\secp$, outputs a classical key $\sigk$.
    \item 
    $\Tag(\sigk,m)\to\tau:$
    It is a QPT algorithm that, on input $\sigk$ and a classical message $m$, outputs an $n$-qubit quantum state $\tau$.
    \item 
    $\Ver(\sigk,m,\rho)\to\top/\bot:$
    It is a QPT algorithm that, on input $\sigk$, $m$, and a quantum state $\rho$, outputs $\top/\bot$.
\end{itemize}
   We require the following two properties.
   
   \paragraph{Correctness:}
   For any $m$,
   \begin{align}
    \Pr\left[\top\gets\Ver(\sigk,m,\tau):
    \begin{array}{rr}
    \sigk\gets\KeyGen(1^\secp)\\
    \tau\gets\Tag(\sigk,m)
    \end{array}
    \right]\ge1-\negl(\secp). 
   \end{align}

   \paragraph{EUF-CMA security:}
   For any QPT adversary $\cA$,
   \begin{align}
    \Pr\left[\top\gets\Ver(\sigk,m^*,\rho):
    \begin{array}{rr}
    \sigk\gets\KeyGen(1^\secp)\\
    (m^*,\rho)\gets\cA^{\mathsf{Tag}(\sigk,\cdot)}(1^{\lambda})
    \end{array}
    \right]\le\negl(\secp), 
   \end{align}
   where $\cA$ queries the oracle only classically, and $\cA$ is not allowed to query $m^*$.

\end{definition}

The following corollary is straightforward from the definition of UPSGs.
\begin{corollary}
    \label{coro:MAC_from_UPSG}
If UPSGs exist, then EUF-CMA-secure MACs exist.    
\end{corollary}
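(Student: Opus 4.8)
The plan is to use the UPSG essentially verbatim as a MAC, so that the security reduction is the identity map. Given a UPSG $(\mathsf{UPSG}.\KeyGen,\mathsf{UPSG}.\Eval)$, I would define the MAC $(\KeyGen,\Tag,\Ver)$ as follows. $\KeyGen(1^\secp)$ runs $k\gets\mathsf{UPSG}.\KeyGen(1^\secp)$ and outputs $\sigk\coloneqq k$. On a message $m$, $\Tag(\sigk,m)$ runs $\mathsf{UPSG}.\Eval(k,m)$ and outputs the resulting pure state $\ket{\phi_k(m)}$ as the tag $\tau$. Finally, $\Ver(\sigk,m,\rho)$ projects $\rho$ onto $\ket{\phi_k(m)}$ using exactly the coherent procedure described in \cref{rem:UPSG} (prepare $\rho\otimes\ket{{\rm junk}_k}\bra{{\rm junk}_k}$, apply $U_k^\dagger$, and measure in the computational basis), outputting $\top$ iff the projection succeeds.

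Correctness is immediate: an honestly generated tag is $\tau=\ket{\phi_k(m)}$, and the projection of $\ket{\phi_k(m)}$ onto $\ket{\phi_k(m)}$ succeeds with probability $1$, so $\Ver$ accepts with probability $1-\negl(\secp)$. For EUF-CMA security, the key point is that any forger $\cA$ against this MAC is \emph{already} an adversary against the unpredictability of the UPSG. Indeed, $\cA$ queries $\Tag(\sigk,\cdot)$ classically, and each query on $m$ returns precisely $\ket{\phi_k(m)}$, i.e., the output of $\mathsf{UPSG}.\Eval(k,m)$; at the end $\cA$ outputs $(m^*,\rho)$ with $m^*$ never queried such that $\Ver(\sigk,m^*,\rho)=\top$ with non-negligible probability. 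Since $\Ver$ accepts exactly when the projection of $\rho$ onto $\ket{\phi_k(m^*)}$ succeeds, $\cA$ wins the game of \cref{def:unpredictability} with the same non-negligible probability, taking $x^*\coloneqq m^*$. This contradicts the unpredictability of the UPSG, so no such forger exists.

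The only thing that needs checking — and this is where the statement hides a tiny bit of content rather than being completely trivial — is that a \emph{classically} querying forger is a legitimate adversary in the unpredictability game, which is phrased for \emph{quantum} queries. This is fine: a classical query on $m$ is the special case in which the query register carries a single computational-basis state $\ket{m}_\regX$, and the ``not queried'' condition of \cref{def:unpredictability}, namely $\langle\psi_i|(|x^*\rangle\langle x^*|_{\regX}\otimes I)|\psi_i\rangle=0$ for every query $i$, is implied by the forger's requirement that $m^*$ never appears among its classical queries. Hence the forging probability is upper-bounded by the (negligible) unpredictability advantage, and the corollary follows. I expect no genuine obstacle here; the entire argument is a matching of definitions, which is why the corollary is labeled straightforward.
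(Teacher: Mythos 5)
Your proposal is correct and is essentially identical to the paper's proof: the paper uses exactly the same construction ($\Tag$ outputs $\ket{\phi_k(m)}$, $\Ver$ projects onto it) and simply asserts that EUF-CMA security ``follows from the unpredictability of UPSG.'' The only difference is that you spell out the detail the paper leaves implicit---that a classically querying forger is a valid adversary in the quantum-query unpredictability game and that never querying $m^*$ classically implies the zero-weight ``not queried'' condition of \cref{def:unpredictability}---which is a worthwhile check, not a deviation.
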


\begin{proof}[Proof of \cref{coro:MAC_from_UPSG}]
    Let $(\KeyGen',\Eval')$ be a UPSG. We construct EUF-CMA-secure MAC $(\KeyGen,\Tag,\Ver)$ as follows:
    \begin{itemize}
        \item $\KeyGen(1^\secp)\to\sigk:$ Run $k\gets\KeyGen'(1^\secp)$ and output it as $\sigk$.
        \item $\Tag(\sigk,m)\to\tau:$ Parse $\sigk=k$. Run $\ket{\phi_k(m)}\gets\Eval'(k,m)$ and output it as $\tau$.
        \item $\Ver(\sigk,m,\rho)\to\top/\bot:$ Parse $\sigk=k$. Project $\rho$ onto $\ket{\phi_k(m)}\bra{\phi_k(m)}$. If the projection is successful, output $\top$. Otherwise, output $\bot$.
    \end{itemize}
    The correctness is clear. The EUF-CMA-security follows from the unpredictability of UPSG.
\end{proof}

Next, we define MACs with unclonable tags.

\begin{definition}[MACs with Unclonable Tags]
   Let $(\KeyGen,\Tag,\Ver)$ be an EUF-CMA-secure MAC. If $(\KeyGen,\Tag,\Ver)$ satisfies the following property (which we call unclonability), we call it MAC with unclonable tags: 
   For any QPT adversary $\cA$ and any polynomials $t$ and $\ell$,
   \begin{align}
    \Pr\left[
    \mathsf{Count}(\sigk,m^*,\xi)\ge t+1
    :
    \begin{array}{rr}
    \sigk\gets\KeyGen(1^\secp)\\
    (m^*,\st)\gets\cA^{\mathsf{Tag}(\sigk,\cdot)}(1^{\lambda})\\
    \tau^{\otimes t}\gets\Tag(\sigk,m^*)^{\otimes t}\\
    \xi\gets\cA^{\Tag(\sigk,\cdot)}(\tau^{\otimes t},\st)
    \end{array}
    \right]\le\negl(\secp), 
    \label{unclonability}
   \end{align}
   where $\cA$ queries the oracle only classically, and $\cA$ is not allowed to query $m^*$. 
    $\tau^{\otimes t}\gets\Tag(\sigk,m^*)^{\otimes t}$ means that $\Tag$ algorithm is run $t$ times and $t$ copies of $\tau$ are generated.
   $\xi$ is a quantum state on $\ell$ registers, $\regR_1,...,\regR_\ell$, each of which is of $n$ qubits. Here, $\mathsf{Count}(\sigk,m^*,\xi)$ is the following QPT algorithm: for each $j\in[\ell]$, it takes the state on $\regR_j$ as input, and runs $\Ver(\sigk,m^*,\cdot)$ to get $\top$ or $\bot$. Then, it outputs the total number of $\top$.
\end{definition}

\begin{remark}
EUF-CMA security is automatically implied by the unclonability, \cref{unclonability}.\footnote{The proof is easy. Let $\cA$ be a QPT adversary that breaks the EUF-CMA security,
which outputs $(m^*,\rho)$. Then the QPT adversary $\cB$ that breaks the unclonability is constructed as follows:
it first simulates $\cA$ to get $(m^*,\rho)$.
It then sends $m^*$ to the challenger to get its tag $\tau$.
It finally sends $\tau$ and $\rho$ to the challenger, both of which are accepted as valid tags.
} 
\end{remark}

\subsection{Construction from UPSGs}
In this subsection, we construct MACs with unclonable tags from EUF-CMA-secure MACs and IND-CPA-secure SKE schemes. 

\begin{theorem}\label{thm:unclonable_MAC_from_UPSG_and_SKE}
If EUF-CMA-secure MACs (secure against classically querying QPT adversaries) 
and IND-CPA-secure SKE schemes for classical messages (secure against classically querying QPT adversaries) exist, then MACs with unclonable tags exist.
\end{theorem}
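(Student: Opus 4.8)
The plan is to realize the construction sketched in the technical overview. Starting from the assumed EUF-CMA-secure MAC $(\KeyGen_{\mathsf{M}},\Tag_{\mathsf{M}},\Ver_{\mathsf{M}})$ and the assumed IND-CPA-secure SKE for classical messages (which I upgrade to an SKE for quantum messages $(\KeyGen_{\mathsf{E}},\Enc,\Dec)$ via \cref{lem:SKE_for_classical_imply_SKE_for_quantum}), I would build the unclonable MAC as follows. $\KeyGen$ samples a MAC key $\sigk$ and an SKE key $\sk$. To tag $m$, sample $x,\theta\gets\bit^\secp$, compute the inner tag $\tau_{\mathsf{M}}\gets\Tag_{\mathsf{M}}(\sigk,m\|x\|\theta)$, and output $\tau\coloneqq\Enc(\sk,\tau_{\mathsf{M}}\otimes\proj{x,\theta})\otimes\ket{x}_{\theta}$, whose two tensor factors live on a ciphertext register $\regC$ and a Wiesner register $\regB$. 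Verification decrypts $\regC$ to recover $(\tau_{\mathsf{M}},x,\theta)$, projects $\regB$ onto $\ket{x}_{\theta}$, and checks $\Ver_{\mathsf{M}}(\sigk,m\|x\|\theta,\tau_{\mathsf{M}})$, accepting iff both succeed. Correctness is immediate, and EUF-CMA follows from the remark after the definition of unclonable MACs (or directly from MAC security), so the entire task reduces to proving unclonability.

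To bound the winning probability $p$ of the cloning game I would chain three reductions. First, a reduction $\cB_{\mathsf{M}}$ to MAC security generates $\sk$ itself, answers every $\Tag$ query and produces all $t$ challenge tags by calling its MAC oracle on the padded messages, and at the end uses $\sk$ to run $\mathsf{Count}$. Since $\cA$ never queries $m^*$, any accepting output register whose decrypted pair $(x',\theta')$ lies outside the $t$ challenge pairs $\{(x_i,\theta_i)\}_{i\in[t]}$ yields a valid inner tag for the fresh message $m^*\|x'\|\theta'$, i.e. a MAC forgery. Hence, up to $\negl(\secp)$, every accepting register uses a pair from $\{(x_i,\theta_i)\}_{i\in[t]}$; as there are only $t$ such pairs but $\ge t+1$ accepting registers, the pigeonhole principle forces two accepting registers $j_1,j_2$ to share some pair $(x_i,\theta_i)$.

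Second, I pass from this decryption-dependent event to a decryption-free one. Because verification decrypts only $\regC$ while projecting only the disjoint register $\regB$, the decrypt-to-$(x_i,\theta_i)$ measurement on $\regC_{j_1}\otimes\regC_{j_2}$ is a POVM element $E\le I$, giving the operator inequality
\[
\Pr[\text{$j_1,j_2$ both accept with pair }(x_i,\theta_i)]\le \Pr[F_{i,j_1,j_2}],
\]
where $F_{i,j_1,j_2}$ is the event that freshly projecting the $\regB$-parts of registers $j_1,j_2$ onto $\ket{x_i}_{\theta}$ both succeed. This $F_{i,j_1,j_2}$ is checkable knowing only $(x_i,\theta_i)$, with no decryption. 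A reduction $\cB_{\mathsf{E}}$ to IND-CPA then embeds the $i$-th challenge ciphertext as its single IND-CPA challenge (real plaintext $\tau_{\mathsf{M}}\otimes\proj{x_i,\theta_i}$ versus a fixed dummy of equal length), encrypts everything else through its encryption oracle, and finally measures $F_{i,j_1,j_2}$ using $(x_i,\theta_i)$. Thus $\Pr[F_{i,j_1,j_2}]$ shifts by at most $\negl(\secp)$ when the $i$-th ciphertext is replaced by the dummy. Third, in the dummy game the pair $(x_i,\theta_i)$ occurs only through the single state $\ket{x_i}_{\theta}$, so a reduction $\cB_{\mathsf{W}}$ to \cref{lem:Wiesner_money} plants the challenge state there, simulates the whole game (it holds both keys), and forwards the $\regB$-parts of $j_1,j_2$ to the Wiesner challenger; its success probability is exactly $\Pr[F_{i,j_1,j_2}]$ in the dummy game, hence $\negl(\secp)$. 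Union-bounding over the polynomially many choices $i\in[t]$ and $j_1<j_2\in[\ell]$ yields $p\le\negl(\secp)$.

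The main obstacle is precisely the second step. The natural ``win'' event is defined through $\Ver$ and therefore requires the secret decryption key to evaluate, whereas the hiding argument must go through IND-CPA, whose reduction has no decryption key; the two requirements seem incompatible. The resolution is the observation that decryption and Wiesner verification act on disjoint registers, which lets me over-approximate the decryption-constrained acceptance by the decryption-free projection $F_{i,j_1,j_2}$ through the one-line operator inequality above, so that both the IND-CPA and the Wiesner reductions can handle the same event. A secondary point to get right is that $m\|x\|\theta$ must use an injective (e.g. length-prefixed) encoding, so that a fresh pair $(x',\theta')$ genuinely yields a message outside the MAC oracle's query set, making the forgery in $\cB_{\mathsf{M}}$ legitimate.
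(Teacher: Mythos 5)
Your proposal is correct and matches the paper's proof essentially step for step: same construction, same use of EUF-CMA security to force every accepted pair into the challenge set $\{(x_i,\theta_i)\}_{i\in[t]}$, the same pigeonhole argument, the same relaxation to a decryption-free BB84 projection (the paper's Hybrids 2--4, justified by exactly the operator-monotonicity observation you make), the same IND-CPA step to erase the classical description of the BB84 state, and the same closing reduction to Wiesner money. The only cosmetic differences are that you union-bound over $(i,j_1,j_2)$ where the paper randomly guesses $i^*\gets[t]$ and $j_0,j_1\gets[\ell]$, and you invoke single-challenge IND-CPA to replace only the $i$-th ciphertext where the paper replaces all $t$ ciphertexts at once via its multi-challenge lemma (\cref{lem:CPA_multi}).
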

Because
EUF-CMA-secure MACs (secure against classically querying QPT adversaries) 
can be constructed from UPSG (\cref{coro:MAC_from_UPSG}), and
IND-CPA-secure SKE schemes for classical messages (secure against classically querying QPT adversaries) can be constructed from UPSGs (\cref{thm:SKE_from_UPSG}), we have the following corollary:
\begin{corollary}\label{coro:unclonable_MAC_from_UPSG}
If UPSGs exist, then MACs with unclonable tags exist.    
\end{corollary}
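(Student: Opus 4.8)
My plan for \cref{coro:unclonable_MAC_from_UPSG} is to instantiate \cref{thm:unclonable_MAC_from_UPSG_and_SKE} with the two primitives already known to follow from UPSGs: EUF-CMA-secure MACs (secure against classical queries) via \cref{coro:MAC_from_UPSG}, and IND-CPA-secure SKE for classical messages (secure against classical queries) via \cref{thm:SKE_from_UPSG}. Both hypotheses of \cref{thm:unclonable_MAC_from_UPSG_and_SKE} are then met, so MACs with unclonable tags exist, which is exactly the corollary. Thus the corollary itself is a one-line combination, and the real substance lives in \cref{thm:unclonable_MAC_from_UPSG_and_SKE}; the rest of this plan sketches how I would prove that theorem.

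Given an EUF-CMA-secure MAC $(\KeyGen_M,\Tag_M,\Ver_M)$ and an IND-CPA-secure SKE scheme $(\KeyGen_E,\Enc,\Dec)$, lifted to quantum messages via \cref{lem:SKE_for_classical_imply_SKE_for_quantum}, I would build the unclonable-tag MAC with key $\sigk=(\sigk_M,\sk)$ as follows. To tag a message $m$, sample $x,\theta\gets\bit^\secp$, compute the inner tag $\tau_{m\|x\|\theta}\gets\Tag_M(\sigk_M,m\|x\|\theta)$, form $\ct\gets\Enc(\sk,\tau_{m\|x\|\theta}\otimes\ket{x,\theta}\bra{x,\theta})$, and output $\ct\otimes\ket{x}\bra{x}_\theta$, where $\ket{x}_\theta$ is the BB84 state. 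Verification decrypts $\ct$ to recover $(\tau,x,\theta)$, runs $\Ver_M(\sigk_M,m\|x\|\theta,\tau)$, and projects the second register onto $\ket{x}_\theta$, accepting iff both checks succeed. Correctness is immediate from the correctness of the MAC, of the SKE, and from the fact that $\ket{x}_\theta$ passes its own projection.

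For unclonability I would suppose a QPT $\cA$ wins with non-negligible probability, producing $t+1$ accepting tags after receiving $t$ honest tags of $m^*$ with fresh pairs $(x_1,\theta_1),\dots,(x_t,\theta_t)$, and argue in three steps. First, by EUF-CMA security of the inner MAC, every accepting output tag must decrypt to a valid MAC tuple on some $m^*\|x\|\theta$, and since $\cA$ never queries $m^*$, the pair $(x,\theta)$ must, up to negligible probability, be one of the $t$ challenge pairs; otherwise a reduction holding $\sk$ decrypts $\cA$'s output and extracts a forgery on the fresh message $m^*\|x\|\theta$. Second, since there are $t+1$ accepting tags but only $t$ admissible pairs, pigeonhole forces two outputs to reuse a pair $(x_i,\theta_i)$, so $\cA$ has effectively produced two copies of $\ket{x_i}_{\theta_i}$. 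Third, this contradicts Wiesner-money unclonability (\cref{lem:Wiesner_money}): a reduction $\cB$ guesses the reused index $i^*\gets[t]$, plants the Wiesner challenge in the $i^*$-th BB84 register while encrypting a dummy in the $i^*$-th ciphertext (indistinguishable by IND-CPA security, \cref{lem:CPA_multi}, since $\cB$ cannot know the challenge $(x^*,\theta^*)$), simulates the rest honestly, decrypts $\cA$'s outputs to locate the tags whose recovered pair matches none of the $t-1$ known pairs, and forwards two such BB84 registers as two copies of the challenge. With probability $1/t$ the guess is correct, so $\cB$ wins with non-negligible probability.

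The main obstacle is this last step, where the three security properties must be glued together coherently over quantum objects. The delicate points are: (i) justifying via a hybrid over the IND-CPA game that hiding $(x_i,\theta_i)$ inside the ciphertext leaves the BB84 states as the only handle on them, so that cloning tags really reduces to cloning BB84 states; (ii) ensuring that $\cB$'s decryption of $\cA$'s output ciphertexts, used to route the pigeonhole argument, acts on a register disjoint from the BB84 register and therefore does not disturb the states $\cB$ must hand back; and (iii) confirming the inner MAC stays EUF-CMA-secure even with quantum tags, so the forgery-extraction reduction of the first step is well defined. Once these are in place, the $1/t$ loss from guessing $i^*$ is only polynomial, the advantage remains non-negligible, and the contradiction is complete.
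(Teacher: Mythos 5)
Your derivation of the corollary itself is exactly the paper's: instantiate \cref{thm:unclonable_MAC_from_UPSG_and_SKE} with \cref{coro:MAC_from_UPSG} and \cref{thm:SKE_from_UPSG}, and you are done. Your sketch of \cref{thm:unclonable_MAC_from_UPSG_and_SKE} also follows the paper's strategy (EUF-CMA rules out fresh pairs, pigeonhole forces reuse, IND-CPA hides $(x_{i^*},\theta_{i^*})$, Wiesner unclonability gives the contradiction), but the way you glue IND-CPA to the final reduction has a genuine flaw.

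The problem is your ``decrypt-to-route'' step. You justify planting a dummy in the $i^*$-th ciphertext by IND-CPA security, but the IND-CPA reduction that must certify this swap only gets an \emph{encryption} oracle for $\Enc(\sk,\cdot)$; it does not hold $\sk$ and cannot decrypt. If the success predicate of the game (``the two registers located by decryption-routing both project onto $\ket{x_{i^*}}_{\theta_{i^*}}$'') requires running $\Dec(\sk,\cdot)$ on $\cA$'s outputs, the IND-CPA reduction cannot evaluate it, so the hybrid step is not a valid reduction as stated. This is precisely why the paper orders its hybrids the way it does: Hybrids 3 and 4 first remove decryption and inner-MAC verification from the success predicate (using your own observation (ii), that these act on registers disjoint from the BB84 register, so dropping them can only increase the acceptance probability, \cref{lem:hyb4}), and only \emph{then} is the dummy encryption swapped in (Hybrid 5, \cref{lem:hyb5}), at which point the predicate involves nothing but projections onto $\ket{x_{i^*}}_{\theta_{i^*}}$, which the reduction can perform since it chose $(x_{i^*},\theta_{i^*})$ itself. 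The final Wiesner adversary then does not route by decryption at all: it simply guesses two registers $j_0,j_1\gets[\ell]$, losing only the polynomial factor $\binom{\ell}{2}^{-1}$ (\cref{lem:hyb6_negl}). A secondary issue with your routing rule is that once the $i^*$-th ciphertext is a dummy, copies of that tag decrypt to the dummy rather than to $(x^*,\theta^*)$, and adversarially crafted junk ciphertexts also ``match none of the known pairs,'' so the rule can select the wrong registers; random guessing (or direct BB84 projection) sidesteps this entirely. With that reordering your argument becomes the paper's proof.
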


\begin{proof}[Proof of \cref{thm:unclonable_MAC_from_UPSG_and_SKE}]
    Let $(\mathsf{MAC.}\KeyGen,\mathsf{MAC.}\Tag,\mathsf{MAC.}\Ver)$ be an EUF-CMA-secure MAC secure against classically querying QPT adversaries
    and 
    $(\mathsf{SKE.}\KeyGen,\allowbreak\mathsf{SKE.}\Enc, \mathsf{SKE.}\Dec)$ be an IND-CPA-secure SKE scheme for quantum messages.
   (From \cref{lem:SKE_for_classical_imply_SKE_for_quantum}, such SKE schemes exist if SKE schemes for classical messages secure against classically
   querying QPT adversaries exist.) 
    We construct a MAC with unclonable tags $(\KeyGen,\Tag,\Ver)$ as follows:
    \begin{itemize}
        \item $\KeyGen(1^\secp)\to \sigk':$ Run $\sk\gets\mathsf{SKE.}\KeyGen(1^\secp)$ and $\sigk\gets\mathsf{MAC.}\KeyGen(1^\secp)$.
        Output $\sigk'\coloneqq(\sk,\sigk)$.
        \item $\Tag(\sigk',m)\to\tau':$ Parse $\sigk'=(\sk,\sigk)$. It does the following:
        \begin{enumerate}
            \item Choose $x,\theta\gets\bit^\secp$ and generate $\ket{x}_{\theta}$. Here, $\ket{x}_{\theta}\coloneqq\bigotimes_{i\in[\secp]}H^{\theta^i}\ket{x^i}$, 
            where $H$ is the Hadamard gate, and $x^i$ and $\theta^i$ denote the $i$'th bit of $x$ and $\theta$, respectively.
            \item Run $\tau\gets\mathsf{MAC.}\Tag(\sigk,m\|x\|\theta)$.
            \item Run $\ct\gets\mathsf{SKE.}\Enc(\sk,\ket{x\|\theta}\bra{x\|\theta}\otimes\tau)$.
        \end{enumerate}
        Output $\tau'\coloneqq |x\rangle\langle x|_{\theta}\otimes\ct$.
        \item $\Ver(\sigk',m,\rho)\to\top/\bot:$ Parse $\sigk'=(\sk,\sigk)$. 
        Let $\rho$ be a state on two registers $\regA$ and $\regC$. (If $\rho$ is honestly generated, $\rho_{\regA,\regC}=(|x\rangle\langle x|_\theta)_\regA\otimes\ct_\regC$.) 
        It does the following:
        \begin{enumerate}
            \item Run $\mathsf{SKE.}\Dec(\sk,\cdot)$ on the register $\regC$ to get another state $\rho'_{\regA,\regM}$ on the registers $\regA$ and $\regM$.
            \item Measure the first $2\secp$ qubits of $\regM$ in the computational basis to get the result $x'\|\theta'$.  
            \item Run $\mathsf{MAC.}\Ver(\sigk,m\|x'\|\theta',\cdot)$ on the remaining qubits of the register $\regM$ to get $v\in\{\top,\bot\}$. 
            Project the register $\regA$ onto $\ket{x'}_{\theta'}$. If the projection is successful and $v=\top$, output $\top$. 
            Otherwise, output $\bot$.
        \end{enumerate}
       
    \end{itemize}
    The correctness is clear. Since the unclonablity implies EUF-CMA security, it suffices to show our construction satisfies the unclonability. 
    Let $t$ and $\ell$ be polynomials. We define the Hybrid 0 as follows, which is the original security game of unclonability between the challenger $\cC$ and QPT adversary $\cA$.
    
    \paragraph{Hybrid 0}
    \begin{enumerate}
    \item The challenger $\cC$ runs $\sk\gets\mathsf{SKE.}\KeyGen(1^{\lambda})$ and $\sigk\gets\mathsf{MAC.}\KeyGen(1^\secp)$.
    \item The adversary $\cA$ sends $m^*$ to $\cC$, where $\cA$ can make classical queries to the oracle $\mathcal{O}_{\sk,\sigk}$ and does not query $m^*$. Here, $\mathcal{O}_{\sk,\sigk}$ takes a bit string $m$ as input and works as follows:
    \begin{enumerate}
        \item Choose $x,\theta\gets\bit^\secp$ and generate $\ket{x}_{\theta}$.
        \item Run $\tau\gets\mathsf{MAC.}\Tag(\sigk,m\|x\|\theta)$ and $\ct\gets\mathsf{SKE.}\Enc(\sk,\ket{x\|\theta}\bra{x\|\theta}\otimes\tau)$.
        \item Output $\ket{x}\bra{x}_{\theta}\otimes\ct$.
    \end{enumerate}
    \item For each $i\in[t]$, $\cC$ does the following.
    \begin{enumerate}
        \item Choose $x_i,\theta_i\gets\bit^\secp$ and generate $\ket{x_i}_{\theta_i}$.
        \item Run $\tau_i\gets\mathsf{MAC.}\Tag(\sigk,m^*\|x_i\|\theta_i)$ and $\ct_i\gets\mathsf{SKE.}\Enc(\sk,\ket{x_i\|\theta_i}\bra{x_i\|\theta_i}\otimes\tau_i)$.
    \end{enumerate} 
    \item $\cC$ sends $\{\ket{x_i}\bra{x_i}_{\theta_i}\otimes\ct_i\}_{i\in[t]}$ to $\cA$.
    \item $\cA$ sends $\xi_{\regR_1,...,\regR_\ell}$, where $\cA$ can make classical queries to the oracle $\mathcal{O}_{\sk,\sigk}$ and does not query $m^*$. 
    Here $\regR_j$ has two registers $\regA_j$ and $\regC_j$ for each $j\in[\ell]$.
    \item For each $j\in[\ell]$, $\cC$ does the following.\label{step:ver_hyb0}
    \begin{enumerate}
            \item Run $\mathsf{SKE.}\Dec(\sk,\cdot)$ on the register $\regC_j$ to get the state on the registers $\regA_j$ and $\regM_j$.
            \item Measure the first $2\secp$ qubits of the register $\regM_j$ in the computational basis to get the result $x'_j\|\theta'_j$.  
            \item Run $\mathsf{MAC.}\Ver(\sigk,m^*\|x'_j\|\theta'_j,\cdot)$ on the remaining qubits of the register $\regM_j$ to get $v_j\in\{\top,\bot\}$. 
            \item Project the register $\regA_j$ onto $\ket{x'_j}_{\theta'_j}$. 
            \item If the projection is successful and $v_j=\top$, set $w_j \coloneqq 1$. 
            Otherwise, set $w_j \coloneqq 0$.
    \end{enumerate}
    \item 
    If $\sum_{j=1}^\ell w_j\ge t+1$, $\cC$ outputs $\top$. Otherwise, $\cC$ outputs $\bot$. 
    \end{enumerate}

    To show the theorem, let us assume that there exists a QPT adversary $\cA$ such that
    $\Pr[\top\gets\mbox{Hybrid 0}]\ge\frac{1}{\poly(\secp)}$ for infinitely many $\secp$.
    Our goal is to construct an adversary that breaks the security of the Wiesner money scheme from $\cA$. 
    For that goal, we want to make sure that two copies of $|x\rangle_{\theta}$ are generated when $\cC$ outputs $\top$. 
    The next Hybrid 1 ensures such a situation, and the hop from Hybrid 0 to 1 can be done
    by invoking the EUF-CMA security of the MAC.\footnote{This is actually a well-known technique to construct a full money from a mini-scheme~\cite{STOC:AarChr12}.}
    \paragraph{Hybrid 1}

    \begin{enumerate}
    \item The challenger $\cC$ runs $\sk\gets\mathsf{SKE.}\KeyGen(1^{\lambda})$ and $\sigk\gets\mathsf{MAC.}\KeyGen(1^\secp)$.
    \item The adversary $\cA$ sends $m^*$ to $\cC$, where $\cA$ can make classical queries to the oracle $\mathcal{O}_{\sk,\sigk}$ and does not query $m^*$. Here, $\mathcal{O}_{\sk,\sigk}$ takes a bit string $m$ as input and works as follows:
    \begin{enumerate}
        \item Choose $x,\theta\gets\bit^\secp$ and generate $\ket{x}_{\theta}$.
        \item Run $\tau\gets\mathsf{MAC.}\Tag(\sigk,m\|x\|\theta)$ and $\ct\gets\mathsf{SKE.}\Enc(\sk,\ket{x\|\theta}\bra{x\|\theta}\otimes\tau)$.
        \item Output $\ket{x}\bra{x}_{\theta}\otimes\ct$.
    \end{enumerate}\label{step:challenge_query_hyb1}
    \item For each $i\in[t]$, $\cC$ does the following.
    \begin{enumerate}
        \item Choose $x_i,\theta_i\gets\bit^\secp$ and generate $\ket{x_i}_{\theta_i}$.
        \item Run $\tau_i\gets\mathsf{MAC.}\Tag(\sigk,m^*\|x_i\|\theta_i)$ and $\ct_i\gets\mathsf{SKE.}\Enc(\sk,\ket{x_i\|\theta_i}\bra{x_i\|\theta_i}\otimes\tau_i)$.
    \end{enumerate} 
    \item $\cC$ sends $\{\ket{x_i}\bra{x_i}_{\theta_i}\otimes\ct_i\}_{i\in[t]}$ to $\cA$.
    \item $\cA$ sends $\xi_{\regR_1,...,\regR_\ell}$, where $\cA$ can make classical queries to the oracle $\mathcal{O}_{\sk,\sigk}$ and does not query $m^*$. 
    Here $\regR_j$ has two registers $\regA_j$ and $\regC_j$ for each $j\in[\ell]$.\label{step:challenge_tag_hyb1}
    \item For each $j\in[\ell]$, $\cC$ does the followings.
    \begin{enumerate}
            \item Run $\mathsf{SKE.}\Dec(\sk,\cdot)$ on the register $\regC_j$ to get the state on the registers $\regA_j$ and $\regM_j$.
            \item Measure the first $2\secp$ qubits of the register $\regM_j$ in the computational basis to get the result $x'_j\|\theta'_j$.  
            \item Run $\mathsf{MAC.}\Ver(\sigk,m^*\|x'_j\|\theta'_j,\cdot)$ on the remaining qubits of the register $\regM_j$ to get $v_j\in\{\top,\bot\}$. 
            \item Project the register $\regA_j$ onto $\ket{x'_j}_{\theta'_j}$. 
            \item If the projection is successful and $v_j=\top$, set $w_j \coloneqq 1$. 
            Otherwise, set $w_j \coloneqq 0$.
    \end{enumerate}
    \item 
    If $\sum^\ell_{j=1}w_j\ge t+1$ {\color{red}and the event $E$ does not occur}, then 
    $\cC$ outputs $\top$. Otherwise, $\cC$ outputs $\bot$. 
    {\color{red} Here $E$ is the event defined as follows:} 
    \begin{itemize}
        \item {\color{red}\mbox{Event $E$}: \mbox{there exists $j\in[\ell]$ such that $(x'_j,\theta'_j)\not\in\{(x_i,\theta_i)\}_{i\in[t]}$ and $w_j=1$.}}
    \end{itemize}
    \end{enumerate}

    \begin{lemma}\label{lem:hyb1}
        $\Pr[\top\gets\rm{Hybrid\:0}]\le
        \Pr[\top\gets\rm{Hybrid\:1}]+\negl(\secp)$.
    \end{lemma}

    \begin{proof}[Proof of \cref{lem:hyb1}]
        We can show 
        \begin{align}
        \Pr[E]\le\negl(\secp) 
        \label{E_small}
        \end{align}
        whose proof is given later. If $\Pr[E]\le\negl(\secp)$,
        \begin{align}
            \Pr[\top\gets\rm{~Hybrid\:0}]
            &=\Pr[\top\gets\mathrm{~Hybrid\:0}\wedge E]+\Pr[\top\gets\mathrm{~Hybrid\:0}\wedge \bar{E}]\\
            &\le\negl(\secp)+\Pr[\top\gets\rm{~Hybrid\:1}],
        \end{align}
        which shows the lemma.

        Let us show \cref{E_small}. Assume that $\Pr[E]\ge\frac{1}{\poly(\secp)}$ for infinitely many $\secp\in\mathbb{N}$. 
        Then the following QPT adversary $\cB$ breaks the EUF-CMA security of the MAC:
        \begin{enumerate}
            \item The adversary $\cB$ runs $\sk\gets\mathsf{SKE.}\KeyGen(1^\secp)$.
            \item $\cB$ simulates the interaction between $\cC$ and $\cA$ in Hybrid 1 by querying to $\mathsf{MAC.}\Tag(\sigk,\cdot)$ up to the step \ref{step:challenge_tag_hyb1}. 
            Then, $\cB$ gets a classical message $m^*$ and a state $\xi$ on the registers $\regR_1,...\regR_\ell$, where $m^*$ is a challenge message that $\cA$ sends to $\cC$ in the step \ref{step:challenge_query_hyb1}. 
            Here $\regR_j$ has two registers $\regA_j$ and $\regC_j$ for each $j\in[\ell]$.
            \item For each $j\in[\ell]$, $\cB$ does the following:
            \begin{enumerate}
            \item Run $\mathsf{SKE.}\Dec(\sk,\cdot)$ on the register $\regC_j$ to get the state on the registers $\regA_j$ and $\regM_j$.
            \item Measure the first $2\secp$ qubits of the register $\regM_j$ in the computational basis to get the result $x'_j\|\theta'_j$. 
            \end{enumerate}
            \item $\cB$ chooses $j^*\gets [\ell]$. If $(x'_{j^*},\theta'_{j^*})\in\{(x_i,\theta_i)\}_{i\in[t]}$, $\cB$ aborts. Otherwise, $\cB$ outputs $m^*\|x'_{j^*}\|\theta'_{j^*}$ and the all qubits of the register $\regM_{j^*}$ except for the first $2\secp$-qubits.
        \end{enumerate}
        It is clear that $\cB$ does not query $m^*\|x'_{j^*}\|\theta'_{j^*}$.
        Let $\Pr[\cB\:\rm{wins}]$ be the probability that $\cB$ wins the above security game of EUF-CMA security. Then, we have
        $
            \Pr[\cB\:\mathrm{wins}]\ge\frac{1}{\ell}\Pr[E].
            $
        Therefore, $\cB$ breaks the EUF-CMA security if $\Pr[E]\ge\frac{1}{\poly(\secp)}$ for infinitely many $\secp\in\mathbb{N}$. This means $\Pr[E]\le\negl(\secp)$.
    \end{proof}

    If $\Pr[\top\gets\mbox{Hybrid 1}]\ge\frac{1}{\poly(\secp)}$ for infinitely many $\secp$,
    at least two copies of $|x\rangle_\theta$ for some $x$ and $\theta$ should be generated due to the pigeonhole principle.
    In the following Hybrid 2, we randomly guess the indexes of such states. 
    Then we have the following lemma.
        \begin{lemma}\label{lem:hyb2}
        
     $\Pr[\top\gets\mathrm{Hybrid\:2}]\ge\frac{1}{t}\Pr[\top\gets\mathrm{Hybrid\:1}]$.
    \end{lemma}

    \paragraph{Hybrid 2}
    \begin{enumerate}
    \item The challenger $\cC$ runs $\sk\gets\mathsf{SKE.}\KeyGen(1^{\lambda})$ and $\sigk\gets\mathsf{MAC.}\KeyGen(1^\secp)$.
    \item The adversary $\cA$ sends $m^*$ to $\cC$, where $\cA$ can make classical queries to the oracle $\mathcal{O}_{\sk,\sigk}$ and does not query $m^*$. Here, $\mathcal{O}_{\sk,\sigk}$ takes a bit string $m$ as input and works as follows:
    \begin{enumerate}
        \item Choose $x,\theta\gets\bit^\secp$ and generate $\ket{x}_{\theta}$.
        \item Run $\tau\gets\mathsf{MAC.}\Tag(\sigk,m\|x\|\theta)$ and $\ct\gets\mathsf{SKE.}\Enc(\sk,\ket{x\|\theta}\bra{x\|\theta}\otimes\tau)$.
        \item Output $\ket{x}\bra{x}_{\theta}\otimes\ct$.
    \end{enumerate}
    \item {\color{red} $\cC$ chooses $i^*\gets[t]$.} For each $i\in[t]$, $\cC$ does the following.
    \begin{enumerate}
        \item Choose $x_i,\theta_i\gets\bit^\secp$ and generate $\ket{x_i}_{\theta_i}$.
        \item Run $\tau_i\gets\mathsf{MAC.}\Tag(\sigk,m^*\|x_i\|\theta_i)$ and $\ct_i\gets\mathsf{SKE.}\Enc(\sk,\ket{x_i\|\theta_i}\bra{x_i\|\theta_i}\otimes\tau_i)$.
    \end{enumerate} 
    \item $\cC$ sends $\{\ket{x_i}\bra{x_i}_{\theta_i}\otimes\ct_i\}_{i\in[t]}$ to $\cA$.
    \item $\cA$ sends $\xi_{\regR_1,...,\regR_\ell}$, where $\cA$ can make classical queries to the oracle $\mathcal{O}_{\sk,\sigk}$ and does not query $m^*$. 
    Here $\regR_j$ has two registers $\regA_j$ and $\regC_j$ for each $j\in[\ell]$.
    \item For each $j\in[\ell]$, $\cC$ does the followings.
    \begin{enumerate}
            \item Run $\mathsf{SKE.}\Dec(\sk,\cdot)$ on the register $\regC_j$ to get the state on the registers $\regA_j$ and $\regM_j$.
            \item Measure the first $2\secp$ qubits of the register $\regM_j$ in the computational basis to get the result $x'_j\|\theta'_j$.  
            \item Run $\mathsf{MAC.}\Ver(\sigk,m^*\|x'_j\|\theta'_j,\cdot)$ on the remaining qubits of the register $\regM_j$ to get $v_j\in\{\top,\bot\}$. 
            \item Project the register $\regA_j$ onto $\ket{x'_j}_{\theta'_j}$. 
            \item If the projection is successful and $v_j=\top$ {\color{red} and $(x'_j,\theta'_j)=(x_{i^*},\theta_{i^*})$}, set $w_j \coloneqq 1$. 
            Otherwise, set $w_j \coloneqq 0$.
    \end{enumerate}
    \item 
    \Erase{If $\sum^\ell_{j=1}w_j\ge t+1$ and the event $E$ does not occur,}
    {\color{red}If $\sum^\ell_{j=1}w_j\ge2$}, $\cC$ outputs $\top$. Otherwise, $\cC$ outputs $\bot$. \Erase{ Here $E$ is the event defined as follows:} 
    \begin{itemize}
        \item \Erase{\mbox{Event $E$}: \Erase{\mbox{there exists $j\in[\ell]$ such that $(x'_j,\theta'_j)\not\in\{(x_i,\theta_i)\}_{i\in[t]}$ and $w_j=1$.}}}
    \end{itemize}
    \end{enumerate}

Let us define Hybrid 3 as follows.
The following lemma is straightforward.
    \begin{lemma}\label{lem:hyb3}
    $\Pr[\top\gets\mathrm{Hybrid\:3}]\ge\Pr[\top\gets\mathrm{Hybrid\:2}]$.
    \end{lemma}

    \paragraph{Hybrid 3}
    \begin{enumerate}
    \item The challenger $\cC$ runs $\sk\gets\mathsf{SKE.}\KeyGen(1^{\lambda})$ and $\sigk\gets\mathsf{MAC.}\KeyGen(1^\secp)$.
    \item The adversary $\cA$ sends $m^*$ to $\cC$, where $\cA$ can make classical queries to the oracle $\mathcal{O}_{\sk,\sigk}$ and does not query $m^*$. Here, $\mathcal{O}_{\sk,\sigk}$ takes a bit string $m$ as input and works as follows:
    \begin{enumerate}
        \item Choose $x,\theta\gets\bit^\secp$ and generate $\ket{x}_{\theta}$.
        \item Run $\tau\gets\mathsf{MAC.}\Tag(\sigk,m\|x\|\theta)$ and $\ct\gets\mathsf{SKE.}\Enc(\sk,\ket{x\|\theta}\bra{x\|\theta}\otimes\tau)$.
        \item Output $\ket{x}\bra{x}_{\theta}\otimes\ct$.
    \end{enumerate}
    \item $\cC$ chooses $i^*\gets[t]$. For each $i\in[t]$, $\cC$ does the following.
    \begin{enumerate}
        \item Choose $x_i,\theta_i\gets\bit^\secp$ and generate $\ket{x_i}_{\theta_i}$.
        \item Run $\tau_i\gets\mathsf{MAC.}\Tag(\sigk,m^*\|x_i\|\theta_i)$ and $\ct_i\gets\mathsf{SKE.}\Enc(\sk,\ket{x_i\|\theta_i}\bra{x_i\|\theta_i}\otimes\tau_i)$.
    \end{enumerate} 
    \item $\cC$ sends $\{\ket{x_i}\bra{x_i}_{\theta_i}\otimes\ct_i\}_{i\in[t]}$ to $\cA$.
    \item $\cA$ sends $\xi_{\regR_1,...,\regR_\ell}$, where $\cA$ can make classical queries to the oracle $\mathcal{O}_{\sk,\sigk}$ and does not query $m^*$. 
    Here $\regR_j$ has two registers $\regA_j$ and $\regC_j$ for each $j\in[\ell]$.
    \item For each $j\in[\ell]$, $\cC$ does the followings.\label{step:ver_hyb3}
    \begin{enumerate}
            \item Run $\mathsf{SKE.}\Dec(\sk,\cdot)$ on the register $\regC_j$ to get the state on the registers $\regA_j$ and $\regM_j$.
            \item \Erase{Measure the first $2\secp$ qubits of the register $\regM_j$ in the computational basis to get the result $x'_j\|\theta'_j$.}
            {\color{red} Set $(x'_j,\theta'_j)\coloneqq(x_{i^*},\theta_{i^*})$.}
            \item Run $\mathsf{MAC.}\Ver(\sigk,m^*\|x'_j\|\theta'_j,\cdot)$ on the remaining qubits of the register $\regM_j$ to get $v_j\in\{\top,\bot\}$. 
            \item Project the register $\regA_j$ onto $\ket{x'_j}_{\theta'_j}$. 
            \item If the projection is successful and $v_j=\top$ \Erase{and $(x'_j,\theta'_j)=(x_{i^*},\theta_{i^*})$}, set $w_j \coloneqq 1$. 
            Otherwise, set $w_j \coloneqq 0$.
    \end{enumerate}
    \item 
    If $\sum^\ell_{j=1}w_j\ge 2$, $\cC$ outputs $\top$. Otherwise, $\cC$ outputs $\bot$. 
    \end{enumerate}

    Now in Hybrid 3 two copies of $|x_{i^*}\rangle_{\theta_{i^*}}$ are generated.
    In order to use it to break the security of the Wiesner money scheme, we have to remove the classical description of BB84 states ``hidden'' in the ciphertexts.
 If we introduce Hybrid 4 as follows, the following lemma is straightforward.
     \begin{lemma}\label{lem:hyb4}
        $\Pr[\top\gets\rm{Hybrid\:4}]\ge\Pr[\top\gets\rm{Hybrid\:3}]$.
    \end{lemma}

    \paragraph{Hybrid 4}
    \begin{enumerate}
    \item The challenger $\cC$ runs $\sk\gets\mathsf{SKE.}\KeyGen(1^{\lambda})$ and $\sigk\gets\mathsf{MAC.}\KeyGen(1^\secp)$.
    \item The adversary $\cA$ sends $m^*$ to $\cC$, where $\cA$ can make classical queries to the oracle $\mathcal{O}_{\sk,\sigk}$ and does not query $m^*$. Here, $\mathcal{O}_{\sk,\sigk}$ takes a bit string $m$ as input and works as follows:
    \begin{enumerate}
        \item Choose $x,\theta\gets\bit^\secp$ and generate $\ket{x}_{\theta}$.
        \item Run $\tau\gets\mathsf{MAC.}\Tag(\sigk,m\|x\|\theta)$ and $\ct\gets\mathsf{SKE.}\Enc(\sk,\ket{x\|\theta}\bra{x\|\theta}\otimes\tau)$.
        \item Output $\ket{x}\bra{x}_{\theta}\otimes\ct$.
    \end{enumerate}
    \item $\cC$ chooses $i^*\gets[t]$. For each $i\in[t]$, $\cC$ does the following.
    \begin{enumerate}
        \item Choose $x_i,\theta_i\gets\bit^\secp$ and generate $\ket{x_i}_{\theta_i}$.
        \item Run $\tau_i\gets\mathsf{MAC.}\Tag(\sigk,m^*\|x_i\|\theta_i)$ and $\ct_i\gets\mathsf{SKE.}\Enc(\sk,\ket{x_i\|\theta_i}\bra{x_i\|\theta_i}\otimes\tau_i)$.
    \end{enumerate} 
    \item $\cC$ sends $\{\ket{x_i}\bra{x_i}_{\theta_i}\otimes\ct_i\}_{i\in[t]}$ to $\cA$.
    \item $\cA$ sends $\xi_{\regR_1,...,\regR_\ell}$, where $\cA$ can make classical queries to the oracle $\mathcal{O}_{\sk,\sigk}$ and does not query $m^*$. 
    Here $\regR_j$ has two registers $\regA_j$ and $\regC_j$ for each $j\in[\ell]$.
    \item For each $j\in[\ell]$, $\cC$ does the followings.
    \begin{enumerate}
            \item \Erase{Run $\mathsf{SKE.}\Dec(\sk,\cdot)$ on the register $\regC_j$ to get the state on the registers $\regA_j$ and $\regM_j$.}
            {\color{red} Does nothing in this step.}
            \item Set $(x'_j,\theta'_j)\coloneqq(x_{i^*},\theta_{i^*})$.
            \item \Erase{Run $\mathsf{MAC.}\Ver(\sigk,m^*\|x'_j\|\theta'_j,\cdot)$ on the remaining qubits of the register $\regM_j$ to get $v_j\in\{\top,\bot\}$.}
            {\color{red} Does nothing in this step.}
            \item Project the register $\regA_j$ onto $\ket{x'_j}_{\theta'_j}$. 
            \item If the projection is successful \Erase{and $v_j=\top$}, set $w_j \coloneqq 1$.
            Otherwise, set $w_j \coloneqq 0$.
    \end{enumerate}
    \item 
    If $\sum^\ell_{j=1}w_j\ge 2$, $\cC$ outputs $\top$. Otherwise, $\cC$ outputs $\bot$. 
    \end{enumerate}

    Now, we are ready to remove the information about the BB84 state from $\ct_i$ by invoking IND-CPA security. We formalize it as Hybrid 5. 

    \paragraph{Hybrid 5}
    \begin{enumerate}
    \item The challenger $\cC$ runs $\sk\gets\mathsf{SKE.}\KeyGen(1^{\lambda})$ and $\sigk\gets\mathsf{MAC.}\KeyGen(1^\secp)$.
    \item The adversary $\cA$ sends $m^*$ to $\cC$, where $\cA$ can make classical queries to the oracle $\mathcal{O}_{\sk,\sigk}$ and does not query $m^*$. Here, $\mathcal{O}_{\sk,\sigk}$ takes a bit string $m$ as input and works as follows:
    \begin{enumerate}
        \item Choose $x,\theta\gets\bit^\secp$ and generate $\ket{x}_{\theta}$.
        \item Run $\tau\gets\mathsf{MAC.}\Tag(\sigk,m\|x\|\theta)$ and $\ct\gets\mathsf{SKE.}\Enc(\sk,\ket{x\|\theta}\bra{x\|\theta}\otimes\tau)$.
        \item Output $\ket{x}\bra{x}_{\theta}\otimes\ct$.
    \end{enumerate}\label{step:test_query_hyb5}
    \item $\cC$ chooses $i^*\gets[t]$. For each $i\in[t]$, $\cC$ does the following.
    \begin{enumerate}
        \item Choose $x_i,\theta_i\gets\bit^\secp$ and generate $\ket{x_i}_{\theta_i}$.
        \item Run $\tau_i\gets\mathsf{MAC.}\Tag(\sigk,m^*\|x_i\|\theta_i)$ \Erase{and $\ct_i\gets\mathsf{SKE.}\Enc(\sk,\ket{x_i\|\theta_i}\bra{x_i\|\theta_i}\otimes\tau_i)$.} {\color{red}Run $\ct_i\gets\mathsf{SKE.}\Enc(\sk,\ket{0...0}\bra{0...0})$.}\label{step:enc_hyb5}
    \end{enumerate} 
    \item $\cC$ sends $\{\ket{x_i}\bra{x_i}_{\theta_i}\otimes\ct_i\}_{i\in[t]}$ to $\cA$.\label{step:send_hyb5}
    \item $\cA$ sends $\xi_{\regR_1,...,\regR_\ell}$, where $\cA$ can make classical queries to the oracle $\mathcal{O}_{\sk,\sigk}$ and does not query $m^*$. 
    Here $\regR_j$ has two registers $\regA_j$ and $\regC_j$ for each $j\in[\ell]$.
    \item For each $j\in[\ell]$, $\cC$ does the followings.
    \begin{enumerate}
            \item Does nothing in this step.
            \item Set $(x'_j,\theta'_j)\coloneqq(x_{i^*},\theta_{i^*})$.
            \item Does nothing in this step.
            \item Project the register $\regA_j$ onto $\ket{x'_j}_{\theta'_j}$. 
            \item If the projection is successful, set $w_j \coloneqq 1$.
            Otherwise, set $w_j \coloneqq 0$.
    \end{enumerate}
    \item 
    If $\sum^\ell_{j=1}w_j\ge 2$, $\cC$ outputs $\top$. Otherwise, $\cC$ outputs $\bot$. 
    \end{enumerate}

    \begin{lemma}\label{lem:hyb5}
        
        $|\Pr[\top\gets\rm{Hybrid\:4}]-\Pr[\top\gets\rm{Hybrid\:5}]|\le\negl(\secp)$.
    \end{lemma}

    \begin{proof}[Proof of \cref{lem:hyb5}]
        
        Note that the difference between Hybrid 4 and Hybrid 5 lies only in the step \ref{step:enc_hyb5}. 
        
        Let us consider the following security game of IND-CPA security between a challenger $\cC'$ and a QPT adversary $\cB$:
        \begin{enumerate}
            \item The challenger $\cC'$ runs $\sk\gets\mathsf{SKE.}\KeyGen(1^\secp)$.
            \item $\cB$ runs $\sigk\gets\mathsf{MAC.}\KeyGen(1^\secp)$.
            \item $\cB$ simulates $\cA$ in Hybrid 4 by querying to $\mathsf{SKE.}\Enc(\sk,\cdot)$ up to the step \ref{step:test_query_hyb5}. $\cB$ gets $m^*$, where $m^*$ is the challenge message that $\cA$ sends to $\cC$ in the step \ref{step:test_query_hyb5}.
            \item $\cB$ chooses $i^*\gets[t]$. For each $i\in[t]$, $\cB$ does the following: $\cB$
            chooses $x_i,\theta_i\gets\bit^\secp$ and prepares the state $\eta^0_i\coloneqq\ket{x_i\|\theta_i}\bra{x_i\|\theta_i}\otimes\tau_i$ 
            by running $\tau_i\gets\mathsf{MAC.}\Tag(\sigk,m^*\|x_i\|\theta_i)$. 
            $\cB$ also prepares the state $\eta^1_i\coloneqq\ket{0...0}\bra{0...0}$.
            \item $\cB$ sends the states $\bigotimes_{i=1}^t\eta^0_i$ and $\bigotimes_{i=1}^t\eta^1_i$ to $\cC'$.
            \item $\cC'$ chooses $b\gets\bit$ and gets $\bigotimes_{i\in[t]}\ct_{i}$ by
            running $\ct_i\gets\mathsf{SKE.}\Enc(\sk,\eta^0_i)$ if $b=0$
            and $\ct_i\gets\mathsf{SKE.}\Enc(\sk,\eta^1_i)$ if $b=1$
            for each $i\in[t]$.
            $\cC'$ sends $\bigotimes_{i\in[t]}\ct_{i}$ to $\cB$.
            \item $\cB$ generates $\bigotimes_{i\in[t]}\ket{x_i}_{\theta_i}$. $\cB$ simulates the interaction between $\cC$ and $\cA$ from the step \ref{step:send_hyb5} of Hybrid 4 to the last step by using $\bigotimes_{i\in[t]}\ct_{i}\otimes\bigotimes_{i\in[t]}\ket{x_i}_{\theta_i}$ and querying to $\mathsf{SKE.}\Enc(\sk,\cdot)$. If $\cC$ outputs $\top$, $\cB$ sends $b'\coloneqq0$ to $\cC'$. Otherwise, $\cB$ sends $b'\coloneqq1$ to $\cC'$.
            \item $\cC'$ outputs $\top$ if $b=b'$. Otherwise, $\cC'$ outputs $\bot$.
        \end{enumerate}
       Let $\Pr[b'\gets\cB|b\gets\cC']$ be the probability that $\cB$ sends $b'\in\bit$ to $\cC'$ when $\cC'$ chooses $b\in\bit$. It is clear that $\Pr[0\gets\cB|0\gets\cC']=\Pr[\top\gets\mathrm{Hybrid\:4}]$ and $\Pr[0\gets\cB|1\gets\cC']=\Pr[\top\gets\mathrm{Hybrid\:5}]$. Therefore,
        if $|\Pr[\top\gets\mathrm{Hybrid\:4}]-\Pr[\top\gets\mathrm{Hybrid\:5}]|\ge\frac{1}{\poly(\secp)}$ for infinitely many $\secp\in\mathbb{N}$,
        $\cB$ breaks the IND-CPA security.
    \end{proof}

    Let us define Hybrid 6 as follows. The following lemma is straightforward.

    \begin{lemma}\label{lem:hyb6}
        $\Pr[\top\gets{\rm{Hybrid\:5}}]=\Pr[\top\gets{\rm{Hybrid\:6}}]$.
    \end{lemma}

    \paragraph{Hybrid 6}
    \begin{enumerate}
    \item The challenger $\cC$ runs $\sk\gets\mathsf{SKE.}\KeyGen(1^{\lambda})$ and $\sigk\gets\mathsf{MAC.}\KeyGen(1^\secp)$.
    \item The adversary $\cA$ sends $m^*$ to $\cC$, where $\cA$ can make classical queries to the oracle $\mathcal{O}_{\sk,\sigk}$ and does not query $m^*$. Here, $\mathcal{O}_{\sk,\sigk}$ takes a bit string $m$ as input and works as follows:
    \begin{enumerate}
        \item Choose $x,\theta\gets\bit^\secp$ and generate $\ket{x}_{\theta}$.
        \item Run $\tau\gets\mathsf{MAC.}\Tag(\sigk,m\|x\|\theta)$ and $\ct\gets\mathsf{SKE.}\Enc(\sk,\ket{x\|\theta}\bra{x\|\theta}\otimes\tau)$.
        \item Output $\ket{x}\bra{x}_{\theta}\otimes\ct$.
    \end{enumerate}
    \item $\cC$ chooses $i^*\gets[t]$. For each $i\in[t]$, $\cC$ does the following.\label{step:enc_hyb6}
    \begin{enumerate}
        \item Choose $x_i,\theta_i\gets\bit^\secp$ and generate $\ket{x_i}_{\theta_i}$.
        \item \Erase{Run $\tau_i\gets\mathsf{MAC.}\Tag(\sigk,m^*\|x_i\|\theta_i)$.} Run $\ct_i\gets\mathsf{SKE.}\Enc(\sk,\ket{0...0}\bra{0...0})$.
    \end{enumerate} 
    \item $\cC$ sends $\{\ket{x_i}\bra{x_i}_{\theta_i}\otimes\ct_i\}_{i\in[t]}$ to $\cA$.
    \item $\cA$ sends $\xi_{\regR_1,...,\regR_\ell}$, where $\cA$ can make classical queries to the oracle $\mathcal{O}_{\sk,\sigk}$ and does not query $m^*$. 
    Here $\regR_j$ has two registers $\regA_j$ and $\regC_j$ for each $j\in[\ell]$.
    \item For each $j\in[\ell]$, $\cC$ does the followings.
    \begin{enumerate}
            \item Does nothing in this step.
            \item Set $(x'_j,\theta'_j)\coloneqq(x_{i^*},\theta_{i^*})$.
            \item Does nothing in this step.
            \item Project the register $\regA_j$ onto $\ket{x'_j}_{\theta'_j}$. 
            \item If the projection is successful, set $w_j \coloneqq 1$.
            Otherwise, set $w_j \coloneqq 0$.
    \end{enumerate}
    \item 
    If $\sum^\ell_{j=1}w_j\ge 2$, $\cC$ outputs $\top$. Otherwise, $\cC$ outputs $\bot$. 
    \end{enumerate}

    Finally, we construct an adversary that breaks the security of the Wiesner money scheme from $\cA$ of Hybrid 6,
    which concludes our proof of the theorem.

    \begin{lemma}\label{lem:hyb6_negl}
        $\Pr[\top\gets\rm{Hybrid\:6}]\le\negl(\secp)$.
    \end{lemma}

    \begin{proof}[Proof of \cref{lem:hyb6_negl}]
        Let us assume that there exist polynomials $t,\ell$ and a QPT $\cA$ adversary such that $\Pr[\top\gets{\rm{Hybrid\:6}}]\ge\frac{1}{\poly(\secp)}$ for infinitely many $\secp\in\mathbb{N}$. From this $\cA$, we can construct a QPT adversary $\cB$ that breaks the security of the Wiesner money scheme as follows:
        \begin{enumerate}
            \item The challenger $\cC'$ chooses $x,\theta\gets\bit^\secp$ and sends $\ket{x}_{\theta}$ to $\cB$.
            \item $\cB$ runs $\sk\gets\mathsf{SKE.}\KeyGen(1^\secp)$ and $\sigk\gets\mathsf{MAC.}\KeyGen(1^\secp)$.
            \item $\cB$ simulates the interaction between the challenger and $\cA$ in Hybrid 6, where, in the step $\ref{step:enc_hyb6}$, $\cB$ chooses $i^*\gets[t]$ and replaces $\ket{x_{i^*}}_{\theta_{i^*}}$ with $\ket{x}_{\theta}$. Then, $\cB$ gets $\xi_{\regR_1,...\regR_\ell}$ from $\cA$. $\cB$ chooses $j_0,j_1\gets[\ell]$ and outputs the register $\regA_{j_0}$ and $\regA_{j_1}$.
        \end{enumerate}
        The probability that $\cB$ wins is
        \begin{align}
            \Pr[\cB\:\rm{wins}]
            \ge\binom{\ell}{2}^{-1}\Pr[\top\gets{\rm{Hybrid\:6}}]
            \ge\frac{2}{\ell(\ell-1)}\frac{1}{\poly(\secp)}.
        \end{align}
        However, this contradicts the security of the Wiesner money scheme, \cref{lem:Wiesner_money}. Therefore, $\Pr[\top\gets\rm{Hybrid\:5}]\le\negl(\secp)$.
    \end{proof}
By combining \cref{lem:hyb1,lem:hyb2,lem:hyb3,lem:hyb4,lem:hyb5,lem:hyb6,lem:hyb6_negl}, we have
$\Pr[\top\gets\mbox{Hybrid 0}]\le\negl(\secp)$, but it contradicts the assumption that 
$\Pr[\top\gets\mbox{Hybrid 0}]\ge\frac{1}{\poly(\secp)}$ for infinitely many $\secp$.
Therefore we have
$\Pr[\top\gets\mbox{Hybrid 0}]\le\negl(\secp)$.
\end{proof}

\ifnum\anonymous=1
\else
\paragraph{Acknowledgments.}
TM is supported by
JST CREST JPMJCR23I3, 
JST Moonshot R\verb|&|D JPMJMS2061-5-1-1, 
JST FOREST, 
MEXT QLEAP, 
the Grant-in Aid for Transformative Research Areas (A) 21H05183,
and 
the Grant-in-Aid for Scientific Research (A) No.22H00522.
\fi

\ifnum\submission=0
\bibliographystyle{alpha} 
\else
\bibliographystyle{splncs04}
\fi
\bibliography{abbrev3,crypto,reference}

\appendix
\section{Proof of \cref{lem:SKE_for_classical_imply_SKE_for_quantum}}
\label{sec:proof_of_cSKE_to_qSKE}
In this section, we give a proof of \cref{lem:SKE_for_classical_imply_SKE_for_quantum}. To show it, we use the following lemma
whose proof is straightforward. (For example, see \cite{mahadev2020classical}.)

\begin{lemma}[Pauli Mixing]\label{lem:one-time_pad}
    Let $\regA$ be an $n$-qubit register. Then, for any state $\rho_{\regA,\regB}$ on the registers $\regA$ and $\regB$,
    \begin{align}
        \frac{1}{4^n}
        \sum_{x,z\in\bit^n}
        ((X^xZ^z)_\regA)\otimes I_{\regB})\rho_{\regA,\regB} ((Z^zX^x)_\regA\otimes I_{\regB})=\frac{I_{\regA}}{2^n}\otimes\rho_{\regB}.
    \end{align}
    Here, $\rho_\regB\coloneqq\Tr_\regA\rho_{\regA,\regB}$ and $\Tr_\regA$ is a partial trace of the register $\regA$.
\end{lemma}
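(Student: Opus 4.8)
The plan is to recognize the left-hand side as the \emph{Pauli twirl} applied to the $\regA$ register and to show that this twirl is the completely depolarizing map. Concretely, I would define the map $\mathcal{T}(M)\coloneqq\frac{1}{4^n}\sum_{x,z\in\bit^n}(X^xZ^z)M(Z^zX^x)$ acting on operators $M$ on $\regA$, noting that $Z^zX^x=(X^xZ^z)^\dagger$ so that $\mathcal{T}$ is genuinely of the twirl form $\sum_P PMP^\dagger$. Once the single-register identity $\mathcal{T}(M)=\frac{\Tr[M]}{2^n}I_\regA$ is established for every operator $M$ on $\regA$, the lemma follows at once: writing $\rho_{\regA,\regB}=\sum_k M_k\otimes N_k$ for any decomposition and using that the Paulis act only on $\regA$, the left-hand side becomes $\sum_k\mathcal{T}(M_k)\otimes N_k=\frac{I_\regA}{2^n}\otimes\sum_k\Tr[M_k]N_k=\frac{I_\regA}{2^n}\otimes\Tr_\regA[\rho_{\regA,\regB}]$, which is exactly the claimed right-hand side.

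To prove the single-register identity I would expand $M$ in the Pauli basis. The $4^n$ operators $\{X^aZ^b\}_{a,b\in\bit^n}$ are orthogonal under the Hilbert--Schmidt inner product, with $\Tr[(X^aZ^b)^\dagger X^{a'}Z^{b'}]=2^n\delta_{a,a'}\delta_{b,b'}$, hence they form a basis and $M=\frac{1}{2^n}\sum_{a,b}\Tr[(X^aZ^b)^\dagger M]\,X^aZ^b$. By linearity it then suffices to compute $\mathcal{T}(X^aZ^b)$. For two Pauli strings $P=X^xZ^z$ and $Q=X^aZ^b$ a short bookkeeping of commutation (pushing $Z^z$ past $X^a$ costs $(-1)^{z\cdot a}$ and pushing $Z^b$ past $X^x$ costs $(-1)^{x\cdot b}$) gives $PQP^\dagger=(-1)^{x\cdot b+z\cdot a}Q$, so $\mathcal{T}(X^aZ^b)=\frac{1}{4^n}\bigl(\sum_{x,z}(-1)^{x\cdot b+z\cdot a}\bigr)X^aZ^b$.

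The crux is the character-sum evaluation $\sum_{x,z\in\bit^n}(-1)^{x\cdot b+z\cdot a}$. For $(a,b)=(0,0)$ this equals $4^n$, giving $\mathcal{T}(I)=I$; for $(a,b)\neq(0,0)$ the map $(x,z)\mapsto x\cdot b+z\cdot a$ is a nonzero $\F_2$-linear functional on $\F_2^{2n}$, so it vanishes on exactly half of its $4^n$ arguments and the sum is $0$ (equivalently, exactly half of all Paulis commute with any fixed nontrivial Pauli). Therefore $\mathcal{T}(X^aZ^b)=\delta_{(a,b),(0,0)}X^aZ^b$, only the identity term of the Pauli expansion of $M$ survives, and $\mathcal{T}(M)=\frac{1}{2^n}\Tr[M]\,I_\regA$. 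The main (and essentially only nonroutine) obstacle is this commutation/character-sum argument; the Pauli-basis orthogonality and the $\regB$-linearity step are entirely standard. As the excerpt already notes, the full argument is straightforward and can be found in, e.g.,~\cite{mahadev2020classical}.
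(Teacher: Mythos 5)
Your proof is correct: the Pauli-basis expansion with Hilbert--Schmidt orthogonality, the commutation identity $PQP^\dagger=(-1)^{x\cdot b+z\cdot a}Q$ for $P=X^xZ^z$, $Q=X^aZ^b$, the character-sum evaluation, and the reduction of the bipartite statement to the single-register twirl via a decomposition $\rho_{\regA,\regB}=\sum_k M_k\otimes N_k$ are all sound. Note that the paper itself contains no proof of this lemma --- it simply declares it straightforward and points to \cite{mahadev2020classical} --- so your argument is precisely the standard twirl computation that the citation stands in for, and there is nothing to contrast it with.
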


\begin{proof}[Proof of \cref{lem:SKE_for_classical_imply_SKE_for_quantum}]
    Let $(\KeyGen,\Enc,\Dec)$ be an IND-CPA-secure SKE scheme
    for classical messages that is secure against QPT adversaries that query the encryption oracle classically. 
    From this, we construct an IND-CPA-secure SKE scheme for quantum messages
    $(\KeyGen',\Enc',\Dec')$ as follows:
    \begin{itemize}
        \item $\KeyGen'(1^\secp)\to\sk':$ Run $\sk\gets\KeyGen(1^\secp)$ and output $\sk'\coloneqq\sk$.
        \item $\Enc'(\sk',\rho)\to\ct':$ Parse $\sk'=\sk$. Let $\rho$ be an $n$-qubit state. It does the following:
        \begin{enumerate}
            \item Choose $x,z\gets\bit^n$ and apply $X^xZ^z$ on $\rho$.
            \item Run $\ct\gets\Enc(\sk,x\|z)$.
            \item Output $\ct'\coloneqq (X^xZ^z\rho Z^zX^x)\otimes\ct$.
        \end{enumerate}
        \item $\Dec'(\sk',\ct')\to\rho:$ Parse $\sk'\coloneqq \sk$. Let $\ct'$ be a state on the register $\regC$. The register $\regC$ consists of two registers $\regM$ and $\regB$.
        If $\ct'$ is honestly generated, $\ct'_\regC=(X^xZ^z\rho Z^zX^x)_\regM\otimes\ct_\regB$. 
        \begin{enumerate}
            \item Run $\Dec(\sk,\cdot)$ on the register $\regB$ to get $x'\|z'$.
            \item Apply $Z^{z'}X^{x'}$ on the register $\regM$ and output the register $\regM$.
        \end{enumerate}
    \end{itemize}
    First, we show the correctness. Fix bit strings $x,z\in\bit^n$ and a polynomial $p$.
    Define the set
    \begin{align}
        S_{x,z,p}\coloneqq\left\{\sk:\Pr[x\|z\gets\Dec(\sk,\ct):\ct\gets\Enc(\sk,x\|z)]\ge1-\frac{1}{p(\secp)}\right\}.
    \end{align}
    From the correctness of $(\KeyGen,\Enc,\Dec)$ and the standard average argument, we have
    \begin{align}
        \Pr[\sk\in S_{x,z,p}:\sk\gets\KeyGen(1^\secp)]\ge1-\negl(\secp).\label{eq:almost_correct}
    \end{align}
    Let us fix $\sk\in S_{x,z,p}$. We define $\Enc'(\sk',\cdot)_{x,z}$ as the following CPTP map:
   \begin{enumerate}
       \item 
    Apply $X^xZ^z$ on the input register $\regM$.
    \item 
   Run $\ct\gets\Enc(\sk,x\|z)$, and set the output $\ct$ on the register $\regB$. 
   \item 
   Output the registers $\regM$ and $\regB$. 
   \end{enumerate} 
   It is clear that $\|\Dec'(\sk',\cdot)\circ\Enc'(\sk',\cdot)_{x,z}-\mathsf{id}_\regM\|_\diamond\le\frac{2}{p(\secp)}$ since $\sk\in S_{x,z,p}$. From this and \cref{eq:almost_correct}, 
   $\Exp_{\sk\gets\KeyGen(1^\secp)}\|\Dec'(\sk',\cdot)\circ\Enc'(\sk',\cdot)_{x,z}-\mathsf{id}_\regM\|_\diamond\le\frac{2}{p(\secp)}+\negl(\secp)$ for all $x,z$.
   Therefore we have
    \begin{align}
       & \Exp_{\sk\gets\KeyGen'(1^\secp)}\|\Dec'(\sk',\cdot)\circ\Enc'(\sk',\cdot)-\mathsf{id}_\regM\|_\diamond\\
       &= \Exp_{\sk\gets\KeyGen'(1^\secp)}\|\Dec'(\sk',\cdot)\circ\Exp_{x,z\gets\bit^n}\Enc'(\sk',\cdot)_{x,z}-\mathsf{id}_\regM\|_\diamond\\
       &\le \Exp_{\sk\gets\KeyGen'(1^\secp)}\Exp_{x,z\gets\bit^{n}}\|\Dec'(\sk',\cdot)\circ\Enc'(\sk',\cdot)_{x,z}-\mathsf{id}_\regM\|_\diamond\\
       & \le\frac{2}{p(\secp)}+\negl(\secp),
    \end{align}
    for any $p$,
    which shows the correctness.
    
     Next, we show the security.
     We define Hybrid 0, which is the original security game of IND-CPA security for quantum messages,
    as follows. 
    
    \paragraph{Hybrid 0}
    \begin{enumerate}
    \item The challenger $\cC$ runs $\sk\gets\KeyGen(1^{\lambda})$.
    \item The adversary $\cA$ sends two $n$-qubit registers $\regM_0$ and $\regM_1$ to $\cC$, where $\cA$ can query the oracle $\mathcal{O}_{\sk}$. Here, $\mathcal{O}_{\sk}$ takes $n$-qubit register $\regM$ as input and works as follows:\label{step:challenge_query}
    \begin{enumerate}
        \item Choose $\alpha,\beta\gets\bit^n$ and apply $X^\alpha Z^\beta$ on the register $\regM$. 
        \item Run $\ct\gets\Enc(\sk,\alpha\|\beta)$.
        \item Output the register $\regM$ and $\ct$.
    \end{enumerate}
    \item $\cC$ chooses $b\gets\bit$ and does the following:
    \begin{enumerate}
        \item  Choose $x,z\gets\bit^n$ and apply $X^x Z^z$ on the register $\regM_b$.
        \item Run $\ct_b\gets\Enc(\sk,x\|z)$.
    \end{enumerate}
    \item  $\cC$ sends the register $\regM_b$ and $\ct_b$ to $\cA$.
    \item $\cA$ sends $b'\in\bit$ to $\cC$, where $\cA$ can query the oracle $\mathcal{O}_{\sk}$.
    \item If $b=b'$, $\cC$ outputs $\top$. Otherwise, $\cC$ outputs $\bot$.
    \end{enumerate}

    For the sake of contradiction, assume that there exist a polynomial $p$ and a QPT adversary $\cA$ such that
   $\Pr[\top\gets\mathrm{Hybrid\:0}]\ge\frac{1}{2}+\frac{1}{p(\secp)}$ for infinitely many $\secp$.
   Define the following Hybrid 1.

    \paragraph{Hybrid 1}
    \begin{enumerate}
    \item The challenger $\cC$ runs $\sk\gets\KeyGen(1^{\lambda})$.
    \item The adversary $\cA$ sends two $n$-qubit registers $\regM_0$ and $\regM_1$ to $\cC$, where $\cA$ can query the oracle $\mathcal{O}_{\sk}$. 
    Here, $\mathcal{O}_{\sk}$ takes $n$-qubit register $\regM$ as input and works as follows:
    \begin{enumerate}
        \item Choose $\alpha,\beta\gets\bit^n$ and apply $X^\alpha Z^\beta$ on the register $\regM$. 
        \item Run $\ct\gets\Enc(\sk,\alpha\|\beta)$.
        \item Output the register $\regM$ and $\ct$.
    \end{enumerate}
    \item $\cC$ chooses $b\gets\bit$ and does the following:\label{step:challenge_query_generate}
    \begin{enumerate}
        \item  Choose $x,z\gets\bit^n$ and apply $X^x Z^z$ on the register $\regM_b$.
        \item \Erase{Run $\ct_b\gets\Enc(\sk,x\|z)$.}
        {\color{red}Run $\ct_b\gets\Enc(\sk,0...0)$.}
    \end{enumerate}
    \item  $\cC$ sends the register $\regM_b$ and $\ct_b$ to $\cA$.\label{step:challenge_query_return}
    \item $\cA$ sends $b'\in\bit$ to $\cC$, where $\cA$ can query the oracle $\mathcal{O}_{\sk}$.\label{step:guess_b}
    \item If $b=b'$, $\cC$ outputs $\top$. Otherwise, $\cC$ outputs $\bot$.
    \end{enumerate}

    \begin{lemma}\label{lem:hyb1_classical_SKE_to_quantum_SKE}
        $|\Pr[\top\gets\rm{Hybrid\:0}]-\Pr[\top\gets\rm{Hybrid\:1}]|\le\negl(\secp)$ for any QPT adversary $\cA$.
    \end{lemma}

    \begin{proof}[Proof of \cref{lem:hyb1_classical_SKE_to_quantum_SKE}]
For the sake of contradiction, let us assume that
there exists a polynomial $p$ and a QPT adversary $\cA$ such that
\begin{align}
|\Pr[\top\gets\mbox{Hybrid 0}]-\Pr[\top\gets\mbox{Hybrid 1}]|\ge\frac{1}{p(\secp)}
\label{CSKE_to_QSKE_assumption}
\end{align}
for infinitely many $\secp$.
        From this $\cA$, we construct a QPT adversary $\cB$ that breaks the SKE scheme for classical messages.
         \begin{enumerate}
         \item The challenger $\cC'$ of the SKE scheme for classical messages runs $\sk\gets\KeyGen(1^\secp)$.
            \item $\cB$ simulates $\cA$ in the Hybrid 0 up to the step \ref{step:challenge_query} and gets two $n$-qubit registers $\regM_0$ and $\regM_1$. 
            Here, when $\cA$ queries $\mathcal{O}_\sk$, $\cB$ simulates it by querying $\Enc(\sk,\cdot)$.
            \item $\cB$ chooses $x,z\gets\bit^n$. $\cB$ sends $x\|z$ and $0...0$ to $\cC'$.
            \item $\cC'$ chooses $b\gets\bit$. If $b=0$, $\cC'$ runs $\ct_{0}\gets\Enc(\sk,x\|z)$. If $b=1$, $\cC'$ runs $\ct_1\gets\Enc(\sk,0...0)$. 
            Then, $\cC'$ sends $\ct_b$ to $\cB$.
            \item $\cB$ chooses $b'\gets\bit$.
            $\cB$ applies $X^xZ^z$ on the register $\regM_{b'}$.
            \item
            $\cB$ simulates $\cA$ from step \ref{step:challenge_query_return} to step \ref{step:guess_b}
            on input the register $\regM_{b'}$ and $\ct_b$. 
            \item $\cB$ gets $b''\in\bit$ from $\cA$, and outputs $1$ if and only if $b''=b'$.
            Otherwise, $\cB$ outputs 0.
        \end{enumerate}
It is easy to see that
    $\Pr[1\gets\cB|b=0]=\Pr[\top\gets\mbox{Hybrid 0}]$ and
    $\Pr[1\gets\cB|b=1]=\Pr[\top\gets\mbox{Hybrid 1}]$.
    Therefore, from the assumption, \cref{CSKE_to_QSKE_assumption}, we have
    \begin{align}
    |\Pr[1\gets\cB|b=0]-\Pr[1\gets\cB|b=1]|\ge\frac{1}{p(\secp)}
    \end{align}
    for infinitely many $\secp$, which breaks the security of the SKE scheme for classical messages.
    \end{proof}

    Let us define the following Hybrid 2.
    From \cref{lem:one-time_pad}, the following is straightforward.
    \begin{lemma}\label{lem:hyb2_classical_SKE_to_quantum_SKE}
        $\Pr[\top\gets\rm{Hybrid\:1}]=\Pr[\top\gets\rm{Hybrid\:2}]$.
    \end{lemma}

    \paragraph{Hybrid 2}

    \begin{enumerate}
    \item The challenger $\cC$ runs $\sk\gets\KeyGen(1^{\lambda})$.
    \item The adversary $\cA$ sends two $n$-qubit registers $\regM_0$ and $\regM_1$ to $\cC$, where $\cA$ can query the oracle $\mathcal{O}_{\sk}$. Here, $\mathcal{O}_{\sk}$ takes $n$-qubit register $\regM$ as input and works as follows:
    \begin{enumerate}
        \item Choose $\alpha,\beta\gets\bit^n$ and apply $X^\alpha Z^\beta$ on the register $\regM$. 
        \item Run $\ct\gets\Enc(\sk,\alpha\|\beta)$.
        \item Output the register $\regM$ and $\ct$.
    \end{enumerate}
    \item $\cC$ chooses $b\gets\bit$ and does the following:\label{step:challenge_query_hyb2}
    \begin{enumerate}
        \item  \Erase{Choose $x,z\gets\bit^n$ and apply $X^x Z^z$ on the register $\regM_b$.}
        {\color{red}Set $\regM_b$ to the maximally mixed state $I/2^n$.}\label{step:one-time_pad}
        \item Run $\ct_b\gets\Enc(\sk,0...0)$.
    \end{enumerate}
    \item  $\cC$ sends the register $\regM_b$ and $\ct_b$ to $\cA$.
    \item $\cA$ sends $b'\in\bit$ to $\cC$, where $\cA$ can query the oracle $\mathcal{O}_{\sk}$.
    \item If $b=b'$, $\cC$ outputs $\top$. Otherwise, $\cC$ outputs $\bot$.
    \end{enumerate}

   It is clear that $\Pr[\top\gets\mbox{Hybrid 2}]=\frac{1}{2}$. Therefore,
   from \cref{lem:hyb1_classical_SKE_to_quantum_SKE,lem:hyb2_classical_SKE_to_quantum_SKE}, 
   we conclude that $\Pr[\top\gets\mbox{Hybrid 0}]\le\frac{1}{2}+\negl(\secp)$, but it contradicts the assumption.
\end{proof}
\section{Private-Key Quantum Money}\label{sec:QM_from_UPSG}

In this section, we recall the definition of private-key money and prove \cref{coro:QM_from_UPSG}.

\begin{definition}[Private-Key Quantum Money Schemes \cite{C:JiLiuSon18,STOC:AarChr12}]\label{def:private-key_money}
A private-key quantum money scheme is a set of algorithms 
$(\KeyGen,\Mint,\Ver)$ such that 
\begin{itemize}
\item
$\KeyGen(1^\secp)\to k:$
It is a QPT algorithm that, on input the security parameter $\secp$,
outputs a classical secret key $k$.
\item
$\Mint(k)\to \$_k:$
It is a QPT algorithm that, on input
$k$, outputs an $m$-qubit quantum state $\$_k$.
\item
$\Ver(k,\rho)\to\top/\bot:$
It is a QPT algorithm that, on input $k$ and a quantum
state $\rho$, outputs $\top/\bot$.
\end{itemize}
We require the following correctness and security.

\paragraph{Correctness:}
\begin{eqnarray*}
\Pr[\top\gets\Ver(k,\$_k):k\gets\KeyGen(1^\secp),\$_k\gets\Mint(k)]
\ge1-\negl(\secp).
\end{eqnarray*}

\paragraph{Security:}
For any QPT adversary $\cA$ and any polynomial $t$,
\begin{eqnarray*}
\Pr[\mathsf{Count}(k,\xi)\ge t+1
:k\gets\KeyGen(1^\secp),\$_k^{\otimes t}\gets\Mint(k)^{\otimes t},\xi\gets\cA(\$_k^{\otimes t})]
\le\negl(\secp),
\end{eqnarray*}
where $\xi$ is a quantum state on $\ell$ registers, $\regR_1,...,\regR_\ell$, each of which is of $m$ qubits, and $\$^{\otimes t}_k\gets\Mint(k)^{\otimes t}$
means that the $\Mint$ algorithm is run $t$ times. Here, $\mathsf{Count}(k,\xi)$ is the following QPT algorithm: for each $j\in[\ell]$, it takes the state on $\regR_j$ as input, and runs $\Ver(k,\cdot)$ to get $\top$ or $\bot$. Then, it outputs the total number of $\top$.
\end{definition}

\begin{corollary}\label{coro:QM_from_unclonable_MAC}
    If MACs with unclonable tags exist, then private-key quantum money schemes exist.
\end{corollary}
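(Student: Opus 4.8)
The plan is to observe that a private-key quantum money scheme is nothing more than a MAC with unclonable tags in which the message is frozen to a single canonical value and the adversary is stripped of its oracle access; the corollary should therefore follow from an essentially syntactic reduction. Concretely, given a MAC with unclonable tags $(\mathsf{MAC.}\KeyGen,\mathsf{MAC.}\Tag,\mathsf{MAC.}\Ver)$, I would define the money scheme $(\KeyGen,\Mint,\Ver)$ by setting $\KeyGen(1^\secp)$ to run $\sigk\gets\mathsf{MAC.}\KeyGen(1^\secp)$ and output $k\coloneqq\sigk$; letting $\Mint(k)$ parse $k=\sigk$ and output the banknote $\$_k\coloneqq\tau$ where $\tau\gets\mathsf{MAC.}\Tag(\sigk,0^\secp)$; and letting $\Ver(k,\rho)$ run $\mathsf{MAC.}\Ver(\sigk,0^\secp,\rho)$ and return its output. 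Correctness of the money scheme is then immediate from the correctness of the MAC, since $\Ver(k,\Mint(k))$ is exactly $\mathsf{MAC.}\Ver(\sigk,0^\secp,\mathsf{MAC.}\Tag(\sigk,0^\secp))$.

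For security I would argue by a reduction. Suppose toward a contradiction that there exist a QPT adversary $\cA$ and a polynomial $t$ with $\Pr[\mathsf{Count}(k,\xi)\ge t+1]\ge 1/\poly(\secp)$ for infinitely many $\secp$ in the money game, where $\$_k^{\otimes t}\gets\Mint(k)^{\otimes t}$ and $\xi\gets\cA(\$_k^{\otimes t})$. I would build a MAC unclonability adversary $\cB$ that first outputs $m^*\coloneqq 0^\secp$ without ever querying the $\mathsf{Tag}$ oracle, then receives the challenge tags $\tau^{\otimes t}$ with $\tau\gets\mathsf{MAC.}\Tag(\sigk,0^\secp)$, runs $\xi\gets\cA(\tau^{\otimes t})$, and forwards $\xi$. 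The crux is that the two counting procedures coincide: the money banknote $\$_k$ is literally the tag $\tau$ on $0^\secp$, and the money verification $\Ver(k,\cdot)$ is literally $\mathsf{MAC.}\Ver(\sigk,0^\secp,\cdot)$, so $\mathsf{Count}(k,\xi)$ in the money game equals $\mathsf{Count}(\sigk,m^*,\xi)$ in the unclonability game register-by-register. Hence $\cB$ wins with exactly $\cA$'s non-negligible probability, contradicting unclonability.

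The only thing worth checking carefully—and really the sole place any subtlety could hide—is that the definitional interfaces match, namely that the money game is a genuine \emph{restriction} of the unclonability game rather than an incomparable variant. The unclonability adversary may choose $m^*$ adaptively and may make classical queries to $\mathsf{Tag}(\sigk,\cdot)$ both before committing to $m^*$ and after receiving the challenge tags, whereas the money adversary has neither capability. Since these are powers that $\cB$ is free to decline, the reduction only ever uses a strictly weaker adversary, so there is no genuine obstacle; I expect the entire argument to go through with no loss in the reduction and no auxiliary lemmas.
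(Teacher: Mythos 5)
Your proposal is correct and takes essentially the same approach as the paper: the paper gives the identical construction ($\KeyGen\coloneqq\mathsf{MAC.}\KeyGen$, $\Mint$ tags the fixed message $0\ldots0$, $\Ver$ verifies against that message) and simply asserts that correctness and security ``are clear'' from those of the MAC. Your explicit reduction---declining the oracle and the adaptive choice of $m^*$ so that the money game becomes a restriction of the unclonability game---is exactly the argument the paper leaves implicit.
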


\begin{proof}[Proof of \cref{coro:QM_from_unclonable_MAC}]
    Let $(\mathsf{MAC.}\KeyGen,\mathsf{MAC.}\Tag,\mathsf{MAC.}\Ver)$ be a MAC with unclonable tags. From this, we construct a private-key quantum money scheme as follows:
    \begin{itemize}
        \item $\mathsf{QM.}\KeyGen(1^\secp)\to k:$ Run $\sigk\gets\mathsf{MAC.}\KeyGen(1^\secp)$ and output $k\coloneqq\sigk$.
        \item $\mathsf{QM.}\Mint(k)\to\$_k:$ Parse $k=\sigk$. Run $\tau\gets\mathsf{MAC.}\Tag(\sigk,0...0)$ and output $\$_k\coloneqq\tau$.
        \item $\mathsf{QM.}\Ver(k,\rho)\to\top/\bot:$ Parse $k=\sigk$. Run $v\gets\mathsf{MAC.}\Ver(\sigk,0...0,\rho)$ and output it.
    \end{itemize}
    The correctness and the security are clear from those of MAC with unclonable tags.
\end{proof}

From \cref{coro:unclonable_MAC_from_UPSG} and \cref{coro:QM_from_unclonable_MAC}, UPSGs imply private-key quantum money scheme, which means \cref{coro:QM_from_UPSG}. Namely, if UPSGs exist, then private-key quantum money schemes exist.

\end{document}